\theoremstyle{plain}
    \newtheorem{theorem}{Theorem}
    \newtheorem{lemma}{Lemma}
    \newtheorem{proposition}{Proposition}
    \newtheorem{claim}{Claim}
\theoremstyle{definition}
    \newtheorem{definition}{Definition}
\crefname{definition}{definition}{definitions}
\crefname{theorem}{theorem}{theorems}
\crefname{corollary}{corollary}{corollaries}
\crefname{lemma}{lemma}{lemmas}
\crefname{proposition}{proposition}{propositions}
\crefname{claim}{claim}{claims}
\crefname{remark}{remark}{remarks}
\newcommand{\R}{\mathbb{R}}
\newcommand{\Z}{\mathbb{Z}}
\newcommand{\cC}{\mathcal{C}}
\newcommand{\cX}{\mathcal{X}}
\newcommand{\cP}{\mathcal{P}}
\newcommand{\cE}{\mathcal{E}}
\newcommand{\cG}{\mathcal{G}}
\newcommand{\Par}[1]{\left( #1 \right)}
\newcommand{\Abs}[1]{\left| #1 \right|}
\newcommand{\CBra}[1]{\left\{ #1 \right\}}
\newcommand{\define}[1]{\emph{#1}}
\newcommand{\graph}{G}
\newcommand{\vertexSet}{V}
\newcommand{\edgeSet}{E}
\newcommand{\graphVE}{\graph=(\vertexSet,\edgeSet)}
\newcommand{\shortEdge}[2]{#1#2}
\newcommand{\symdif}{\triangle}
\newcommand{\matching}{M}
\newcommand{\capacity}{c}
\newcommand{\weight}{w}
\newcommand{\graphwc}{(\graph,\weight,\capacity)}
\newcommand{\graphwunit}{(\graph,\weight,1)}
\newcommand{\graphunitc}{(\graph,1,\capacity)}
\newcommand{\gwcm}{[\graphwc,\matching]}
\newcommand{\oddcyclesx}[1]{\mathscr{C}_{#1}}
\newcommand{\matchedx}[1]{\mathscr{M}_{#1}}
\newcommand{\nufcG}{\nu_f^\capacity(\graph)}
\newcommand{\nucG}{\nu^\capacity(\graph)}
\newcommand{\taufcG}{\tau_f^\capacity(\graph)}
\newcommand{\polyhedron}{\mathcal{P}}
\newcommand{\Pfcm}{\mathcal{P}_{\text{FCM}}}
\newcommand{\Pfpcm}{\mathcal{P}_{\text{FPCM}}}
\newcommand{\Po}{\text{P}\xspace}
\newcommand{\NP}{\text{NP}\xspace}
\newcommand\restr[2]{\left.#1\right|_{#2}}
\def\keywordname{{\bf Keywords:}}
\providecommand{\keywords}[1]{\def\and{{\textperiodcentered} }
\par\addvspace\baselineskip
\noindent\keywordname\enspace\ignorespaces#1}
\def\fundingname{{\bf Funding:}}
\providecommand{\funding}[1]{\def\and{{\textperiodcentered} }
\par\addvspace\baselineskip
\noindent\fundingname\enspace\ignorespaces#1}
\begin{document}

\title{Capacitated Network Bargaining Games: Stability and Structure}
\author[1]{Laura Sanità}
\author[2]{Lucy Verberk\(^{(\text{\Letter})}\)}
\affil[1]{
    Bocconi University of Milan, Italy. \protect \\
    {\normalsize\tt{laura.sanita@unibocconi.it}}
}
\affil[2]{
    Eindhoven University of Technology, Netherlands. \protect \\
    {\normalsize\tt{l.p.a.verberk@tue.nl}}
}
\date{}
\maketitle

\begin{abstract}
    Capacitated network bargaining games are popular combinatorial games that involve the structure of matchings in graphs. We show that it is always possible to stabilize unit-weight instances of this problem (that is, ensure that they admit a stable outcome) via capacity-reduction and edge-removal operations, without decreasing the total value that the players can get.

    Furthermore, for general weighted instances, we show that computing a minimum amount of vertex-capacity to reduce to make an instance stable is a polynomial-time solvable problem. We then exploit this to give approximation results for the NP-hard problem of stabilizing a graph via edge-removal operations.
    
    Our work extends and generalizes previous results in the literature that dealt with a unit-capacity version of the problem, using several new arguments. In particular, while previous results mainly used combinatorial techniques, we here rely on polyhedral arguments and, more specifically, on the notion of \emph{circuits} of a polytope.

    \funding{The authors are supported by the NWO VIDI grant VI.Vidi.193.087.}
    
    \keywords{Matching \and Network bargaining \and Circuits.}
\end{abstract}

\section{Introduction}

This paper focuses on \emph{stabilization} problems for \emph{capacitated} Network Bargaining Games (NBG).

NBG were introduced by Kleinberg and Tardos~\cite{Kleinberg2008Balanced} as an extension of Nash's 2-player bargaining solution~\cite{Nash1950}. Here we are given a graph \(\graphVE\) with edge weights \(w\in\R_{\geq0}^\edgeSet\), where the vertices represent players and the edges the deals that the players can make. Each player can enter in \emph{at most one} deal with one of their neighbors, and together they agree on how to split the value of the corresponding edge. An outcome is then associated with a matching \(\matching\) of \(\graph\) representing the deals, and an allocation vector \(y\in\R_{\geq0}^\vertexSet\) with \(y_{u}+y_{v}=\weight_{uv}\) if \(\shortEdge{u}{v}\in\matching\), and \(y_v=0\) if \(v\) is not matched. An outcome \((\matching,y)\) is called \emph{stable} if no player has an incentive to break the current agreements to enter in a deal with a different neighbor, which formally translates in the condition  \(y_u \geq \max_{v: uv \in \edgeSet\setminus\matching} \CBra{\weight_{uv} - y_v}\), for each player \(u\).

Bateni et al~\cite{Bateni2010Cooperative} introduced a more realistic \emph{capacitated} setting, where players are allowed to enter in more than one deal. This generalization can be modeled by adding vertex capacities \(\capacity\in\Z_{\geq0}^\vertexSet\) to the input. An outcome to the capacitated NBG is now given by a \(\capacity\)-matching \(\matching\) and a vector \(a\in\R_{\geq0}^{2\edgeSet}\) that satisfies \(a_{uv}+a_{vu}=\weight_{uv}\) if \(\shortEdge{u}{v}\in\matching\), and \(a_{uv}=a_{vu}=0\) otherwise. The concept of stable outcome naturally generalizes (see \cite{Bateni2010Cooperative} for a formal definition).

A key property of (capacitated) NBG is that instances admitting a stable outcome have a very nice \emph {LP characterization}, as shown by~\cite{Kleinberg2008Balanced,Bateni2010Cooperative}. Specifically, given an instance \(\graphwc\), there exists a stable outcome for the corresponding game on \(\graph\) if and only if the value of a maximum-weight \(\capacity\)-matching  \(\nucG\) equals the value of a maximum-weight \emph{fractional} \(\capacity\)-matching \(\nufcG\), defined as
\begin{equation}\label{eq:fracmatching}
    \nufcG := \max \CBra{ \weight^\top x : \sum_{u: uv \in E} x_{uv}\leq\capacity_v \; \forall v\in\vertexSet, \; 0\leq x\leq1 }.
\end{equation}
In other words, instances admitting stable outcomes are the ones for which the standard LP relaxation of the maximum-weight \(\capacity\)-matching problem has an optimal integral solution. A graph \(\graph\) for which \(\nucG=\nufcG\) is called \emph{stable}.

It can be easily seen from this characterization that there are instances which do not admit stable solutions, for example odd cycles. For this reason, 
several researchers in the past years investigated so-called \emph{stabilization} problems, where the goal is to turn a given graph into a stable one, via some graph operations (see \cite{Biro10,Konemann12,Bock2015Finding,CHANDRASEKARAN201956,Ahmadian2018Stabilizing,Chandrasekaran2017,Ito2017Efficient,Koh2020Stabilizing,gerstbrein2022stabilization}). 
Two common ways to stabilize graphs are edge- and vertex-removal operations, which have a natural NBG interpretation: they  ensure a stable outcome by blocking interactions and blocking players, respectively. 

In the unit-capacity setting, stabilization via edge- and vertex-removal operations for NBG is quite well understood. In particular, (i) stabilizing a graph by removing a minimum number of edges (called the \emph{edge-stabilizer problem}) is NP-hard, and even hard-to-approximate with a constant factor, but admits an \(O(\Delta)\)-approximation, with \(\Delta\) being the maximum degree of a vertex in the graph~\cite{Gottschalk2018Personal,Bock2015Finding,Koh2020Stabilizing}. Differently, (ii) stabilizing a graph by removing a minimum number of vertices (called the \emph{vertex-stabilizer problem}) is solvable in polynomial-time~\cite{Ahmadian2018Stabilizing,Ito2017Efficient,Koh2020Stabilizing}. Moreover, (iii) both problems for unit-weight graphs exhibit a very nice structural property: optimal solution do not decrease the cardinality of a maximum matching, meaning that there is always a way to stabilize the graph without decreasing the total value that the players can get~\cite{Bock2015Finding,Ahmadian2018Stabilizing}. 
The study of stabilization problems in the capacitated setting was recently initiated in~\cite{gerstbrein2022stabilization}, where (among other things) it is shown that the vertex-stabilizer problem becomes NP-hard. 
However, to complete the picture regarding stabilization problems in capacitated NBG, a few questions remain:
\emph{
\begin{enumerate}
    \item[(i)] Can one efficiently stabilize graphs by reducing the capacity of vertices, instead of removing them?
    \item[(ii)] Are there non-trivial approximation algorithms for the edge-stabilizer problem in capacitated NBG instances?
    \item[(iii)] For capacitated NBG with unit-weights, can one still hope to stabilize a graph without decreasing the total value that the players can get?
\end{enumerate}
}

\paragraph*{Our results and techniques.} In this paper, we give an affirmative answer to the above three questions. 

Our work started by realizing that the hardness proved in~\cite{gerstbrein2022stabilization} for the vertex-stabilizer problem crucially relies on the fact that vertices have different capacity values and get removed \emph{completely}. Another (still natural) way to generalize the vertex-stabilizer problem from the unit-capacity setting, is to \emph{decrease} the capacity of the vertices (that is, reduce the number of potential deals that a player can be engaged in). We prove that computing a minimum amount of vertex capacity to reduce to make an instance stable (which we call the \emph{capacity-stabilizer problem}) is a polynomial-time solvable problem (results in \Cref{sec: capacity-stabilizer}).
Interestingly, our solution preserves the total value that the players can get  up to a factor that is asymptotically best possible, and it reduces the capacity of each vertex by at most one. This has a nice network bargaining interpretation: there is always an optimal and at the same time \emph{fair} way to stabilize instances, as no player will have its capacity dramatically reduced compared to others. 

In addition to answering question (i) in a positive way, the algorithm to solve the capacity-stabilizer problem becomes instrumental when dealing with edge-stabilizers. In fact, we crucially exploit it to extend the \(O(\Delta)\)-approximation for NBG to the capacitated settings, answering question (ii)
(results in \Cref{sec: edge-stabilizer}).

Eventually, we manage to show that the key structural property of minimum stabilizers for instances with unit-weights still holds. Namely, that optimal solutions to both the capacity-stabilizer and the edge-stabilizer problem do not decrease the total value that the players can get (results in \Cref{sec: capacity-stabilizer,sec: edge-stabilizer}).

Besides extending the previous known results about NBG to the capacitated setting, what we find interesting are the new arguments we rely on in our proofs. Previous results mainly used combinatorial techniques for both the structural result in (iii) and the main algorithmic ingredient behind (i) and (ii) (which is the fact that the minimum number of fractional odd cycles in the support of an optimal fractional solution to \eqref{eq:fracmatching} provides a \emph{lower bound} on the size of a stabilizer). To extend such results in the presence of capacities, one might try to reduce a capacitated instance to a unit-capacity instance by introducing copies of vertices that have a capacity higher than \(1\), and then use the already existing stabilization results for unit-capacity instances. We show in \Cref{appx: reduction} that such reductions cannot be applied in a straightforward manner. We here instead rely on (new) polyhedral arguments and, in particular, on the notion of \emph{circuits} of a polytope, which are a key concept in optimization (see \Cref{sec: prelim,sec: polyhedral tools}).
Interestingly, our polyhedral view point allows us not only to deal more broadly with capacitated instances, but also to simplify some of the cardinal arguments previously used in the literature: in particular, our lower bound proof is (more general and) much simpler than the corresponding one in~\cite{Koh2020Stabilizing} for the unit-capacity setting.

\section{Preliminaries}\label{sec: prelim}

\paragraph*{Problem definition.}
    Let \(S\) be a multi-set of vertices \(V\). We denote by \(\graph[\capacity_S -1]\) the graph \(\graph\) with the capacity of all vertices in \(S\) reduced by one. Note that if a vertex appears e.g.\ twice in \(S\), its capacity is reduced by two. We define a \emph{capacity-stabilizer} as a multi-set \(S\) of vertices, such that \(\graph[\capacity_S -1]\) is stable. Since \(S\) is a multi-set, the amount of capacity reduced equals the size of \(S\). We define an \emph{edge-stabilizer} as a set \(F \subseteq \edgeSet\), such that \(\graph \setminus F := (\vertexSet,\edgeSet \setminus F)\) is stable.

    \smallskip
    \noindent
    \textbf{Capacity-stabilizer problem:} given \(\graphVE\) with edge weights \(\weight\in\R_{\geq0}^\edgeSet\) and vertex capacities \(\capacity \in \Z^\vertexSet_{\geq 0}\), find a capacity-stabilizer of minimum cardinality.

    \smallskip
    \noindent
    \textbf{Edge-stabilizer problem:} given \(\graphVE\) with edge weights \(\weight\in\R_{\geq0}^\edgeSet\) and vertex capacities \(\capacity \in \Z^\vertexSet_{\geq 0}\), find an edge-stabilizer of minimum cardinality.

\paragraph*{Notation.}
    We use \(\graphwc\) to refer to a graph with edge weights and vertex capacities, and \(\gwcm\) to refer to a graph with a given \(\capacity\)-matching \(\matching\). We say that \(\gwcm\) is stable if \(\graph\) is stable and \(\matching\) is a maximum-weight \(\capacity\)-matching in \(\graph\).
    Let \(S\) be a multi-set of vertices. We use \(\capacity^{S-1}\) to refer to the capacities in \(\graph[\capacity_S - 1]\). For any \(s \in S\), with \(S \setminus s\) we mean removing \(s\) just once from \(S\).
    For a vertex \(v\), we let \(\delta(v)\) be the set of edges of \(\graph\) incident to it. 
    For \(F\subseteq\edgeSet\), we denote by \(d_v^F\) the degree of \(v\) in \(\graph\) with respect to the edges in \(F\). 
    For \(f \in \R^\edgeSet\), we define \(f(F) := \sum_{e\in F} f_e\).
    Given a \(\capacity\)-matching \(\matching\), we say that \(v\in\vertexSet\) is \emph{exposed} if \(d_v^\matching=0\), \emph{unsaturated} if \(d_v^\matching<\capacity_v\) and \emph{saturated} if \(d_v^\matching=\capacity_v\).
    We also use these terms for fractional \(\capacity\)-matchings \(x\), e.g., \(v\) is saturated if \(x(\delta(v)) = \capacity_v\).
    We let \(\symdif\) denote the symmetric difference operator.
    We denote a (\(uv\)-)walk \(W\) by listing its edges and endpoints sequentially, i.e., by \(W=(u; e_1,\ldots,e_k;v)\). We say a walk is closed if \(u=v\). A trail is a walk in which edges do not repeat. A path is a trail in which internal vertices do not repeat. A cycle is a path which starts and ends at the same vertex. 
    Note that the edge set of a walk can be a multi-set.
    For a walk \(W\) (possibly a multi-set) and a \(\capacity\)-matching \(\matching\) (not a multi-set), we define \(W \setminus \matching := \{e \in W : e \notin \matching\}\) and \(W \cap \matching := \{e \in W : e \in \matching\}\). For example, let \(W = \{e_1,e_2,e_2\}\) and \(\matching = \{e_2\}\), then \(W \setminus \matching = \{e_1\}\) and \(W \cap \matching = \{e_2,e_2\}\).

\paragraph*{Duality.}
    The dual of \eqref{eq:fracmatching} is given by
    \begin{equation} \label{eq:fractional_vertex_cover}
        \taufcG := \min \CBra{ \capacity^\top y + 1^\top z : y_u + y_v + z_{uv} \geq \weight_{uv} \ \forall\shortEdge{u}{v}\in\edgeSet, y\geq0, z\geq0 }.
    \end{equation}
    We refer to feasible solutions \((y,z)\) of \eqref{eq:fractional_vertex_cover} as \emph{fractional vertex covers}. By LP theory, we have \(\nucG\leq\nufcG=\taufcG\). The complementary slackness conditions of \(\nufcG\) and \(\taufcG\) are
    \begin{equation}
        \label{eq: compl slack}
        (x_{uv} = 0 \lor y_u + y_v + z_{uv} = \weight_{uv}) 
        \land 
        (y_v = 0 \lor x(\delta(v)) = \capacity_v) 
        \land 
        (z_{uv} = 0 \lor x_{uv} = 1).
    \end{equation}

\subsection{Augmenting Structures}
    \begin{definition}
        We say that a walk \(W\) is \(\matching\)-alternating (w.r.t. a \(\capacity\)-matching \(\matching\)) if its edges are alternating between \(\matching\) and \(\edgeSet \setminus \matching\). We say \(W\) is \(\matching\)-augmenting if it is \(\matching\)-alternating and \(\weight(W\setminus\matching)>\weight(W\cap\matching)\). An \(\matching\)-alternating \(u v\)-walk \(W\) is \emph{proper} if \(W\symdif\matching\) is a \(\capacity\)-matching.
    \end{definition}

    \begin{definition}
        Given an \(\matching\)-alternating walk \(W=(u;e_1,\ldots,\allowbreak e_k;v)\) and an \(\varepsilon>0\), the \emph{\(\varepsilon\)-augmentation} of \(W\) is the vector \(x^{\matching/W}(\varepsilon)\in\R^\edgeSet\) given by
        \begin{equation*}
            x_e^{\matching/W}(\varepsilon) = \begin{cases} 
                1-\kappa(e)\varepsilon & \text{ if } e\in\matching, \\
                \kappa(e)\varepsilon & \text{ if } e\notin\matching,
            \end{cases}
        \end{equation*}
        where \(\kappa(e)=\Abs{\CBra{i\in[k] : e_i=e}}\). We say that \(W\) is \emph{feasible} if there exists an \(\varepsilon>0\) such that the corresponding \(\varepsilon\)-augmentation of \(W\) is a fractional \(\capacity\)-matching.
    \end{definition}

    To get a better understanding of proper and feasible, we state for different kinds of walks what proper and feasible actually come down to.
    \emph{(i)~Non-closed walks:} an \(\matching\)-alternating walk \(W=(u; e_1,\ldots,e_k;v)\), where \(u\neq v\), is proper and feasible if either \(e_1\in\matching\) or \(d_u^\matching\leq\capacity_u-1\), and if either \(e_k\in\matching\) or \(d_v^\matching\leq\capacity_v-1\).
    \emph{(ii)~Even-length closed walks:} an \(\matching\)-alternating walk \(W=(v; e_1,\ldots,e_k;v)\), such that \(k\) is even, is always proper and feasible.
    \emph{(iii)~Odd-length closed walks:} an \(\matching\)-alternating walk \(W=(v; e_1,\ldots,e_k;v)\), such that \(k\) is odd, is proper if either \(e_1,e_k\in\matching\) or \(d_v^\matching\leq\capacity_v-2\), and feasible if either \(e_1,e_k\in\matching\) or \(d_v^\matching\leq\capacity_v-1\).
    
    \begin{theorem}
        [theorem 1 in \cite{gerstbrein2022stabilization}]
        \label{thm: M max iff no augmenting trail}
        A \(\capacity\)-matching \(\matching\) in \(\graphwc\) is maximum-weight if and only if \(\graph\) does not contain a proper \(\matching\)-augmenting trail.
    \end{theorem}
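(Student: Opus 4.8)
The plan is to prove the two implications separately; the forward one is a formality and the converse carries the weight.

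For the forward direction, suppose $W$ is a proper $\matching$-augmenting trail. By properness $\matching' := W \symdif \matching$ is a $\capacity$-matching, and since $W$ is $\matching$-alternating and has no repeated edges,
\[
\weight(\matching') \;=\; \weight(\matching) - \weight(W \cap \matching) + \weight(W \setminus \matching) \;>\; \weight(\matching),
\]
where the inequality is the definition of augmenting. Hence $\matching$ is not maximum-weight.

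For the converse I would argue by contraposition: assume $\weight(\matching^*) > \weight(\matching)$ for some $\capacity$-matching $\matching^*$. The heart of the argument is the following decomposition claim: \emph{the edge set $\matching \symdif \matching^*$ can be partitioned into $\matching$-alternating trails $W_1,\dots,W_t$, each of which is proper with respect to $\matching$.} Granting this, using that the $W_i$ are edge-disjoint trails and that $\matching^* = \matching \symdif (W_1 \cup \dots \cup W_t)$, one obtains
\[
\weight(\matching^*) - \weight(\matching) \;=\; \sum_{i=1}^{t} \bigl(\weight(W_i \setminus \matching) - \weight(W_i \cap \matching)\bigr) \;>\; 0,
\]
so some $W_i$ satisfies $\weight(W_i \setminus \matching) > \weight(W_i \cap \matching)$; this $W_i$ is a proper $\matching$-augmenting trail, contradicting the assumed absence of one.

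To prove the decomposition claim I would colour the edges of $\matching \setminus \matching^*$ red and those of $\matching^* \setminus \matching$ blue, and treat each connected component of $(\vertexSet, \matching \symdif \matching^*)$ separately. At every vertex $v$ with red-degree $r_v$ and blue-degree $b_v$, pair $\min(r_v,b_v)$ red edges with as many blue edges arbitrarily, leaving the $|r_v - b_v|$ majority-colour edges unpaired; note $\sum_v |r_v - b_v| \equiv \sum_v (r_v - b_v) \equiv 0 \pmod 2$ since $\sum_v r_v$ and $\sum_v b_v$ are both even. Tracing along these pairings partitions the edges into open trails, each running between two unpaired edge-ends, and closed walks; since every pairing joins a red edge to a blue edge, each closed walk alternates colours and therefore has even length, so it is automatically proper by case (ii) of the discussion preceding the theorem. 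For an open trail starting at a vertex $u$, its first edge is unpaired at $u$ and hence has the majority colour there: if that colour is red the edge lies in $\matching$, so case (i) gives properness at $u$; if it is blue then $b_u > r_u$, which gives $d_u^\matching < d_u^{\matching^*} \le \capacity_u$, so $u$ is unsaturated in $\matching$ and case (i) again applies. The symmetric statement holds at the other endpoint, and if the trail closes up at a single vertex through two unpaired edges the same count places us in case (ii) or (iii); this proves the claim.

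The main obstacle is exactly the decomposition claim: one must set up the vertex pairings, run the tracing carefully enough to see that it really produces trails (using the parity of the number of unpaired edge-ends, and the fact that any edge escaping every open trail sits in a pairing cycle and hence in an even closed walk), and then match the endpoint degree bookkeeping against the three cases of the proper/feasible classification. Everything after the claim — the averaging over the weights — is one line.
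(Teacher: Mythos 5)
The paper does not prove this theorem; it is imported verbatim as Theorem~1 from~\cite{gerstbrein2022stabilization}, so there is no in-paper argument to compare against. Judged on its own, your proof is correct and self-contained, and it takes what is essentially the standard Berge-style route for $\capacity$-matchings: the forward direction is immediate from the definitions, and the converse hinges on a transition-system decomposition of $\matching \symdif \matching^*$ at each vertex, pairing as many red edge-ends (from $\matching \setminus \matching^*$) with blue edge-ends (from $\matching^* \setminus \matching$) as possible so that every unpaired edge-end is majority-colour. Tracing this pairing produces edge-disjoint alternating trails; the weight difference telescopes over the pieces, so one of them is augmenting. Your endpoint bookkeeping is exactly what is needed: unpaired first/last edges are majority colour, the red case lands in $\matching$, and the blue case gives $b_u > r_u$, hence $d_u^\matching < d_u^{\matching^*} \le \capacity_u$, matching case (i) of the paper's classification; cycles of the transition system alternate colours and hence have even length, landing in case (ii). One cosmetic slip: when an open trail closes up at a single vertex through two unpaired ends, both ends are majority colour and the alternation then forces \emph{odd} length, so only case (iii) can occur there (and the unsaturation margin of $2$ is exactly what you compute), not ``case (ii) or (iii)'' as written. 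This does not affect the argument. The decomposition claim is the real content, and you identify it and prove it correctly.
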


    \begin{theorem}
        [theorem 3 in \cite{gerstbrein2022stabilization}]
        \label{thm: feasible augmenting walk not stable}
        \(\gwcm\) is stable if and only if \(\graph\) does not contain a feasible \(\matching\)-augmenting walk.
    \end{theorem}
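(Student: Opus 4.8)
The plan is to first recast stability as a single LP identity, then prove the two directions, the harder one by an Eulerian-type decomposition of a half-integral optimum of \eqref{eq:fracmatching} into $\matching$-alternating walks. Since $\matching$ is a $\capacity$-matching we always have $\weight(\matching)\le\nucG\le\nufcG$, so unfolding the definitions shows that $\gwcm$ is stable if and only if $\weight(\matching)=\nufcG$; it thus suffices to prove that $\weight(\matching)=\nufcG$ if and only if $\graph$ has no feasible $\matching$-augmenting walk. The ``only if'' direction is immediate: if $W$ were a feasible $\matching$-augmenting walk and $\varepsilon>0$ were such that $x:=x^{\matching/W}(\varepsilon)$ is a fractional $\capacity$-matching, then $\weight^\top x=\weight(\matching)+\varepsilon\Par{\weight(W\setminus\matching)-\weight(W\cap\matching)}>\weight(\matching)$, and since $\weight^\top x\le\nufcG$ this would force $\weight(\matching)<\nufcG$, a contradiction.

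For the ``if'' direction I would argue the contrapositive. Assume $\weight(\matching)<\nufcG$ and fix an optimal solution $x^*$ of \eqref{eq:fracmatching} with $2x^*$ integral; such a point exists because the LP optimum is attained at a vertex of the polytope in \eqref{eq:fracmatching}, and all its vertices are half-integral (the classical consequence of each edge meeting exactly two vertices). Put $d:=x^*-\indicator_\matching$, so $\weight^\top d=\nufcG-\weight(\matching)>0$; every edge $e$ with $d_e\neq 0$ is either in $\matching$ (then $d_e<0$; call it \emph{negative}) or outside $\matching$ (then $d_e>0$; call it \emph{positive}), and on any such edge $\Abs{2d_e}\in\{1,2\}$. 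Form the multigraph $H$ on $\vertexSet$ with $\Abs{2d_e}$ parallel copies of each such $e$, labelled accordingly, and at every vertex $v$ pair up each of the $\min(p_v,n_v)$ ends carrying the minority label with an end carrying the majority label, where $p_v$ and $n_v$ count the positive and negative ends of $H$ at $v$; the remaining $\Abs{p_v-n_v}$ majority ends stay unpaired (``loose''). Following this transition system decomposes $H$ into walks $W_1,\dots,W_r$, each $\matching$-alternating because every pairing joins a positive (non-$\matching$) end to a negative ($\matching$) end; a non-closed $W_i$ runs between two loose ends at distinct vertices, while a closed $W_i$ either uses no loose end or uses exactly two loose ends at its base vertex.

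It then remains to check each $W_i$ against the three-case feasibility characterization. A closed walk that uses no loose end alternates positive/negative around a cycle, hence has even length and is feasible; a closed walk using two loose ends at $v$ uses two ends of the same (majority) label, hence has odd length, and either both of its end-edges are $\matching$-edges (feasible) or the majority label at $v$ is positive. The key identity is $p_v-n_v=2\Par{x^*(\delta(v))-d_v^\matching}$, from which, whenever $v$ carries a positive loose end, $d_v^\matching<x^*(\delta(v))\le\capacity_v$ and hence $d_v^\matching\le\capacity_v-1$ --- exactly the slack feasibility asks of an odd closed walk or an open walk that reaches $v$ via a non-$\matching$ edge, while at a negative loose end the walk reaches $v$ via an $\matching$-edge and feasibility is automatic. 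Finally, summing traversal vectors gives $\sum_i\mu^{W_i}=2d$, where $\mu^{W_i}_e=\kappa_i(e)$ if $e\notin\matching$ and $\mu^{W_i}_e=-\kappa_i(e)$ if $e\in\matching$, and therefore $\sum_i\Par{\weight(W_i\setminus\matching)-\weight(W_i\cap\matching)}=2\,\weight^\top d>0$; so some $W_i$ is $\matching$-augmenting, and being also feasible, it is the walk we wanted.

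The main obstacle is the decomposition step: making precise that the minority-to-majority transition system yields $\matching$-alternating walks, ruling out odd closed walks except when the base vertex has the required capacity slack, and above all the bookkeeping identity $p_v-n_v=2\Par{x^*(\delta(v))-d_v^\matching}$ that turns ``$v$ carries a loose positive end'' into ``$v$ has capacity slack'', so that the three-case feasibility characterization applies to every walk produced. A minor secondary point is justifying half-integrality of the relevant optimum, which can either be cited or recovered from the usual tight-constraint argument.
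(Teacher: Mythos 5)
The paper does not actually prove this theorem; it is imported verbatim as Theorem~3 of~\cite{gerstbrein2022stabilization}, so there is no ``paper's own proof'' to compare against. Your argument is a self-contained proof and, as far as I can tell, it is correct.

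The reduction to ``$\gwcm$ is stable $\iff$ $\weight(\matching)=\nufcG$'' is right, and the ``only if'' direction is exactly the intended one-liner via the $\varepsilon$-augmentation. For the ``if'' direction, the key points all check out: half-integrality of a basic optimum $x^*$ (which the paper even supplies as \Cref{thm: basic fractional c-matching}, so you can cite it rather than reprove it); the sign bookkeeping on $d=x^*-\indicator_\matching$; the transition system pairing minority ends to majority ends, which is precisely what makes the resulting trails in $H$ $\matching$-alternating and forbids two loose ends of opposite sign at the same vertex; the parity argument that closed trails with two loose ends at the base vertex have odd length and end-edges of one common type; and, crucially, the identity $p_v-n_v = 2\bigl(x^*(\delta(v))-d_v^\matching\bigr)$, which indeed turns ``$v$ has a positive loose end'' into $d_v^\matching\le\capacity_v-1$ and hence into the slack that the paper's three-case feasibility characterization requires. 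The counting $\sum_i \mu^{W_i}=2d$ is also correct once one remembers that $W_i\cap\matching$ and $W_i\setminus\matching$ are multisets, so $\weight^\top\mu^{W_i}=\weight(W_i\setminus\matching)-\weight(W_i\cap\matching)$. The only dependencies you lean on without proof are the half-integrality of vertices of $\Pfcm$ and the three-case description of ``feasible''; both appear in the paper (the former with a proof in the appendix, the latter asserted without proof), so that is fine, though if the latter were not available you would need a short direct check of $x^{\matching/W_i}(\varepsilon)\in\Pfcm$, which your degree-balancing observation at interior vertices essentially already gives. Whether this matches the proof in~\cite{gerstbrein2022stabilization} I cannot tell from the present paper, but the Eulerian-type decomposition of a half-integral witness into alternating walks is the natural route and your writeup of it is sound.
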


\subsection{Basic Fractional c-Matchings}
    The polytope of fractional \(\capacity\)-matchings in \(\graph\) is \(\Pfcm(\graph)\), formally defined as
    \begin{equation*}
        \label{eq: FCM polytope}
        \Pfcm(\graph) := \CBra{x\in\R^\edgeSet : x(\delta(v))\leq\capacity_v \ \forall v\in\vertexSet, 0\leq x\leq1}.
    \end{equation*}
    We write \(\Pfcm\) if \(G\) is clear from the context, or irrelevant.
    We refer to the vertices of \(\Pfcm\) as \define{basic} fractional \(\capacity\)-matchings.
    The next result is well known, see e.g.\ Theorem 21 of \cite{Appa2006bidirected} for half-integrality, but we provide a proof in \Cref{appx: basic f-c-matching} for completeness.
    
    \begin{restatable}{theorem}{basicfraccmatching}
        \label{thm: basic fractional c-matching}
        A fractional \(\capacity\)-matching \(x\) is basic if and only if its components are equal to \(0\), \(\frac{1}{2}\) or \(1\), and the edges with \(x_e=\frac{1}{2}\) induce vertex-disjoint odd cycles with saturated vertices.
    \end{restatable}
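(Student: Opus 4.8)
The plan is to use the standard description of the vertices of a polytope: a point $x\in\Pfcm$ is basic if and only if there is no $d\in\R^\edgeSet$ with $d\neq0$ such that both $x+\varepsilon d\in\Pfcm$ and $x-\varepsilon d\in\Pfcm$ for all sufficiently small $\varepsilon>0$. Unwinding the constraints of $\Pfcm$, such a direction $d$ exists if and only if there is a nonzero $d$ with $d_e=0$ on every edge with $x_e\in\{0,1\}$ and $d(\delta(v))=0$ at every saturated vertex $v$ (these being exactly the inequalities tight at $x$). So, writing $H$ for the subgraph of $\graph$ on edge set $\CBra{e\in\edgeSet : 0<x_e<1}$, the property ``$x$ is basic'' is equivalent to ``the only vector $d$ supported on $E(H)$ with $d(\delta(v))=0$ at every saturated vertex of $H$ is $d=0$'', and this condition splits over the connected components of $H$.

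For the ``if'' direction, I would start from an $x$ of the claimed form. Any candidate direction $d$ must vanish on all integral edges, so it is supported on the $\tfrac12$-edges; by hypothesis these form vertex-disjoint odd cycles all of whose vertices are saturated, so $d(\delta(v))=0$ is imposed at every vertex of each such cycle. Going around one odd cycle with edges $f_1,\dots,f_{2k+1}$, the relations $d_{f_i}+d_{f_{i+1}}=0$ give $d_{f_{i+1}}=-d_{f_i}$, and closing the (odd) cycle forces $d_{f_1}=-d_{f_1}$; hence $d=0$ on that cycle, so $d=0$ and $x$ is basic.

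For the ``only if'' direction, assume $x$ is basic and fix a component $K$ of $H$. Basicness says that the incidence matrix of $K$ restricted to its saturated vertices has trivial kernel, so $\Abs{E(K)}$ is at most the number of saturated vertices of $K$. The first and main step is to show that every vertex of $H$ is saturated: if $K$ contained an unsaturated vertex, then $\Abs{E(K)}\le\Abs{V(K)}-1$ and $K$ would be a tree; but a tree with an edge has at least two leaves, and a leaf $\ell$ of $H$ has a single incident edge, which is fractional, so $x(\delta(\ell))$ is non-integral and hence strictly below the integer $\capacity_\ell$ — i.e.\ $\ell$ is unsaturated — and then the $\pm1$-alternating vector on a leaf-to-leaf path of $K$ is a nonzero feasible direction (it has $d(\delta(w))=0$ at every internal vertex of that path and off the path, so the only vertices where it can be nonzero are the two unsaturated leaves), contradicting basicness. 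Once every vertex of $H$ is saturated, the full incidence matrix of $K$ has trivial kernel; since that matrix has a nontrivial kernel when $K$ is bipartite (its rank is then $\Abs{V(K)}-1$), $K$ must be non-bipartite, hence contains a cycle, forcing $\Abs{E(K)}=\Abs{V(K)}$, i.e.\ $K$ is unicyclic with an odd cycle. If $K$ were not exactly that odd cycle it would have a leaf, again unsaturated, a contradiction; so every component of $H$ is an odd cycle with all vertices saturated. Finally, for a vertex $v$ on such a cycle $C$, its only fractional edges are its two $C$-edges $f,f'$, so $x_f+x_{f'}=\capacity_v-x(\delta(v)\setminus\{f,f'\})\in\Z\cap(0,2)=\{1\}$; propagating $x_{f_i}+x_{f_{i+1}}=1$ around the odd cycle forces $x_e=\tfrac12$ on every edge of $C$, which gives the claimed structure.

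I expect the main obstacle to be precisely that first step — ruling out unsaturated vertices in the fractional support — since this is where one must combine the combinatorics of $H$ with the integrality of the capacities; the device that makes it work is choosing a leaf-to-leaf \emph{path} (rather than an arbitrary alternating walk, which could revisit a vertex and destroy the cancellation $d(\delta(v))=0$ there). Everything else is bookkeeping once the vertex-of-a-polytope characterization and the rank of the incidence matrix of a graph are in hand.
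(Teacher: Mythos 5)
Your proposal is correct, and it takes a genuinely different route from the paper. The paper's forward direction proceeds combinatorially through the structure of the fractional support: it fixes a spanning tree of a fractional component $H$, rules out even cycles and inclusion-wise maximal paths via the circuits $\cC_1$ and $\cC_3$, shows there is a unique non-tree edge via $\cC_5$, rules out an attached pendant path via $\cC_4$, and finally excludes unsaturated cycle vertices via $\cC_2$. Your argument instead works directly with the standard linear-algebraic description of a basic point: the tight constraints at $x$ reduce, on each fractional component $K$, to the incidence matrix of $K$ restricted to its saturated vertices having trivial kernel, so $\Abs{E(K)}$ is bounded by the number of saturated vertices. You then use the rank of the incidence matrix of a connected graph ($\Abs{V}-1$ if bipartite, $\Abs{V}$ otherwise) together with the observation that a leaf of $H$ is forced to be unsaturated by integrality of $\capacity$, to conclude $K$ is non-bipartite unicyclic with no pendant trees, i.e.\ an odd cycle with all vertices saturated, and you recover the exact value $\tfrac12$ by propagating $x_f + x_{f'} = 1$ around the odd cycle. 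This trades the paper's step-by-step elimination of circuit types for a counting argument via rank, which is arguably more compact and relies only on a standard fact about incidence matrices. For the converse, the paper observes $x$ is the unique maximizer of a suitable objective, while you verify the trivial-kernel condition directly by propagating $d_{f_i} + d_{f_{i+1}} = 0$ around each odd cycle; both are correct, with the paper's being a one-liner and yours being the same alternating-sign computation used elsewhere in the argument. One small stylistic caveat: when you say ``that matrix has a nontrivial kernel when $K$ is bipartite,'' this is true in context because you have already ruled out $K$ being a tree (trees have leaves, hence unsaturated vertices), but you might want to make that dependence explicit since the incidence matrix of a bipartite tree does have trivial kernel.
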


    We partition the support of a basic fractional \(\capacity\)-matching \(x\) into the odd cycles induced by \(x_e=\frac{1}{2}\)-edges: \(\oddcyclesx{x}=\CBra{C_1,\ldots,C_q}\) (later referred to as fractional odd cycles), and matched edges: \(\matchedx{x}=\CBra{e\in\edgeSet: x_e=1}\).

    \begin{definition}
        \emph{Alternate rounding \(C = \Par{v;e_1, \ldots, e_{2k+1};v} \in \oddcyclesx{x}\) exposing \(v\)} means replacing \(x_e\) by \(\hat{x}_e=0\) for all \(e\in\CBra{e_1,e_3,\ldots,e_{2k+1}}\) and by \(\hat{x}_e=1\) for all \(e\in\CBra{e_2,e_4,\ldots,e_{2k}}\). Similarly, we define \define{alternate rounding \(C\in\oddcyclesx{x}\) covering \(v\)}.
    \end{definition}
    
    Let \(\cX\) be the set of basic maximum-weight fractional \(\capacity\)-matchings in \(\graph\). Define \(\gamma(\graph):=\min_{x\in\cX} \Abs{\oddcyclesx{x}}\), as the minimum number of fractional odd cycles in the support of any basic maximum-weight fractional \(\capacity\)-matching in \(\graph\). As already noted by \cite{Koh2020Stabilizing} for the unit-capacity case, we have the following.
    \begin{proposition}
        \label{prop: stable iff gammaG=0}
        A graph \(\graphwc\) is stable if and only if \(\gamma(\graph)=0\).
    \end{proposition}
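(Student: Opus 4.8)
The plan is to combine the structural description of basic fractional \(\capacity\)-matchings in \Cref{thm: basic fractional c-matching} with the elementary observation that any integral feasible point of \(\Pfcm\) is automatically a vertex of \(\Pfcm\) (since \(\Pfcm\subseteq[0,1]^\edgeSet\) and the \(0/1\)-points are exactly the vertices of the cube, so such a point cannot be a proper convex combination of two distinct points of \(\Pfcm\)). Before starting, I would note that \(\cX\neq\emptyset\): \(\Pfcm\) is a nonempty bounded polytope, so the LP \eqref{eq:fracmatching} attains its optimum \(\nufcG\) at a vertex, i.e. at a basic fractional \(\capacity\)-matching. Hence \(\gamma(\graph)\) is well-defined and the statement makes sense.

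For the ``if'' direction I would assume \(\gamma(\graph)=0\) and pick \(x\in\cX\) with \(\oddcyclesx{x}=\emptyset\). By \Cref{thm: basic fractional c-matching}, the only non-integral entries a basic fractional \(\capacity\)-matching can have are the \(\tfrac12\)-entries, and those induce precisely the odd cycles in \(\oddcyclesx{x}\); since there are none, \(x\) is integral, hence it is the indicator vector of a \(\capacity\)-matching with \(\weight^\top x=\nufcG\). This gives \(\nucG\geq\nufcG\), and together with \(\nucG\leq\nufcG\) (LP duality, cf.\ the Duality paragraph) we conclude \(\nucG=\nufcG\), i.e.\ \(\graph\) is stable.

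For the ``only if'' direction I would assume \(\graph\) is stable, so \(\nucG=\nufcG\), and take a maximum-weight \(\capacity\)-matching \(\matching\). Its indicator vector lies in \(\Pfcm\) and has weight \(\nucG=\nufcG\), so it is a maximum-weight fractional \(\capacity\)-matching; by the observation above it is a vertex of \(\Pfcm\), hence a basic maximum-weight fractional \(\capacity\)-matching, i.e.\ an element of \(\cX\). Since it has no \(\tfrac12\)-entries, the set of fractional odd cycles in its support is empty, so \(\gamma(\graph)=0\).

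I do not expect a genuine obstacle here; the only point requiring a little care is justifying that the integral optimum \(\matching\) is in fact a \emph{basic} fractional \(\capacity\)-matching (so that it is eligible to witness \(\gamma(\graph)=0\)), which is the cube-vertex argument above, and the clean translation ``no \(\tfrac12\)-edges \(\iff\) integral'' supplied by \Cref{thm: basic fractional c-matching}. The remainder is just LP-duality bookkeeping around \(\nucG\leq\nufcG\).
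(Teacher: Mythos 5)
Your proof is correct. The paper does not actually supply a proof of this proposition (it is stated as a known fact, citing the unit-capacity version in the literature), so there is nothing to compare against, but the argument you give is the standard one that the paper is implicitly relying on: both directions follow from \Cref{thm: basic fractional c-matching} together with the observation that an integral point of \(\Pfcm\) is automatically a vertex of \(\Pfcm\), plus the inequality \(\nucG\leq\nufcG\). You are also right to flag the one subtle step, namely that in the ``only if'' direction the integral optimum must be shown to be \emph{basic} before it can witness \(\gamma(\graph)=0\); your cube-vertex argument handles this cleanly.
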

    
    \cite{Koh2020Stabilizing} proposes an algorithm to obtain a maximum-weight basic fractional matching with minimum number of fractional odd cycles (\(\Abs{\oddcyclesx{x}} = \gamma(\graph)\)). Their result can be generalized to \(\capacity\)-matchings. 
    We just state the result here, all details can be found in \Cref{appx: min nr odd cycles}.
    \begin{restatable}{theorem}{minnroddcycles}
        \label{thm: min nr odd cycles}
        A basic maximum-weight fractional \(\capacity\)-matching \(x\) with \(\Abs{\oddcyclesx{x}} = \gamma(\graph)\) can be computed in polynomial time.
    \end{restatable}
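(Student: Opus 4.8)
The plan is to lift the iterative algorithm of Koh and Sanità~\cite{Koh2020Stabilizing} for the unit-capacity case to \(\capacity\)-matchings, using \Cref{thm: basic fractional c-matching} (the half-integral structure of basic fractional \(\capacity\)-matchings) together with the augmenting-structure characterizations \Cref{thm: M max iff no augmenting trail,thm: feasible augmenting walk not stable}. As a first step I would compute \emph{some} basic maximum-weight fractional \(\capacity\)-matching \(x\) by solving the LP defining \(\nufcG\) and taking an optimal vertex; by \Cref{thm: basic fractional c-matching} it is half-integral, its \(\tfrac12\)-edges forming the vertex-disjoint saturated odd cycles in \(\oddcyclesx{x}\) and its \(1\)-edges forming \(\matchedx{x}\). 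I would also fix, once and for all, an optimal fractional vertex cover \((y^*,z^*)\); by the complementary slackness conditions \eqref{eq: compl slack}, every edge of \(\supp{x}\) is \emph{tight} (\(y^*_u+y^*_v+z^*_{uv}=\weight_{uv}\)), every edge with \(z^*_{uv}>0\) has \(x\)-value \(1\), and every vertex with \(y^*_v>0\) is saturated.

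The core is a \emph{local move} that, starting from such an \(x\), produces another basic maximum-weight fractional \(\capacity\)-matching with strictly fewer fractional odd cycles. Call an \(\matchedx{x}\)-alternating trail \(W\) \emph{admissible} if all its edges are tight, none of its edges lies on a fractional odd cycle (so \(W\) stays inside \(\matchedx{x}\cup(\edgeSet\setminus\supp{x})\)), and one of the following holds: (a) \(W\) joins a vertex of some \(C_i\in\oddcyclesx{x}\) to a vertex of a different \(C_j\in\oddcyclesx{x}\); or (b) \(W\) joins a vertex of some \(C_i\in\oddcyclesx{x}\) to a vertex \(v\) with \(x(\delta(v))<\capacity_v\) (the degenerate empty trail being allowed, i.e.\ \(C_i\) already has a vertex of \(y^*\)-value \(0\)). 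We may assume \(W\) starts and ends at its cycle-endpoints with non-matched edges, as otherwise a simpler move already applies. Given an admissible \(W\), I would alternate-round \(C_i\) (and, in case (a), \(C_j\)) so as to expose the endpoint(s) of \(W\), and then \(\varepsilon\)-augment along \(W\) with \(\varepsilon=1\); because all edges involved are tight and the roundings expose exactly the endpoints, a short telescoping computation shows the total weight is unchanged, so the result \(x'\) is again maximum-weight; because \(W\) avoids \(\tfrac12\)-edges the outcome is again half-integral, and because the exposed cycle vertices are re-saturated by the flip of the first/last edge of \(W\) while no other vertex changes its usage, \(x'\) satisfies the conditions of \Cref{thm: basic fractional c-matching}, hence is basic; and \(\oddcyclesx{x'}=\oddcyclesx{x}\setminus\{C_i,C_j\}\) in case (a), resp.\ \(\oddcyclesx{x}\setminus\{C_i\}\) in case (b), with no new fractional odd cycle appearing. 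I would iterate this until no admissible trail exists.

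It remains to show that when no admissible trail exists we have \(\Abs{\oddcyclesx{x}}=\gamma(\graph)\); this is the heart of the argument, and I would prove it by comparison with a witness \(x^*\in\cX\) realizing \(\Abs{\oddcyclesx{x^*}}=\gamma(\graph)\). Both \(x\) and \(x^*\) being optimal for \(\nufcG\), the difference \(x-x^*\) vanishes on non-tight edges and on edges with \(z^*_{uv}>0\), and satisfies \((x-x^*)(\delta(v))=0\) at every \(v\) with \(y^*_v>0\); decomposing \(x-x^*\) (after clearing halves, an integral object supported on tight edges, locally conservative at every vertex of positive \(y^*\)) into \(\matchedx{x}\)-alternating trails and closed trails, a counting comparison between the fractional odd cycles carried by \(x\) and those carried by \(x^*\) along this decomposition must expose a trail of type (a) or (b), contradicting inadmissibility. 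For the running time: each move lowers \(\Abs{\oddcyclesx{x}}\le\Abs{\vertexSet}\) by at least one, so there are at most \(\Abs{\vertexSet}\) iterations, and a single move --- or a certificate that none exists --- is found in polynomial time by contracting the fractional odd cycles and running a standard alternating-path/trail search in the tight subgraph obtained by deleting the \(\tfrac12\)-edges.

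I expect the main obstacle to be the verification inside the second paragraph. In the unit-capacity setting every vertex is either exposed or saturated, the relevant walks are simple paths, and the augmentation is the classical one; with capacities \(\ge 2\) the endpoint conditions separating \emph{proper} from \emph{feasible} alternating walks genuinely differ between non-closed and odd-length closed walks (cf.\ the case list preceding \Cref{thm: M max iff no augmenting trail}), an admissible \(W\) may revisit vertices --- including vertices of other fractional odd cycles --- and may be a genuine multi-set of edges, and the alternate-rounding/augmentation bookkeeping must be redone so that \(x'\) still satisfies \Cref{thm: basic fractional c-matching} exactly (half-integral, \emph{vertex-disjoint} saturated odd cycles) while the weight stays at \(\nufcG\). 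Getting every one of these conditions to hold simultaneously --- already the delicate part of~\cite{Koh2020Stabilizing} --- is where essentially all the work lies, and it has to be carried out with the capacitated augmenting-structure toolkit rather than the unit-capacity one; taking \(W\) to be a \emph{shortest} admissible trail is, as in~\cite{Koh2020Stabilizing}, what tames the vertex-revisiting and keeps the case analysis finite.
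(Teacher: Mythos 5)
Your proposal takes a genuinely different route from the paper. The paper does \emph{not} redo the Koh--Sanità combinatorics in the capacitated setting. Instead, it uses the Schrijver-style gadget reduction (the ``third reduction'' of \Cref{appx: reduction}): replace each edge \(e=uv\) by a length-three path \(u\,e_v\,e_u\,v\), all three new edges of weight \(\weight_e\), then split each original vertex \(v\) into \(\capacity_v\) copies. This produces a unit-capacity auxiliary graph \(\hat{\graph}\); the unit-capacity algorithm of \cite{Koh2020Stabilizing} is then run as a black box to obtain a basic maximum-weight fractional matching \(\hat{x}\) of \(\hat{\graph}\) with \(\Abs{\oddcyclesx{\hat{x}}}=\gamma(\hat{\graph})\); after a local normalization of \(\hat{x}\) on the edge gadgets (so that each gadget falls into one of three canonical configurations), \(\hat{x}\) translates back to a basic maximum-weight fractional \(\capacity\)-matching \(x\) of \(\graph\), and the paper shows \(\nufcG=\nu_f(\hat{\graph})-\weight(\edgeSet)\) and \(\gamma(\graph)=\gamma(\hat{\graph})\). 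The entire combinatorial difficulty you anticipate --- multi-set trails, revisited vertices, the proper/feasible bookkeeping for capacitated endpoints --- is sidestepped, because all of it happens inside the already-proved unit-capacity algorithm. It is worth emphasizing that this is precisely the theorem for which the paper's reduction \emph{does} work, even though \Cref{appx: reduction} shows the same gadgets fail to transport vertex- and edge-stabilizers; that is why the paper uses the reduction here and polyhedral arguments elsewhere.

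As for your direct approach: it is plausible in outline, and conceptually it mirrors what Koh--Sanità do for \(\capacity\equiv 1\), but as you yourself note, the two load-bearing steps are not actually carried out. Concretely: (i) the claim that alternate-rounding the endpoint cycle(s) and then \(\varepsilon\)-augmenting along an admissible \(W\) with \(\varepsilon=1\) produces a vector \(x'\) that is again half-integral, maximum-weight, \emph{and basic} requires a nontrivial case analysis when \(W\) is a genuine multi-set that revisits high-capacity vertices (possibly on other fractional odd cycles, where saturation must be preserved after the flip), and the ``shortest admissible trail'' heuristic you invoke is exactly the device whose correctness needs proof; (ii) the termination argument --- decomposing \(2(x-x^*)\) into alternating trails supported on tight edges and extracting an admissible trail of type (a) or (b) --- is stated as something that ``must'' happen, but in the capacitated setting the decomposition of an integer circulation-like difference into trails that are \emph{alternating with respect to \(\matchedx{x}\)} and whose endpoints land on cycles or unsaturated vertices is not automatic, and is precisely where the unit-capacity proof does real work. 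Your writeup is honest that ``essentially all the work lies'' in these steps, so this is best characterized as a promising but incomplete sketch of a harder, more direct proof, whereas the paper opts for the shorter indirect route through the unit-capacity algorithm.
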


\subsection{Circuits of the Fractional c-Matching Polytope}
    Let \(e\) be an edge of a polyhedron \(\cP = \CBra{x\in\R^n: Ax=b, Bx\leq d}\), where \(A\) and \(B\) are integral matrices, and \(b\) and \(d\) are rational vectors. The \define{edge direction} of \(e\) is \(v-w\) for any two distinct points \(v\) and \(w\) on \(e\). The \define{circuits} of a polyhedron (described by \(A\) and \(B\)) are all potential edge directions that can appear for any choice of rational \(b\) and \(d\) \cite[theorem 1.8]{Finhold2014Circuit}. Let \(\cC(\cP)\) denote the set of circuits of \(\cP\) with co-prime integer components.
    
    For a characterization of the circuits of the  fractional \(\capacity\)-matching polytope we rely on \cite{Loera2019Pivot}, which defined five classes of graphs \((\cE_1,\cE_2,\cE_3,\cE_4,\cE_5)\), listed below.
    \begin{enumerate}[(i)]
        \item Let \(\cE_1\) denote the set of all subgraphs \(F\subseteq\graph\) such that \(F\) is an even cycle.
        \item Let \(\cE_2\) denote the set of all subgraphs \(F\subseteq\graph\) such that \(F\) is an odd cycle.
        \item Let \(\cE_3\) denote the set of all subgraphs \(F\subseteq\graph\) such that \(F\) is a path.
        \item Let \(\cE_4\) denote the set of all subgraphs \(F\subseteq\graph\) such that \(F=C\cup P\), where \(C\) is an odd cycle, and \(P\) is a non-empty path that intersects \(C\) only at one endpoint.
        \item Let \(\cE_5\) denote the set of all subgraphs \(F\subseteq\graph\) such that \(F=C_1\cup P\cup C_2\), where \(C_1\) and \(C_2\) are odd cycles, and \(P\) is a path satisfying the following: if \(P\) is non-empty, then \(C_1\) and \(C_2\) are vertex-disjoint and \(P\) intersects each \(C_i\) exactly at its endpoints; if \(P\) is empty then \(C_1\) and \(C_2\) intersect at only one vertex.
    \end{enumerate}
    A set of circuits can be associated to the subgraphs in these classes, by defining:
    %%%%% comment/uncomment parts to make the equation parts that are on the same page align
    % \begin{gather*}
    \begin{equation*}
    \begin{array}{llclc}
        \cC_1 = \bigcup_{F\in\cE_1}\left\{g\in\CBra{-1,0,1}^\edgeSet: \right.
            & g(e)\neq0         & \quad & \text{iff } e\in\edgeSet(F) \\
            & g(\delta(v))=0    & \quad & \forall v\in\vertexSet(F) 
        & \left.\vphantom{\CBra{1}^E}\right\},\\
    % \end{array}%\displaybreak[0]\\
    % \end{equation*}
    % \begin{equation*}
    % \begin{array}{llclc}
        \cC_2 = \bigcup_{F\in\cE_2}\left\{g\in\CBra{-1,0,1}^\edgeSet: \right.
            & g(e)\neq0         & \quad & \text{iff } e\in\edgeSet(F) \\
            & g(\delta(w))\neq0 & \quad & \text{for one } w\in\vertexSet(F) \\
            & g(\delta(v))=0    & \quad & \forall v\in\vertexSet(F)\setminus\CBra{w}
        & \left.\vphantom{\CBra{1}^E}\right\},\\
    % \end{array}%\displaybreak[0]\\
    % \end{equation*}
    % \begin{equation*}
    % \begin{array}{llclc}
        \cC_3 = \bigcup_{F\in\cE_3}\left\{g\in\CBra{-1,0,1}^\edgeSet: \right.
            & g(e)\neq0         & \quad & \text{iff } e\in\edgeSet(F) \\
            & g(\delta(v))=0    & \quad & \forall v: \Abs{\delta(v) \cap \edgeSet(F)}=2
        & \left.\vphantom{\CBra{1}^E}\right\},\\
    % \end{array}%\displaybreak[0]\\
    % \end{equation*}
    % \begin{equation*}
    % \begin{array}{llclc}
        \cC_4 = \bigcup_{F=(C\cup P)\in\cE_4}\left\{g\in\Z^\edgeSet: \right.
            & g(e)\neq0         & \quad & \text{iff } e\in\edgeSet(F) \\
            & g(\delta(v))=0    & \quad & \forall v: \Abs{\delta(v) \cap \edgeSet(F)}\geq2 \\
            & g(e)\in\CBra{-1,1}& \quad & \forall e\in\edgeSet(C) \\
            & g(e)\in\CBra{-2,2}& \quad & \forall e\in\edgeSet(P)
        & \left.\vphantom{\Z^E}\right\},\\
    % \end{array}%\displaybreak[0]\\
    % \end{equation*}
    % \begin{equation*}
    % \begin{array}{llclc}
        \cC_5 = \bigcup_{F=(C_1\cup P\cup C_2)\in\cE_5}\left\{g\in\Z^\edgeSet: \right.
            & g(e)\neq0         & \quad & \text{iff } e\in\edgeSet(F) \\
            & g(\delta(v))=0    & \quad & \forall v\in\vertexSet(F) \\
            & g(e)\in\CBra{-1,1}& \quad & \forall e\in\edgeSet(C_1\cup C_2) \\
            & g(e)\in\CBra{-2,2}& \quad & \forall e\in\edgeSet(P)
        & \left.\vphantom{\Z^E}\right\}.
    \end{array}
    \end{equation*}
    % \end{gather*}
    The authors of \cite{Loera2019Pivot} showed that \(\cC_1\cup\cC_2\cup\cC_3\cup\cC_4\cup\cC_5\) is the set of circuits of the \emph{fractional matching polytope}, that is \(\Pfcm\) with \(\capacity = 1\) and without the (redundant) constraints \(x \leq 1\).
    Since the set of circuits stays the same if you change the right hand side vector
    (note that the constraints \(x\leq 1\) are parallel to \(x\geq0\)), the same set of circuits apply to  \(\cC(\Pfcm)\). Hence
    \begin{proposition}\label{cor: circuits of frac c-matching polytope}
        \(\cC(\Pfcm) = \cC_1\cup\cC_2\cup\cC_3\cup\cC_4\cup\cC_5\).
    \end{proposition}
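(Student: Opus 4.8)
The plan is to derive the statement from the main result of \cite{Loera2019Pivot} together with a standard robustness property of circuits: the circuit set of a polyhedron $\CBra{x : Ax = b,\ Bx \le d}$ with $A,B$ integral depends only on the pair $(A,B)$ — not on the right-hand sides — and, moreover, is unchanged if one adjoins to $B$ rows that are scalar multiples of rows already present. Granting this, the proposition is essentially bookkeeping.

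First I would recall the combinatorial description of circuits underlying \cite[theorem 1.8]{Finhold2014Circuit}: up to positive scaling, $\cC(\cP)$ for $\cP = \CBra{x : Ax=b,\ Bx\le d}$ consists exactly of the nonzero $g \in \ker A$ for which $\supp{Bg}$ is inclusion-minimal within $\CBra{\supp{Bg'} : g' \in \ker A,\ g' \neq 0}$, and normalizing to co-prime integer components pins down a unique representative of each such ray. In particular, this description involves neither $b$ nor $d$.

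Next I would write both polyhedra over the ground set $\R^\edgeSet$ with no equality constraints ($A$ void). The fractional matching polytope $\Pfm$ is cut out by the incidence rows ``$x(\delta(v)) \le 1$'' for $v \in \vertexSet$ together with ``$-x_e \le 0$'' for $e \in \edgeSet$; call this matrix $B^{\mathrm{FM}}$. The polytope $\Pfcm$ is cut out by ``$x(\delta(v)) \le \capacity_v$'' for $v \in \vertexSet$, ``$-x_e \le 0$'' for $e \in \edgeSet$, and ``$x_e \le 1$'' for $e \in \edgeSet$; call this $B^{\mathrm{FCM}}$. Both are integral $0/{\pm}1$ matrices. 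By the previous step, changing the right-hand sides from $1$ to $\capacity_v$ changes nothing. For the extra box rows: ``$x_e \le 1$'' is $(-1)$ times ``$-x_e \le 0$'', so for every $g$ the entry of $B^{\mathrm{FCM}}g$ indexed by ``$x_e \le 1$'' vanishes exactly when the entry indexed by ``$-x_e \le 0$'' vanishes; hence $\supp{B^{\mathrm{FCM}}g}$ is obtained from $\supp{B^{\mathrm{FM}}g}$ (after the harmless relabeling of right-hand sides) by duplicating, for each $e$, the coordinate coming from the nonnegativity constraint on $x_e$. Such a coordinate duplication is a monotone map on subsets, so it preserves the inclusion order and therefore the family of inclusion-minimal supports; consequently $\cC(\Pfcm) = \cC(\Pfm)$.

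Finally I would invoke \cite{Loera2019Pivot}, which identifies $\cC(\Pfm)$ with $\cC_1 \cup \cC_2 \cup \cC_3 \cup \cC_4 \cup \cC_5$ — checking only that their normalization matches ``co-prime integer components'', which holds since the displayed generators already have entries in $\CBra{0,\pm1,\pm2}$ with greatest common divisor $1$ along their support. Combining the two equalities yields the claim. The one step I would treat most carefully is the ``parallel constraint'' argument: it is precisely here that one needs circuits characterized as the support-minimal elements of $\ker A$ with respect to $B$, rather than as edge directions of one fixed polyhedron, so that adjoining a row proportional to an existing one provably leaves the minimal supports intact. This is also what makes it legitimate to ignore, at the level of circuits, the box constraints $x \le 1$ that are nonetheless needed for $\Pfcm$ to be bounded.
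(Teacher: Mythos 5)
Your proof is correct and follows essentially the same route as the paper: reduce $\cC(\Pfcm)$ to $\cC(\Pfm)$ by noting that circuits are independent of the right-hand sides and that the box constraints $x \le 1$ are parallel to $x \ge 0$, then invoke \cite{Loera2019Pivot}. You merely make explicit the support-minimality characterization of circuits from \cite{Finhold2014Circuit} that the paper leaves implicit; this is a clarification, not a different argument.
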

\section{Key Polyhedral Tools}\label{sec: polyhedral tools}

\begin{theorem}
    \label{thm: one step over polyhedron general}
    Let \(\polyhedron\) be any polytope, \(a^\top x \leq b\) be an inequality of the description of \(\polyhedron\), and \(\delta \in \R_{>0}\). Let \(\overline{x}\) be an optimal solution of the LP \(\max \{c^\top x : x \in \polyhedron, a^\top x \leq b - \delta\}\), and assume that \(\overline{x}\) is a non-optimal vertex of the LP \(\max \CBra{c^\top x : x \in \polyhedron}\). Furthermore, assume that there is no vertex \(\widetilde{x}\) of \(\polyhedron\) satisfying \(b-\delta < a^\top \widetilde{x} < b\).
    Then it is possible to move to an optimal solution of \(\max\CBra{c^\top x : x \in \polyhedron}\) from \(\overline{x}\) in one step over the edges of \(\polyhedron\)
    (i.e., there is an optimal vertex of \(\polyhedron\) adjacent to \(\overline{x}\)).
\end{theorem}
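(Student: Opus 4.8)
The plan is to slice $\polyhedron$ by the hyperplane of $\overline{x}$ and work inside $\polyhedron^+:=\CBra{x\in\polyhedron:a^\top x\geq b-\delta}$. The no-vertex hypothesis has a convenient consequence here: every vertex of $\polyhedron^+$ has $a^\top$-value equal to $b-\delta$ or $b$. Indeed, a vertex of $\polyhedron^+$ is either a vertex of $\polyhedron$ (hence its $a^\top$-value lies in $\R\setminus(b-\delta,b)$ and is at most $b$), or a new vertex created by the slice, which lies on $\CBra{a^\top x=b-\delta}$. Before using this I would establish two facts from the hypotheses. First, $a^\top\overline{x}=b-\delta$: if the slicing constraint were slack at $\overline{x}$, then near $\overline{x}$ the feasible regions of the two LPs agree, so every edge of $\polyhedron$ at $\overline{x}$ restricts to an edge of $\polyhedron\cap\CBra{a^\top x\leq b-\delta}$ with the same direction; since all such edges are non-improving (optimality of $\overline{x}$ there), $\overline{x}$ would be optimal for $\polyhedron$, a contradiction. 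Second, writing $\mathcal{O}:=\arg\max\CBra{c^\top x:x\in\polyhedron}$, one has $\mathcal{O}\subseteq\CBra{a^\top x=b}$: a vertex of $\mathcal{O}$ minimizing $a^\top x$ is a vertex of $\polyhedron$, so its $a^\top$-value is $b$ or at most $b-\delta$; the latter would place an optimal point of $\polyhedron$ in $\polyhedron\cap\CBra{a^\top x\leq b-\delta}$, contradicting that $\overline{x}$ — which is non-optimal for $\polyhedron$ — is optimal there.

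The heart of the argument is a change of objective to $c-\mu_0 a$, where $\mu_0:=\Par{\max_{\polyhedron}c^\top x-c^\top\overline{x}}/\delta>0$. This $\mu_0$ is chosen precisely so that, over $\polyhedron^+$, the largest value of $c-\mu_0 a$ at a vertex with $a^\top x=b-\delta$ equals the largest value at a vertex with $a^\top x=b$: the former is $c^\top\overline{x}-\mu_0(b-\delta)$ (using $\overline{x}\in\arg\max\CBra{c^\top x:x\in\polyhedron,\,a^\top x\leq b-\delta}$ and $a^\top\overline{x}=b-\delta$), the latter is $\Par{\max_{\polyhedron}c^\top x}-\mu_0 b$ (using $\mathcal{O}\subseteq\CBra{a^\top x=b}$), and these coincide by the definition of $\mu_0$. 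Hence the face $\mathcal{G}:=\arg\max\CBra{(c-\mu_0 a)^\top x:x\in\polyhedron^+}$ contains both $\overline{x}$ and all of $\mathcal{O}$; every vertex of $\mathcal{G}$ still has $a^\top$-value in $\CBra{b-\delta,b}$; and a direct check gives $\mathcal{G}\cap\CBra{a^\top x=b}=\mathcal{O}$ while $\overline{x}\in\mathcal{G}\cap\CBra{a^\top x=b-\delta}$.

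Next I would prove an elementary lemma: if $Q$ is a polytope all of whose vertices lie on two parallel hyperplanes $\CBra{h^\top x=\alpha}$ and $\CBra{h^\top x=\beta}$ with $\alpha<\beta$ (both values attained), then every vertex $u$ of $Q$ with $h^\top u=\alpha$ is adjacent in $Q$ to some vertex with $h^\top x=\beta$. Pick a linear functional $p$ uniquely maximized over the face $Q\cap\CBra{h^\top x=\alpha}$ at $u$; for $M$ large, $u$ is not optimal for $p+Mh$ over $Q$ (a vertex on $\CBra{h^\top x=\beta}$ beats it), so there is an improving edge at $u$, and its other endpoint cannot be a vertex with $h^\top x=\alpha$ (that would contradict unique maximality of $p$ there), hence has $h^\top x=\beta$. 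Applying this to $Q=\mathcal{G}$ and $h=a$ produces an optimal vertex $v^*\in\mathcal{O}$ adjacent to $\overline{x}$ inside $\mathcal{G}$, hence inside $\polyhedron^+$ (a face of a face is a face). Finally, the edge $[\overline{x},v^*]$ of $\polyhedron^+$ is not contained in the slicing hyperplane $\CBra{a^\top x=b-\delta}$, because $a^\top v^*=b$; and any face of $\polyhedron\cap\CBra{a^\top x\geq b-\delta}$ that is not a face of $\polyhedron$ must lie on that hyperplane. Therefore $[\overline{x},v^*]$ is a genuine edge of $\polyhedron$, and $v^*$ is an optimal vertex of $\polyhedron$ adjacent to $\overline{x}$, as required.

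The step I expect to be the main obstacle is this last transfer of adjacency back from $\polyhedron^+$ to $\polyhedron$: one must be certain the edge found in the sliced polytope is really an edge of $\polyhedron$. This rests on the standard fact that intersecting a polytope with a halfspace creates new faces only on the bounding hyperplane, combined with the point that $\overline{x}$ is an honest vertex of $\polyhedron$, so the line through $[\overline{x},v^*]$ meets $\polyhedron$ in exactly that segment. A secondary, purely computational, point that needs care is verifying $\overline{x},\mathcal{O}\subseteq\mathcal{G}$ and the identity $\mathcal{G}\cap\CBra{a^\top x=b}=\mathcal{O}$, which depends on having first pinned down $a^\top\overline{x}=b-\delta$ and $\mathcal{O}\subseteq\CBra{a^\top x=b}$.
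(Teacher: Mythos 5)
Your proof is correct in its main ideas, but takes a genuinely different route from the paper's. The paper fixes the optimal vertex $x^*$ closest to $\overline{x}$, assumes non-adjacency for contradiction, decomposes a convex combination of $\overline{x}$ and $x^*$ into other vertices, shows via a cost argument that all ``upper'' vertices in the decomposition are optimal, and then recurses on faces of strictly decreasing dimension to derive a contradiction. Your argument is more constructive: after pinning down $a^\top\overline{x}=b-\delta$ and $\mathcal{O}\subseteq\CBra{a^\top x=b}$, you change the objective to $c-\mu_0 a$ with $\mu_0=(\max_\polyhedron c^\top x-c^\top\overline{x})/\delta$, chosen so that both $\overline{x}$ and $\mathcal{O}$ land on a single optimal face $\mathcal{G}$ of $\polyhedron^+$, and then you invoke a clean auxiliary lemma about polytopes whose vertices all lie on two parallel hyperplanes. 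This avoids the decomposition-and-recursion machinery entirely and makes the role of the no-intermediate-vertex hypothesis transparent; the two preliminary observations and the verification of $\mathcal{G}\cap\CBra{a^\top x=b}=\mathcal{O}$ all check out.

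One step needs tightening. You justify transferring adjacency from $\polyhedron^+$ back to $\polyhedron$ by the assertion that intersecting a polytope with a halfspace ``creates new faces only on the bounding hyperplane.'' As stated this is false: slicing the triangle with vertices $(0,0),(1,0),(0,1)$ by $\CBra{x_1\geq\tfrac12}$ (here $a=(1,0)$, $b=1$, $b-\delta=\tfrac12$, and the no-intermediate-vertex hypothesis holds) produces the edge from $(1,0)$ to $(\tfrac12,\tfrac12)$, which is neither a face of the original triangle nor contained in $\CBra{x_1=\tfrac12}$. The line-through-the-segment observation alone also does not suffice, since $L\cap\polyhedron=[\overline{x},v^*]$ does not by itself make the segment a face (consider the diagonal of a square). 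The correct argument uses more of what you have: since $a^\top v^*=b\neq b-\delta$, the slicing inequality is not tight at $v^*$, so the $(\dim\polyhedron-1)$ linearly independent tight inequalities of $\polyhedron^+$ that define the edge $[\overline{x},v^*]$ must all come from the description of $\polyhedron$ itself; they cut out a face of $\polyhedron$ whose affine hull is the line $L$, and since $\overline{x}$ and $v^*$ are vertices of $\polyhedron$ this face equals $[\overline{x},v^*]$. With that repair the argument is complete. (For what it is worth, the paper's own proof asserts the equivalence of adjacency on $\polyhedron$ and on $\polyhedron'$ with no more justification than you give, so the gap is one of presentation rather than of approach.)
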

\begin{proof}
    Let \(x^*\) be the optimal solution of \(\max\CBra{c^\top x : x \in \polyhedron}\) that is the \emph{closest} vertex to \(\overline{x}\) on \(\polyhedron\) (that is, such that we need a minimum number of steps over the edges of \(\polyhedron\) to reach \(x^*\) from \(\overline{x}\)). Note that \(a^\top \overline{x} = b - \delta\) and \(a^\top x^* = b\), otherwise \(\overline{x} + \lambda(x^* - \overline{x})\), for some small \(\lambda > 0\), and \(x^*\), respectively, contradict the optimality of \(\overline{x}\).
    We need to show that \(\overline{x}\) and \(x^*\) are adjacent on \(\polyhedron\).
    
    Let \(\polyhedron' = \CBra{x \in \polyhedron: a^\top x \geq b - \delta}\). Then \(\overline{x}, x^* \in \polyhedron'\). Note that \(\overline{x}\) and \(x^*\) are adjacent on \(\polyhedron\) if and only if they are adjacent on \(\polyhedron'\). So for the remainder of the proof we restrict ourselves to \(\polyhedron'\).
    
    For the sake of contradiction assume that \(\overline{x}\) and \(x^*\) are not adjacent on \(\polyhedron'\). Then, the line segment of all their convex combinations: \(\lambda \overline{x} + (1-\lambda)x^*\) for \(0 \leq \lambda \leq 1\), is not an edge of \(\polyhedron'\). Hence, any point \(\lambda' \overline{x} + (1 - \lambda') x^*\) for a fixed \(0 < \lambda' < 1\) is also a convex combination of other vertices of \(\polyhedron'\):
    \(
        \lambda' \overline{x} + (1 - \lambda') x^* = \sum_i \alpha_i \hat{x}_i + \sum_j \beta_j \widetilde{x}_j,
    \)
    where \(\alpha_i\geq0\) for all \(i\), \(\beta_j\geq0\) for all \(j\), \(\sum_i \alpha_i + \sum_j \beta_j = 1\), \(\hat{x}_i\) is a vertex of \(\polyhedron'\) with \(a^\top \hat{x}_i = b - \delta\) for all \(i\), and \(\widetilde{x}_j\) is a vertex of \(\polyhedron'\) with \(a^\top \widetilde{x}_j = b\) for all \(j\). If we multiply both sides by \(a\) we get
    \begin{align*}
        % using \textstyle to get smaller sum-symbols
        && a^\top \Par{\lambda' \overline{x} + (1 - \lambda') x^*} &= a^\top \Par{\textstyle \sum_i \alpha_i \hat{x}_i + \sum_j \beta_j \widetilde{x}_j}, \\
        &\iff& \lambda' (b - \delta) + (1 - \lambda') b &= \textstyle \sum_i \alpha_i (b - \delta) + \sum_j \beta_j b,  \\
        &\iff& b - \lambda' \delta &= \textstyle \Par{\sum_i \alpha_i + \sum_j \beta_j} b - \sum_i \alpha_i \delta,
    \end{align*}
    hence \(\lambda' = \sum_i \alpha_i\), and consequently \(1 - \lambda' = \sum_j \beta_j\). We can also multiply both sides by \(c\). Here we use that \(\overline{x}\) is an optimal solution of \(\max \{c^\top x : x \in \polyhedron, a^\top x \leq b - \delta\}\), and that \(x^*\) is an optimal solution of \(\max \CBra{c^\top x : x \in \polyhedron}\).
    \begin{align*}
        % using \textstyle to get smaller sum-symbols
        c^\top \Par{\lambda' \overline{x} + (1 - \lambda') x^*} 
            &= \textstyle c^\top \Par{\sum_i \alpha_i \hat{x}_i + \sum_j \beta_j \widetilde{x}_j} 
            = \sum_i \alpha_i c^\top \hat{x}_i + \sum_j \beta_j c^\top \widetilde{x}_j \\
            &\leq \textstyle \sum_i \alpha_i c^\top \overline{x} + \sum_j \beta_j c^\top x^* 
            = \lambda' c^\top \overline{x} + (1 - \lambda') c^\top x^*
    \end{align*}
    So we must have equality throughout. In particular, \(c^\top \widetilde{x}_j = c^\top x^*\), i.e., all \(\widetilde{x}_j\) are optimal solutions to \(\max\CBra{c^\top x : x \in \polyhedron}\). Note that all \(\widetilde{x}_j\)'s are also vertices of \(\polyhedron\). We show that we can choose some \(\widetilde{x}_j\) to be adjacent to \(\overline{x}\) on \(\polyhedron'\), and hence also on \(\polyhedron\), contradicting that \(x^*\) is the optimal solution closest to \(x\).

    Let \(x'\) be a vertex of \(\polyhedron'\) that is adjacent to \(\overline{x}\), such that \(a x' = b\) (such an \(x'\) must exist). Consider the line segment between \(x'\) and \(\lambda' \overline{x} + (1 - \lambda') x^*\): \(\mu x' + (1 - \mu) \Par{\lambda' \overline{x} + (1 - \lambda') x^*}\) for \(0 \leq \mu \leq 1\). For \(\mu < 0\), this line segment extends beyond \(\lambda' \overline{x} + (1 - \lambda') x^*\). If this is still in \(\polyhedron'\), we can write \(\lambda' \overline{x} + (1 - \lambda') x^*\) as a convex combination of \(x'\) and some other \(\hat{x}_i\)'s and \(\widetilde{x}_j\)'s.
    Since \(a x' = b\), by our previous discussion we find that \(x'\) is optimal, reaching our desired contradiction.
    Otherwise, \(\lambda' \overline{x} + (1 - \lambda') x^*\) must be at the boundary, a face, of \(\polyhedron'\). Because \(\lambda' \overline{x} + (1 - \lambda') x^*\) is in this face, the whole line segment \(\lambda \overline{x} + (1 - \lambda) x^*\) for \(0 \leq \lambda \leq 1\) must be in this face. We can then repeat the argument, replacing \(\polyhedron'\) by this face. Since this face has strictly smaller dimension than \(\polyhedron'\), we either find a contradiction in one of the iterations, or we reach a face of dimension one, i.e., an edge of \(\polyhedron'\). Since this edge contains the whole line segment \(\lambda \overline{x} + (1 - \lambda) x^*\) for \(0 \leq \lambda \leq 1\), the line segment is the edge, a contradiction.
\end{proof}

We make use of this theorem for \(\Pfcm\) in two settings: to analyze what happens when we reduce the capacity of a vertex, and when we remove an edge, in capacitated NBG instances. For the first setting, we have the following.

\begin{theorem}
    \label{thm: one step over polyhedron vertex variant}
    Let \(\overline{x}\) be a maximum-weight fractional \(\capacity\)-matching in \(\graph[\capacity_v-1]\) for some \(v \in \vertexSet\). If \(\overline{x}\) is basic in \(\graph\), but not maximum-weight in \(\graph\), then it is possible to move to a basic maximum-weight fractional \(\capacity\)-matching in \(\graph\) in one step over the edges of \(\Pfcm(\graph)\).
\end{theorem}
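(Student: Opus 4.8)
The plan is to derive this as a direct corollary of \Cref{thm: one step over polyhedron general}, applied to the polytope $\polyhedron = \Pfcm(\graph)$ with objective vector $\weight$, with the distinguished inequality $a^\top x \le b$ taken to be the capacity constraint $x(\delta(v)) \le \capacity_v$ at the vertex $v$ (so $a = \indicator_{\delta(v)}$ and $b = \capacity_v$), and with $\delta = 1$. With these choices the restricted LP $\max\CBra{\weight^\top x : x \in \Pfcm(\graph),\ x(\delta(v)) \le \capacity_v - 1}$ is exactly the maximum-weight fractional $\capacity$-matching problem \eqref{eq:fracmatching} on $\graph[\capacity_v - 1]$ (the constraints $0 \le x \le 1$ are untouched, and here we implicitly use $\capacity_v \ge 1$), so our $\overline{x}$ is an optimal solution of it. Since $\overline{x}$ is basic in $\graph$, it is a vertex of $\Pfcm(\graph)$, and by hypothesis it is not maximum-weight in $\graph$; hence it is a non-optimal vertex of $\max\CBra{\weight^\top x : x \in \Pfcm(\graph)}$. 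The conclusion of \Cref{thm: one step over polyhedron general} then yields an optimal vertex of $\max\CBra{\weight^\top x : x \in \Pfcm(\graph)}$ adjacent to $\overline{x}$ on $\Pfcm(\graph)$, and an optimal vertex of this LP is precisely a basic maximum-weight fractional $\capacity$-matching in $\graph$, which is what we want.

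To be allowed to invoke \Cref{thm: one step over polyhedron general}, the one remaining hypothesis to check is that no vertex $\widetilde{x}$ of $\Pfcm(\graph)$ satisfies $\capacity_v - 1 < \widetilde{x}(\delta(v)) < \capacity_v$. I would read this off from \Cref{thm: basic fractional c-matching}: at every vertex $u$, a basic fractional $\capacity$-matching $\widetilde{x}$ has $\widetilde{x}(\delta(u)) \in \Z$. Indeed, either $u$ lies on one of the vertex-disjoint half-integral odd cycles, in which case $u$ is saturated and $\widetilde{x}(\delta(u)) = \capacity_u$, or $u$ lies on no such cycle, in which case every edge of $\delta(u)$ carries value $0$ or $1$ and again $\widetilde{x}(\delta(u))$ is an integer. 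Since $\capacity_v \in \Z$, there is no integer strictly between $\capacity_v - 1$ and $\capacity_v$, so the hypothesis holds vacuously.

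I do not expect a genuine obstacle; the proof is essentially a matter of packaging the general polyhedral theorem. The only point needing a moment of care is the integrality observation $\widetilde{x}(\delta(v)) \in \Z$ for basic fractional $\capacity$-matchings — this is the sole place the structural description of \Cref{thm: basic fractional c-matching} enters — together with the bookkeeping that the slab $\CBra{x(\delta(v)) \le \capacity_v - 1}$ intersected with $\Pfcm(\graph)$ is the fractional $\capacity$-matching polytope of $\graph[\capacity_v-1]$ and that $\overline{x}$ simultaneously plays the role of the optimal solution of the restricted LP and of a (non-optimal) vertex of $\Pfcm(\graph)$ demanded by \Cref{thm: one step over polyhedron general}. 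Everything else is a straight substitution.
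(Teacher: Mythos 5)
Your proposal is correct and follows exactly the same route as the paper: a direct application of \Cref{thm: one step over polyhedron general} with $\polyhedron = \Pfcm(\graph)$, $a^\top x \le b$ being $x(\delta(v)) \le \capacity_v$, $\delta = 1$, and objective $\weight$. You are in fact slightly more careful than the paper's one-line proof, since you explicitly verify the ``no intermediate vertex'' hypothesis via the integrality of $\widetilde{x}(\delta(v))$ from \Cref{thm: basic fractional c-matching}, a check the paper leaves implicit.
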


\begin{proof}
    It readily follows from \Cref{thm: one step over polyhedron general} by letting \(\polyhedron\) = \(\Pfcm(\graph)\), \(a^\top x \leq b\) be \(x(\delta(v)) \leq \capacity_v\), \(\delta = 1\), and \(\weight\) be the objective function.
\end{proof}

In the second setting, we need to do a bit of extra work.

\begin{theorem}
    \label{thm: one or two steps over polyhedron edge variant}
    Let \(\overline{x}\) be a maximum-weight fractional \(\capacity\)-matching in \(\graph \setminus e\) for some \(e \in \edgeSet\). If \(\overline{x}\) is basic in \(\graph\), but not maximum-weight in \(\graph\), then it is possible to move to a basic maximum-weight fractional \(\capacity\)-matching in \(\graph\) in at most two steps over the edges of \(\Pfcm(\graph)\). If two steps are needed, the first one moves to a vertex with \(x_e = \frac12\), and the second one moves to a vertex with \(x_e = 1\).
\end{theorem}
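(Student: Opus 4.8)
The plan is to deduce this from \Cref{thm: one step over polyhedron general}, applied at most twice. Taking $\polyhedron = \Pfcm(\graph)$, the inequality $a^\top x \le b$ to be $x_e \le 1$, $\delta = 1$, and $c = \weight$, the set $\CBra{x \in \polyhedron : a^\top x \le b - \delta}$ is exactly $\Pfcm(\graph \setminus e)$, over which $\overline x$ is optimal, and $\overline x$ is a non-optimal vertex of $\Pfcm(\graph)$. The one hypothesis that fails is ``no vertex $\widetilde x$ of $\polyhedron$ with $b - \delta < a^\top \widetilde x < b$'': by \Cref{thm: basic fractional c-matching}, $\Pfcm(\graph)$ has vertices with $x_e = \tfrac12 \in (0,1)$. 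This is exactly why two steps can be needed, and the remedy is to cut $\Pfcm(\graph)$ along the hyperplane $x_e = \tfrac12$ and work with
\[
    \polyhedron^{\le} := \Pfcm(\graph) \cap \CBra{x : x_e \le \tfrac12}, \qquad
    \polyhedron^{\ge} := \Pfcm(\graph) \cap \CBra{x : x_e \ge \tfrac12}.
\]
The elementary point that makes \Cref{thm: one step over polyhedron general} applicable to each half is: every vertex of $\polyhedron^{\le}$ (resp.\ $\polyhedron^{\ge}$) either is a vertex of $\Pfcm(\graph)$ — hence, by \Cref{thm: basic fractional c-matching}, has $x_e \in \CBra{0, \tfrac12}$ (resp.\ $\CBra{\tfrac12, 1}$) — or has the new constraint tight, i.e.\ $x_e = \tfrac12$; so no vertex of $\polyhedron^{\le}$ has $0 < x_e < \tfrac12$ and no vertex of $\polyhedron^{\ge}$ has $\tfrac12 < x_e < 1$.

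For the first step I would first observe that $\overline x$ cannot be optimal over $\polyhedron^{\le}$: a vertex $y$ of $\Pfcm(\graph)$ with $\weight^\top y > \weight^\top \overline x$ exists and has $y_e \neq 0$ (else $y \in \Pfcm(\graph \setminus e)$, contradicting optimality of $\overline x$ there), so $y_e \in \CBra{\tfrac12, 1}$, and then either $y \in \polyhedron^{\le}$ or $\tfrac12(\overline x + y) \in \polyhedron^{\le}$ beats $\overline x$. Hence \Cref{thm: one step over polyhedron general} applied to $\polyhedron^{\le}$ (inequality $x_e \le \tfrac12$, $\delta = \tfrac12$) yields a vertex $x^{(1)}$ of $\polyhedron^{\le}$, optimal over $\polyhedron^{\le}$, adjacent to $\overline x$ on $\polyhedron^{\le}$, with $x^{(1)}_e = \tfrac12$ and $\weight^\top x^{(1)} > \weight^\top \overline x$. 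To turn this into a step over $\Pfcm(\graph)$, note that $x_e \le \tfrac12$ is not tight at $\overline x$, hence not tight along the whole edge $[\overline x, x^{(1)}]$ of $\polyhedron^{\le}$; so that edge is cut out by constraints of $\Pfcm(\graph)$ and lies along a genuine edge $[\overline x, q]$ of $\Pfcm(\graph)$ with $q$ a vertex of $\Pfcm(\graph)$ and $q_e \in \CBra{\tfrac12, 1}$. If $q_e = \tfrac12$ then $q = x^{(1)}$ and the first step $\overline x \to x^{(1)}$ is over an edge of $\Pfcm(\graph)$. If instead $q_e = 1$, then $x^{(1)} = \tfrac12(\overline x + q)$, and $q$ is already a maximum-weight fractional $\capacity$-matching in $\graph$: otherwise an optimal basic $y^*$ has (by the same dichotomy) $y^*_e = 1$, and $\tfrac12(\overline x + y^*) \in \Pfcm(\graph) \cap \CBra{x : x_e = \tfrac12}$ would have weight $\tfrac12\weight^\top\overline x + \tfrac12\weight^\top y^* > \tfrac12\weight^\top\overline x + \tfrac12\weight^\top q = \weight^\top x^{(1)}$, contradicting optimality of $x^{(1)}$ over $\polyhedron^{\le} \supseteq \Pfcm(\graph)\cap\CBra{x : x_e = \tfrac12}$; so here a single step $\overline x \to q$ reaches the optimum.

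In the remaining case ($q_e = \tfrac12$, i.e.\ $x^{(1)}$ a vertex of $\Pfcm(\graph)$), if $x^{(1)}$ is already optimal over $\polyhedron^{\ge}$ then, being optimal over $\polyhedron^{\le} \cup \polyhedron^{\ge} = \Pfcm(\graph)$, it is a maximum-weight fractional $\capacity$-matching and one step suffices. Otherwise I would apply \Cref{thm: one step over polyhedron general} a second time, to $\polyhedron^{\ge}$ with the inequality $x_e \le 1$ and $\delta = \tfrac12$: here $\CBra{x \in \polyhedron^{\ge} : x_e \le \tfrac12} = \Pfcm(\graph) \cap \CBra{x : x_e = \tfrac12}$, over which $x^{(1)}$ is optimal (it is optimal over the larger $\polyhedron^{\le}$), $x^{(1)}$ is a vertex of $\polyhedron^{\ge}$, and it is non-optimal there; the theorem gives a vertex $x^{(2)}$ of $\polyhedron^{\ge}$, optimal over $\polyhedron^{\ge}$, adjacent to $x^{(1)}$ on $\polyhedron^{\ge}$, with $x^{(2)}_e = 1$. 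As before, the edge $[x^{(1)}, x^{(2)}]$ of $\polyhedron^{\ge}$ lies along an edge of $\Pfcm(\graph)$ whose far endpoint must have $x_e = 1$ and hence equals $x^{(2)}$, so $x^{(2)}$ is a vertex of $\Pfcm(\graph)$ adjacent to $x^{(1)}$ there; and $x^{(2)}$ dominates $\polyhedron^{\ge}$ and, via $x^{(1)}$, also $\polyhedron^{\le}$, so it is a maximum-weight fractional $\capacity$-matching in $\graph$. This gives the two steps $\overline x \to x^{(1)} \to x^{(2)}$ with $x^{(1)}_e = \tfrac12$ and $x^{(2)}_e = 1$, as claimed.

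The step I expect to be the main obstacle — the ``bit of extra work'' — is the back-and-forth between the three polytopes $\Pfcm(\graph)$, $\polyhedron^{\le}$ and $\polyhedron^{\ge}$: controlling the $x_e$-value of vertices of the two cut polytopes, checking that the intermediate vertex $x^{(1)}$ (which \Cref{thm: one step over polyhedron general} only certifies as a vertex of $\polyhedron^{\le}$, sitting on the artificial facet $x_e = \tfrac12$) can be taken to be a genuine vertex of $\Pfcm(\graph)$ unless a single step already suffices, and the midpoint argument showing that an edge of $\Pfcm(\graph)$ from $\overline x$ that overshoots to $x_e = 1$ must end at an optimal vertex. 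One may also note that $\cC(\polyhedron^{\le}) = \cC(\polyhedron^{\ge}) = \cC(\Pfcm(\graph))$, since adding a constraint parallel to an existing one does not change the circuit set, so all the edges involved have directions from the list in \Cref{cor: circuits of frac c-matching polytope}.
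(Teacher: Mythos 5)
Your proof takes essentially the same approach as the paper: cut \(\Pfcm(\graph)\) along the hyperplane \(x_e = \tfrac12\) into \(\polyhedron^\leq\) and \(\polyhedron^\geq\), apply \Cref{thm: one step over polyhedron general} to \(\polyhedron^\leq\) to reach a vertex \(x^{(1)}\) with \(x^{(1)}_e = \tfrac12\), and distinguish according to whether \(x^{(1)}\) is already a vertex of \(\Pfcm(\graph)\). The one structural difference is the ``overshoot'' subcase (where \(x^{(1)}\) is a proper midpoint of an edge \([\overline{x},q]\) of \(\Pfcm(\graph)\)): the paper applies \Cref{thm: one step over polyhedron general} a second time to \(\polyhedron^\geq\) and then observes that the two half-edges traversed must glue into a single edge of \(\Pfcm(\graph)\), whereas you argue directly that the far endpoint \(q\) is already optimal. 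Both work; yours is a bit more direct.

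There is one small gap in that direct argument. You assert that an optimal basic \(y^*\) has ``(by the same dichotomy) \(y^*_e = 1\),'' but the dichotomy you set up only gives \(y^*_e \in \{\tfrac12, 1\}\). The case \(y^*_e = \tfrac12\) is not covered, and your midpoint construction \(\tfrac12(\overline{x}+y^*)\) would not have \(x_e=\tfrac12\) in that case. It is easy to repair (if \(y^*_e = \tfrac12\) then \(y^*\) itself lies in \(\polyhedron^\leq\) with \(\weight^\top y^* > \weight^\top q > \weight^\top x^{(1)}\), again contradicting optimality of \(x^{(1)}\) over \(\polyhedron^\leq\)), but as written this case is missing. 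Separately, a minor imprecision: the motivating remark that ``by \Cref{thm: basic fractional c-matching}, \(\Pfcm(\graph)\) has vertices with \(x_e = \tfrac12\)'' does not follow from that theorem for an arbitrary fixed edge \(e\) (for some \(e\) every basic solution is integral on \(e\); the paper isolates this as its Case~1). Fortunately your proof never relies on it, since you always route through \(\polyhedron^\leq\), and the required vertices with \(x_e = \tfrac12\) sit on the artificial facet of \(\polyhedron^\leq\), not necessarily in \(\Pfcm(\graph)\).
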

\begin{proof}
    \emph{Case 1: \(x_e \in \CBra{0,1}\) for all vertices of \(\Pfcm(\graph)\).}
    It follows directly from \Cref{thm: one step over polyhedron general} that only one step is needed, by letting \(\polyhedron = \Pfcm(\graph)\), \(a^\top x \leq b\) be \(x_e \leq 1\), \(\delta = 1\), and \(\weight\) the objective function.
    
    \emph{Case 2: there are vertices of \(\Pfcm(\graph)\) that satisfy \(x_e = \frac12\).}
    In this case, let us consider two polytopes:  \(\polyhedron^\leq = \CBra{x \in \Pfcm(\graph) : x_e \leq \frac12}\) and \(\polyhedron^\geq = \CBra{x \in \Pfcm(\graph) : x_e \geq \frac12}\). Let \(\overline{x}\) be a maximum-weight fractional \(\capacity\)-matching in \(\graph \setminus e\), such that \(\overline{x}\) is basic in \(\graph\), but \(\overline{x}\) does not have maximum-weight in \(\graph\). Note that \(\overline{x}\) does not have maximum-weight over \(\polyhedron^\leq\), otherwise it would be of maximum-weight also over \(\Pfcm(\graph)\), contradicting our assumptions. In addition, since \(\overline{x}\) is a vertex of \(\Pfcm(\graph)\), and feasible in \(\polyhedron^\leq\), it is a vertex of \(\polyhedron^\leq\).

    Let \(\polyhedron = \polyhedron^\leq\), \(a x \leq b\) be \(x_e \leq \frac12\), \(\delta = \frac12\), and \(\weight\) be the objective function. We can then apply \Cref{thm: one step over polyhedron general}: it is possible to move to an optimal solution \(\hat{x}\) of \(\max\CBra{ \weight^\top x : x \in \polyhedron^\leq }\) from \(\overline{x}\) in one step over the edges of \(\polyhedron^\leq\). Note that \(\hat{x}_e = \frac12\). 
    If \(\hat{x}\) is a basic maximum-weight fractional \(\capacity\)-matching in \(\graph\), we are done. So suppose that that is not the case.
    
    \emph{Subcase 2a: \(\hat{x}\) is a vertex of \(\Pfcm(\graph)\).} First, note that since \(\hat{x}\) is a vertex of \(\Pfcm(\graph)\), then the edge of \(\polyhedron^\leq\) that is used to move from \(\overline{x}\) to \(\hat{x}\), is also an edge of \(\Pfcm(\graph)\).
    Furthermore, since \(\hat{x}\) is feasible in \(\polyhedron^\geq\), it is also a vertex in \(\polyhedron^\geq\). By assumption, \(\hat{x}\) is optimal over \(\polyhedron^\leq\), so also over \(\polyhedron^\geq\) with the additional constraint \(x_e \leq \frac12\), but not optimal over \(\Pfcm(\graph)\), so also not over \(\polyhedron^\geq\).
    Let \(\polyhedron = \polyhedron^\geq\), \(a x \leq b\) be \(x_e \leq 1\), \(\delta = \frac12\), and \(\weight\) the objective function.
    We can then apply \Cref{thm: one step over polyhedron general}: it is possible to move to an optimal solution \(x^*\) of \(\max\CBra{ \weight^\top x : x \in \polyhedron^\geq }\) from \(\hat{x}\) in one step over the edges of \(\polyhedron^\geq\). Note that \(x^*_e = 1\). Then, \(x^*\) is also an optimal solution of \(\max\CBra{\weight^\top x : x \in \Pfcm(\graph)}\), and a vertex of \(\Pfcm(\graph)\). Since \(\hat{x}\) and \(x^*\) are both vertices of \(\Pfcm(\graph)\), the edge of \(\polyhedron^\geq\) that is used, is also an edge of \(\Pfcm(\graph)\).
    All in all, we get: starting from \(\overline{x}\), it is possible to move to a basic maximum-weight fractional \(\capacity\)-matching in \(\graph\) in two steps over the edges of \(\Pfcm(\graph)\), such that \(x_e = \frac12\) after the first step, and \(x_e = 1\) after the second step.
    
    \emph{Subcase 2b: \(\hat{x}\) is not a vertex of \(\Pfcm(\graph)\).}
    In this case,  we moved from \(\overline{x}\) to \(\hat{x}\) over an edge of \(\polyhedron^\leq\) which is strictly contained in an edge of \(\Pfcm(\graph)\): it must therefore be that \(\polyhedron^\leq\) and \(\polyhedron^\geq\) split this edge in two, and the splitting point, \(\hat{x}\), is a vertex of both polytopes. By assumption, \(\hat{x}\) is optimal over \(\polyhedron^\leq\), so also over \(\polyhedron^\geq\) with the additional constraint \(x_e \leq \frac12\). Since we reached \(\hat{x}\) by moving over just part of an edge of \(\Pfcm(\graph)\), and this increased the weight, moving further along this edge will increase the weight even further. Hence, \(\hat{x}\) is not optimal over \(\polyhedron^\geq\).
    Let \(\polyhedron = \polyhedron^\geq\), \(a x \leq b\) be \(x_e \leq 1\), \(\delta = \frac12\), and \(\weight\) the objective function.
    We can again apply \Cref{thm: one step over polyhedron general}: it is possible to move to an optimal solution \(x^*\) of \(\max\CBra{ \weight^\top x : x \in \polyhedron^\geq }\) from \(\hat{x}\) in one step over the edges of \(\polyhedron^\geq\). Note that \(x^*_e = 1\). Then, \(x^*\) is also an optimal solution of \(\max\CBra{ \weight^\top x : x \in \Pfcm(\graph) }\), and a vertex of \(\Pfcm(\graph)\). 
    Since \(x^*\) is a vertex of \(\Pfcm(\graph)\), but \(\hat{x}\) is not, the edge of \(\polyhedron^\geq\) that is used, is only part of an edge of \(\Pfcm(\graph)\). In particular, it must be the remainder of the edge over which we moved in the first step.
    All in all, we get: starting from \(\overline{x}\), it is possible to move to a basic maximum-weight fractional \(\capacity\)-matching in \(\graph\) in one step over the edges of \(\Pfcm(\graph)\).
\end{proof}

The following theorem is based on methods used in \cite[section III-G]{Sanita2018Diameter}. 
\begin{theorem}
    \label{thm: adjacent vertices polytope fcm}
    If \(x\) and \(y\) are adjacent vertices of \(\Pfcm(\graph)\), then \(y=x+\alpha g\), where \(g \in \cC(\Pfcm)\)  and \(\alpha\in\CBra{\frac{1}{2},1}\). Furthermore, 
    \begin{itemize}
        \item 
            if \(\alpha=1\), then \(g\in\cC_1\cup\cC_2\cup\cC_3\) and \(\Abs{\oddcyclesx{y}}= \Abs{\oddcyclesx{x}}\).
        \item 
            if \(\alpha=\frac{1}{2}\), then \(g\in\cC_1\cup\cC_2\cup\cC_4\cup\cC_5\), and
            \begin{itemize}
                \item 
                    if \(g\in\cC_1\), then \(\Abs{\oddcyclesx{y}}= \Abs{\oddcyclesx{x}}\). 
                \item
                    if \(g\in\cC_2\cup\cC_4\), then \(\Abs{\oddcyclesx{y}}= \Abs{\oddcyclesx{x}} \pm 1\), and the odd cycle in \(g\) belongs to either \(\oddcyclesx{x}\) or \(\oddcyclesx{y}\). 
                \item
                    if \(g\in\cC_5\), then \(\Abs{\oddcyclesx{y}}= \Abs{\oddcyclesx{x}} \pm \CBra{0,2}\), and the odd cycles in \(g\) both belong to \(\oddcyclesx{x}\), or both to \(\oddcyclesx{y}\), or exactly one belongs to \(\oddcyclesx{x}\) and the other to \(\oddcyclesx{y}\).
            \end{itemize}
    \end{itemize}
\end{theorem}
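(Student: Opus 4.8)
The plan is to combine the circuit description of edge directions with the half‑integral structure of basic fractional $\capacity$‑matchings given by \Cref{thm: basic fractional c-matching}. Since $x$ and $y$ are adjacent vertices of $\Pfcm(\graph)$, the direction $y-x$ is an edge direction, hence $y-x=\alpha g$ for some $\alpha>0$ and some $g\in\cC(\Pfcm)$, and by \Cref{cor: circuits of frac c-matching polytope} the circuit $g$ lies in one of $\cC_1,\dots,\cC_5$. Half‑integrality forces every entry of $y-x$ into $\{0,\pm\tfrac12,\pm1\}$; since every such circuit has an entry equal to $\pm1$, this gives $\alpha\in\tfrac12\Z_{>0}$, and the bound $|y_e-x_e|\le1$ then gives $\alpha\in\{\tfrac12,1\}$. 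Circuits in $\cC_4$, and those in $\cC_5$ with a non‑empty connecting path, additionally have an entry equal to $\pm2$, which forces $\alpha=\tfrac12$. The two remaining cases — $g\in\cC_5$ with an empty path and $\alpha=1$, and $g\in\cC_3$ with $\alpha=\tfrac12$ — I would exclude by contradiction, using that adjacency is stronger than merely differing by a circuit: in the first, the edge midpoint $z=x+\tfrac12 g$ equals $\tfrac12$ on both odd cycles of $g$ and can be re‑decomposed by rounding those two cycles independently (keeping $x$ on one and $y$ on the other), contradicting that $z$ lies in the relative interior of a face; in the second, feasibility of $y$ at a suitably oriented path endpoint $v$ forces $x(\delta(v))\le\capacity_v-\tfrac12$, so $v$ is unsaturated and hence on no fractional odd cycle of $x$, whereas tracking the incident path edge through \Cref{thm: basic fractional c-matching} puts it on one. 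This establishes $\alpha=1\Rightarrow g\in\cC_1\cup\cC_2\cup\cC_3$ and $\alpha=\tfrac12\Rightarrow g\in\cC_1\cup\cC_2\cup\cC_4\cup\cC_5$.

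For the cardinality statements I would look at the support $\edgeSet(F)$ of $g$, where alone $x$ and $y$ differ. On an edge $e$ with $g_e=\pm1$ — all of $\edgeSet(F)$ when $g\in\cC_1\cup\cC_2\cup\cC_3$, and the cycle edges of $g$ when $g\in\cC_4\cup\cC_5$ — the identity $y_e=x_e+\alpha g_e$ together with $x_e,y_e\in\{0,\tfrac12,1\}$ leaves only $(x_e,y_e)\in\{(0,1),(1,0)\}$ if $\alpha=1$, so neither entry is $\tfrac12$, and $(x_e,y_e)\in\{(0,\tfrac12),(\tfrac12,1),(\tfrac12,0),(1,\tfrac12)\}$ if $\alpha=\tfrac12$, so exactly one entry is $\tfrac12$; on an edge with $g_e=\pm2$ one similarly gets $(x_e,y_e)\in\{(0,1),(1,0)\}$. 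Hence if $\alpha=1$ the $\tfrac12$‑edges of $x$ and $y$ coincide and $\oddcyclesx{x}=\oddcyclesx{y}$; in all cases the fractional odd cycles of $x$ and of $y$ that avoid $\edgeSet(F)$ coincide, and a $\cC_4$‑ or $\cC_5$‑circuit's path edges carry no $\tfrac12$ value in either matching, so no fractional odd cycle runs along them. It remains to analyze, for $\alpha=\tfrac12$, the fractional odd cycles meeting the odd cycle(s) inside $g$.

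The heart of the proof — and the step I expect to be the main obstacle — is to show that each odd cycle $C$ occurring in $g$ is \emph{monochromatic}, meaning $\edgeSet(C)\subseteq\{e:x_e=\tfrac12\}$ (so $C\in\oddcyclesx{x}$) or $\edgeSet(C)\subseteq\{e:y_e=\tfrac12\}$ (so $C\in\oddcyclesx{y}$). Suppose not. Since exactly one of $x_e,y_e$ equals $\tfrac12$ on each edge of $C$, the edges of $C$ split into maximal arcs that are alternately ``$x$‑half'' and ``$y$‑half'', and each $x$‑half (resp.\ $y$‑half) arc, being a run of consecutive $\tfrac12$‑edges of $x$ (resp.\ of $y$), is a subpath of one fractional odd cycle $D_x$ of $x$ (resp.\ $D_y$ of $y$) by \Cref{thm: basic fractional c-matching}. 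Walking along such a cycle starting from an arc of $C$, and repeatedly using that the $\tfrac12$‑edges of $x$ (resp.\ of $y$) form \emph{vertex‑disjoint} odd cycles, one shows that off $\edgeSet(F)$ the cycles $D_x$ and $D_y$ traverse exactly the same edges; thus each is obtained from the other by replacing its $x$‑half arcs of $C$ with the complementary $y$‑half arcs, and since $|\edgeSet(C)|$ is odd a parity count forces one of $D_x,D_y$ to be an \emph{even} cycle, contradicting \Cref{thm: basic fractional c-matching}. The vertices of $C$ at which $g(\delta(\cdot))\ne0$ (the special vertex of a $\cC_2$‑circuit, the cycle–path junction of a $\cC_4$‑circuit, and the shared vertex of a $\cC_5$‑circuit with empty path) must be treated separately in this walk.

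With monochromaticity established (still in the case $\alpha=\tfrac12$), the rest is bookkeeping. For $g\in\cC_1$ (an even cycle, hence necessarily bichromatic) the same walking argument matches the fractional odd cycles of $x$ through $C$ bijectively with those of $y$ through $C$, so $\Abs{\oddcyclesx{y}}=\Abs{\oddcyclesx{x}}$. For $g\in\cC_2\cup\cC_4$ the unique odd cycle $C$ of $g$ is a fractional odd cycle of exactly one of $x,y$ and every other fractional odd cycle is shared, so $\Abs{\oddcyclesx{y}}=\Abs{\oddcyclesx{x}} \pm 1$. For $g\in\cC_5$ the connecting path carries only $0/1$ edges, so the analysis splits into the two independent monochromaticity statements for $C_1$ and $C_2$; according as both, neither, or exactly one of $C_1,C_2$ switches sides between $x$ and $y$, we obtain $\Abs{\oddcyclesx{y}}=\Abs{\oddcyclesx{x}} \pm \CBra{0,2}$.
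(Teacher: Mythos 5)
Your proposal takes a genuinely different route from the paper for the core counting claims. The paper lifts to a perfect-$c$-matching polytope, restricts to the connected component $\mathcal{K}$ of the ``interesting'' edges, and uses a rank-of-tight-constraints argument to show $\mathcal{K}$ has cyclomatic number at most two, hence at most two odd cycles; the case analysis then drops out by pigeonhole (if $C$ were bichromatic, $C$, a $D_x$, and a $D_y$ would be three distinct odd cycles in $\mathcal{K}$). You instead try to prove monochromaticity directly via a walk-and-parity count. That is the step you rightly flag as ``the main obstacle,'' and as written it is not there. When $C$ splits into several alternating $x$-half and $y$-half arcs, a single fractional odd cycle of $x$ need not contain all the $x$-half arcs; the shared $\tfrac12$-paths leaving the arc boundaries can wire the arcs together in several different cycles, and it is not obvious that one always ends up with exactly two complementary cycles $D_x$, $D_y$ whose lengths sum to $\Abs{E(C)}+2\Abs{Q}$. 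You also note that the vertices of $C$ with $g(\delta(\cdot))\neq 0$ and the cycle--path junctions must be treated separately, but do not say how. The paper's component bound sidesteps all of this and I would adopt it (or develop the walk argument in full).

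Separately, your exclusion of $g\in\cC_5$ with empty path and $\alpha=1$ does not work. The proposed re-decomposition of the midpoint — keeping $x$ on one odd cycle and $y$ on the other — can violate the capacity at the shared vertex $w$. Concretely, take two triangles $\{a,b,w\}$ and $\{c,d,w\}$ with $\capacity_w=2$ and $\capacity=1$ elsewhere; let $x$ match $\{ab,cw,dw\}$ and $y$ match $\{aw,bw,cd\}$. These are adjacent vertices of $\Pfcm$ (the five tight degree constraints pin down a one-dimensional face), $y-x$ is a co-prime $\cC_5$-circuit with empty path and $\alpha=1$, and the recombination that puts $y$'s values on both of $w$'s cycle edges gives $z(\delta(w))=4>\capacity_w$. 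So adjacency alone does not rule this case out — in fact it can occur. (This is also a gap in the paper's own proof, which rules out $\cC_4\cup\cC_5$ for $\alpha=1$ via the $\pm 2$-entry argument, which does not apply to empty-path $\cC_5$-circuits; the counting conclusion $\Abs{\oddcyclesx{y}}=\Abs{\oddcyclesx{x}}$ for $\alpha=1$ is still correct because $\alpha g$ is integral, and that is all that is ever used downstream, so the bug is benign — but the classification claim as stated, and your attempted fix, are both wrong.) Finally, a small note on the $\cC_3$, $\alpha=\tfrac12$ exclusion: the path edge $e$ at the chosen endpoint $v$ need not be a $\tfrac12$-edge of $x$; it could be a $\tfrac12$-edge of $y$, in which case you still get $v$ onto a fractional odd cycle of $x$ via the other $\tfrac12$-edge of $y$ at $v$ (which is off $\text{supp}(g)$ and hence also a $\tfrac12$-edge of $x$), but that extra step should be said.
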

Before we go into the proof, we introduce a fractional perfect \(\capacity\)-matching polytope, which will be helpful.
Consider \(\Pfcm\). Add a non-negative slack-variable for each inequality of the form \(x(\delta(v))\leq c_v\). We get a polytope that naturally corresponds to the set of fractional perfect \(\capacity\)-matchings on a modified graph \(\overline{\graph}=(\vertexSet,\edgeSet\cup L)\), obtained from \(\graph\) by adding a loop edge \(\shortEdge{v}{v}\in L\) for each vertex \(v\in\vertexSet\). We define
\begin{equation*}
    \Pfpcm(\overline{\graph}) := \CBra{x\in\R^{\edgeSet\cup L} : x(\delta(v))=c_v \ \forall v\in\vertexSet, x\geq0, x_e\leq1\ \forall e\in E},
\end{equation*}
as the polytope of fractional perfect \(\capacity\)-matchings in \(\overline{\graph}\).

\begin{proof}\emph{Proof of \Cref{thm: adjacent vertices polytope fcm}. }
    Let \(g\) be the edge direction of the edge between \(x\) and \(y\), scaled in such a way that the components of \(g\) are co-prime. Then, clearly, \(y = x + \alpha g\) for some \(\alpha \neq 0\). Without loss of generality, we can assume that \(\alpha > 0\), since \(-g\) is also an edge direction of the same edge. All edge directions are circuits, hence \(g \in \cC(\Pfcm)\), and in particular, \(g \in \cC_1\cup\cC_2\cup\cC_3\cup\cC_4\cup\cC_5\), by \Cref{cor: circuits of frac c-matching polytope}. That means that the components of \(g\) have a magnitude of at least \(1\). Then it follows from \(0 \leq x \leq 1\), that \(\alpha \leq 1\). Likewise, for the circuits with components that have a magnitude of \(2\), it follows that \(\alpha \leq \frac{1}{2}\). Finally, since \(x\) and \(y\) are vertices, i.e.\ they are basic, their components are equal to \(0\), \(\frac12\), or \(1\), which implies \(\alpha\in\CBra{\frac{1}{2},1}\).
    
    \textit{Case 1: \(\alpha=1\).} 
    As discussed, for circuits with components that have a magnitude of \(2\), \(\alpha\leq\frac{1}{2}\). Hence, in this case, \(g\in\cC_1\cup\cC_2\cup\cC_3\). Furthermore, all components of \(\alpha g\) are integral, which means that fractional edges, and in particular the fractional odd cycles, are not affected. It follows that \(\Abs{\oddcyclesx{y}}= \Abs{\oddcyclesx{x}}\). 
    
    \textit{Case 2: \(\alpha=\frac{1}{2}\).} 
    Circuits in \(\cC_3\) correspond with paths. For either endpoint of this path, applying the circuit results in \(\pm\alpha\cdot1=\pm\frac{1}{2}\) on a single edge incident with the vertex. Since \(x\) and \(y\) are both basic, this is not possible, so \(g\in\cC_1\cup\cC_2\cup\cC_4\cup\cC_5\).
    
    We extend \(x\) and \(y\) to fractional perfect \(\capacity\)-matchings \(\overline{x}\) and \(\overline{y}\) in \(\overline{\graph}\). This extension is uniquely obtained by setting \(\overline{x}_{vv} = \capacity_v-x(\delta(v))\) for each \(\shortEdge{v}{v}\in L\), likewise for \(\overline{y}\). Since \(x\) and \(y\) are adjacent vertices of \(\Pfcm(\graph)\), it follows that \(\overline{x}\) and \(\overline{y}\) are adjacent vertices of \(\Pfpcm(\overline{\graph})\). We extend \(g\) to \(\overline{g}\) such that \(\overline{y} = \overline{x} +\frac{1}{2}\overline{g}\).
    
    Let \(\edgeSet^1 = \CBra{e\in \edgeSet : \overline{x}_e=\overline{y_e}=1}\), and \(\cG\) be the graph induced by the supports of \(\overline{x}\) and \(\overline{y}\), minus the edges in \(\edgeSet^1\). 
    We claim that there is exactly one component of \(\cG\) that contains an edge \(e\) with \(\overline{x}_e\neq\overline{y}_e\). 
    Clearly there is at least one, since \(x\neq y\) and hence \(\overline{x}\neq\overline{y}\). 
    Actually, \(\overline{x} \neq \overline{y}\) only on the support of \(\overline{g}\). Since \(\overline{x}_e \neq \overline{y}_e\) for every edge \(e\) in the support of \(\overline{g}\), we have that at least one of \(\overline{x}_e\) and \(\overline{y}_e\) is not zero, and at least one is not one. Hence, all those edges are in \(\cG\), and in particular they all are in the same component, since \(\overline{g}\) is connected.
    
    Let \(\mathcal{K}\) be the component of \(\cG\) that contains an edge \(e\) with \(\overline{x}_e\neq\overline{y}_e\), and let \(k\) be the number of vertices in this component. Let \(K\) be a subgraph of \(\overline{\graph}\) induced by the vertices in \(\mathcal{K}\), minus the edges in \(\edgeSet^1\), where we change the capacities accordingly: for each vertex \(v \in \vertexSet(K)\), reduce the capacity of \(v\) by \(\Abs{\delta(v) \cap \edgeSet^1}\). Let \(\restr{\overline{x}}{K}\) be obtained from \(\overline{x}\) by restricting to the edges in \(K\), likewise for \(\restr{\overline{y}}{K}\). Note that \(\restr{\overline{x}}{K}\) and \(\restr{\overline{y}}{K}\) are perfect \(\capacity\)-matchings in \(K\) with the adjusted capacity. In particular, they are adjacent vertices of \(\Pfpcm(K)\), since \(\overline{x}\) and \(\overline{y}\) are adjacent vertices of \(\Pfpcm(\overline{\graph})\).
    
    Let \(A\) be the incidence matrix of \(K\). Since the columns associated to the loop edges form an identity matrix, the rank of \(A\) is \(k\). Since \(\restr{\overline{x}}{K}\) and \(\restr{\overline{y}}{K}\) are adjacent vertices, there must be \(\Abs{\edgeSet(K)}-1\) linearly independent constraints that are tight for both of them. Since the rank of \(A\) is \(k\), and we removed the edges for which the ``\(\leq1\)'' constraint is tight for both \(\overline{x}\) and \(\overline{y}\), this implies that there are at least \(\Abs{\edgeSet(K)}-1-k\) edges for which the ``\(\geq 0\)'' constraint  is tight for both of them. Consequently, there are at most \(k+1\) edges in the union of the supports of \(\restr{\overline{x}}{K}\) and \(\restr{\overline{y}}{K}\). Note that the graph induced by the supports of \(\restr{\overline{x}}{K}\) and \(\restr{\overline{y}}{K}\) is \(\mathcal{K}\). With \(k+1\) edges on \(k\) connected vertices, we have a spanning tree, plus two additional (possibly loop) edges: it is easy to realize then that there can be at most two odd cycles in \(\mathcal{K}\).
    
    \textit{Subcase 2a: \(g\in\cC_1\).}
    The support of \(\overline{g}\) is an even cycle, say \(C\). If \(\restr{\overline{x}}{K}=\frac{1}{2}\) for all edges on \(C\), then \(\restr{\overline{x}}{K}\), and therefore also \(x\), contains a fractional even cycle, which contradicts that \(x\) is basic. Similarly, if \(\restr{\overline{x}}{K}\) is integral for all edges on \(C\), \(\restr{\overline{y}}{K}=\frac{1}{2}\) for all edges on \(C\), contradicting that \(y\) is basic. Hence, \(\restr{\overline{x}}{K}\) has edges on \(C\) with integral value, and also edges with value \(\frac{1}{2}\). The fractional edges imply that \(\restr{\overline{x}}{K}\) has at least one fractional odd cycle. The integral edges become fractional in \(\restr{\overline{y}}{K}\), which means \(\restr{\overline{y}}{K}\) also has at least one fractional odd cycle, different from the one of \(\restr{\overline{x}}{K}\). These odd cycles are distinct, and both in \(\mathcal{K}\), and we have already shown that \(\mathcal{K}\) contains at most two odd cycles. Hence, \(\Abs{\oddcyclesx{\restr{\overline{y}}{K}}}= \Abs{\oddcyclesx{\restr{\overline{x}}{K}}}\).
    
    \textit{Subcase 2b: \(g\in\cC_2\cup\cC_4\).}
    The support of \(\overline{g}\) is a (possibly empty) path, an odd cycle and a loop edge, of which only the odd cycle can influence the fractional odd cycles in  \(\restr{\overline{x}}{K}\) and \(\restr{\overline{y}}{K}\). If this odd cycle in the support of \(\overline{g}\) is a fractional odd cycle in \(\restr{\overline{x}}{K}\)/\(\restr{\overline{y}}{K}\), then note that \(\restr{\overline{y}}{K}\) contains exactly one less/more fractional odd cycle than \(\restr{\overline{x}}{K}\). Otherwise, both \(\restr{\overline{x}}{K}\) and \(\restr{\overline{y}}{K}\) have fractional and integral edges on the odd cycle of \(g\). That means that \(\restr{\overline{x}}{K}\) and \(\restr{\overline{y}}{K}\) both have at least one odd cycle in the component, different from the odd cycle in \(g\) and different from each other. But then there are at least three odd cycles in the component, a contradiction. Hence, the odd cycle in \(g\) belongs to either \(\oddcyclesx{\restr{\overline{x}}{K}}\) or \(\oddcyclesx{\restr{\overline{y}}{K}}\), and \(\Abs{\oddcyclesx{\restr{\overline{y}}{K}}}= \Abs{\oddcyclesx{\restr{\overline{x}}{K}}} \pm 1\).
    
    \textit{Subcase 2c: \(g\in\cC_5\).}
    The support of \(\overline{g}\) is two odd cycles connected by a (possibly empty) path. Since \(\mathcal{K}\) contains at most two odd cycles, and \(\overline{g}\) already contains two odd cycles, these are the only odd cycles. Similar to the previous subcase, for both the odd cycles separately we can argue that not both \(\restr{\overline{x}}{K}\) and \(\restr{\overline{y}}{K}\) can have fractional and integral edges on the odd cycle, i.e., each odd cycle belongs to either \(\oddcyclesx{\restr{\overline{x}}{K}}\) or \(\oddcyclesx{\restr{\overline{y}}{K}}\). There are three options: both odd cycles belong to \(\oddcyclesx{\restr{\overline{x}}{K}}\), or both to \(\oddcyclesx{\restr{\overline{y}}{K}}\), or one to \(\oddcyclesx{\restr{\overline{x}}{K}}\) and one to \(\oddcyclesx{\restr{\overline{y}}{K}}\). It follows that \(\Abs{\oddcyclesx{\restr{\overline{y}}{K}}}= \Abs{\oddcyclesx{\restr{\overline{x}}{K}}} \pm \CBra{0,2}\).
     
    Since \(\overline{x}\) equals \(\overline{y}\) outside of \(K\), our conclusions carry over from \(\restr{\overline{x}}{K}\) and \(\restr{\overline{y}}{K}\) to \(\overline{x}\) and \(\overline{y}\). In addition, removing loop edges, i.e., going back from \(\overline{x},\overline{y}\) to \(x,y\), does not affect fractional odd cycles. Hence, our conclusions also hold for \(x\) and \(y\):
    If \(g \in \cC_1\), then \(\Abs{\oddcyclesx{y}}= \Abs{\oddcyclesx{x}}\).
    If \(g \in \cC_2 \cup \cC_4\), then \(\Abs{\oddcyclesx{y}}= \Abs{\oddcyclesx{x}} \pm 1\) and the odd cycle in \(g\) belongs to either \(\oddcyclesx{x}\) or \(\oddcyclesx{y}\). 
    If \(g \in \cC_5\), then \(\Abs{\oddcyclesx{y}}= \Abs{\oddcyclesx{x}} \pm \CBra{0,2}\), and the odd cycles in \(g\) both belong to \(\oddcyclesx{x}\), or both to \(\oddcyclesx{y}\), or exactly one to \(\oddcyclesx{x}\) and the other to \(\oddcyclesx{y}\).
\end{proof}
\section{Capacity-Stabilizer}\label{sec: capacity-stabilizer}
We first exploit the above polyhedral results to prove that a lower bound on the size of a capacity-stabilizer is given by the minimum number of fractional odd cycles in the support of any basic maximum-weight fractional \(\capacity\)-matching.

\begin{lemma}
    \label{lem: lb size of vertex stab}
    For every capacity-stabilizer \(S\), \(\Abs{S} \geq \gamma(\graph)\).
\end{lemma}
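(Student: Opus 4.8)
The plan is to induct on $\Abs{S}$. For $\Abs{S}=0$ the graph $\graph$ is stable, so $\gamma(\graph)=0$ by \Cref{prop: stable iff gammaG=0}. For the inductive step, fix any $v\in S$ and put $S'=S\setminus v$. Since $\graph[\capacity_S-1]=\graph[\capacity_v-1][\capacity_{S'}-1]$ is stable, $S'$ is a capacity-stabilizer of $\graph[\capacity_v-1]$, so by induction $\gamma(\graph[\capacity_v-1])\leq\Abs{S'}=\Abs{S}-1$. Hence it suffices to establish the ``one capacity unit'' bound
\[
    \gamma(\graph)\;\leq\;\gamma(\graph[\capacity_v-1])+1\qquad\text{for every }v\in\vertexSet,
\]
which then yields $\gamma(\graph)\leq\Abs{S}$. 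To prove this, let $\overline{x}$ be a basic maximum-weight fractional $\capacity^{v-1}$-matching in $\graph[\capacity_v-1]$ with $\Abs{\oddcyclesx{\overline{x}}}=\gamma(\graph[\capacity_v-1])$, which exists by \Cref{thm: min nr odd cycles}, and analyze the status of $\overline{x}$ in $\graph$.

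If $\overline{x}$ is basic and maximum-weight in $\graph$, we are immediately done, as then $\gamma(\graph)\leq\Abs{\oddcyclesx{\overline{x}}}$. If $\overline{x}$ is basic in $\graph$ but not maximum-weight in $\graph$, we apply \Cref{thm: one step over polyhedron vertex variant}: we can move from $\overline{x}$ to a basic maximum-weight fractional $\capacity$-matching $y$ in $\graph$ across a single edge of $\Pfcm(\graph)$, say with edge direction $g$. The key observation is that this step \emph{strictly increases the degree of} $v$: the proof of \Cref{thm: one step over polyhedron general} (invoked with the inequality $x(\delta(v))\leq\capacity_v$ and $\delta=1$) forces $\overline{x}(\delta(v))=\capacity_v-1$ and $y(\delta(v))=\capacity_v$, so $g(\delta(v))\neq0$. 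Every $g\in\cC_1$ and every $g\in\cC_5$ satisfies $g(\delta(u))=0$ for all $u\in\vertexSet$, so these classes are excluded, leaving $g\in\cC_2\cup\cC_3\cup\cC_4$; by \Cref{thm: adjacent vertices polytope fcm} any such step changes the number of fractional odd cycles by at most one, i.e.\ $\Abs{\oddcyclesx{y}}\leq\Abs{\oddcyclesx{\overline{x}}}+1$, and we conclude as before.

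The remaining case is that $\overline{x}$ is \emph{not} basic in $\graph$. By \Cref{thm: basic fractional c-matching}, applied both in $\graph[\capacity_v-1]$ and in $\graph$, this can only be because a saturation relation is lost when $\capacity_v$ is restored, i.e.\ $v$ lies on a fractional odd cycle $C$ of $\overline{x}$. Alternate-rounding $C$ covering $v$ produces a vector $\widehat{x}$ that is a fractional $\capacity$-matching in $\graph$ (now $v$ is exactly $\capacity_v$-saturated), is basic in $\graph$ (all other saturations are untouched), satisfies $w^\top\widehat{x}\geq w^\top\overline{x}$ (alternate-rounding $C$ exposing $v$ stays feasible in $\graph[\capacity_v-1]$ and hence cannot beat $\overline{x}$, while the two roundings' weights sum to $2\,w^\top\overline{x}$), and has $\Abs{\oddcyclesx{\widehat{x}}}=\Abs{\oddcyclesx{\overline{x}}}-1$. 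If $\widehat{x}$ is already maximum-weight in $\graph$, we are done. The main obstacle is precisely this last configuration: $\overline{x}$ is neither basic nor maximum-weight in $\graph$, and the rounded vector $\widehat{x}$ is basic but still not maximum-weight in $\graph$. I expect to rule it out by showing that $\widehat{x}$ must in fact be maximum-weight in $\graph$, via a conformal circuit decomposition of $z-\overline{x}$ for a basic maximum-weight fractional $\capacity$-matching $z$ of $\graph$: any improving circuit $g$ in this decomposition with $g(\delta(v))=0$ would improve $\overline{x}$ while keeping it feasible in $\graph[\capacity_v-1]$, contradicting optimality of $\overline{x}$ there, and one argues that the circuits with $g(\delta(v))\neq0$ cannot account for the weight gap since $\widehat{x}$ already uses $v$ at full capacity $\capacity_v$ (alternatively, one re-applies the one-step machinery to $\widehat{x}$). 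In every case $\gamma(\graph)\leq\gamma(\graph[\capacity_v-1])+1$, which closes the induction.
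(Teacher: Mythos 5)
Your first two cases match the paper's argument (and the step where you exclude $\cC_1$ and $\cC_5$ by noting $g(\delta(v))\neq 0$ is exactly how the paper handles its Case 2). The issue is your third case: when $\overline{x}$ is not basic in $\graph$, i.e.\ $v$ lies on a fractional odd cycle $C$ of $\overline{x}$. You correctly build $\widehat{x}$ by alternate rounding $C$ covering $v$, note it is basic in $\graph$ with $\Abs{\oddcyclesx{\widehat{x}}}=\Abs{\oddcyclesx{\overline{x}}}-1$, and show $\weight^\top\widehat{x}\geq\weight^\top\overline{x}$. But you explicitly leave unproven that $\widehat{x}$ is maximum-weight in $\graph$, and neither of your two suggested escapes works as stated: \Cref{thm: one step over polyhedron vertex variant} cannot be re-applied to $\widehat{x}$ because $\widehat{x}$ is not even feasible in $\graph[\capacity_v-1]$ (it has $\widehat{x}(\delta(v))=\capacity_v$), and the conformal-circuit sketch stops at the vague claim that ``circuits with $g(\delta(v))\neq 0$ cannot account for the weight gap'' without a quantitative argument. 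So as written, the proof does not establish $\gamma(\graph)\leq\gamma(\graph[\capacity_v-1])+1$ in this case.

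The paper closes precisely this gap using the dual certificate, which you never introduce. It takes an optimal fractional vertex cover $(y,z)$ of $\graph[\capacity_v-1]$ in complementary slackness with $\overline{x}$; since increasing $\capacity_v$ leaves $(y,z)$ dual-feasible in $\graph$, the only CS condition at risk is $y_v=0\lor x(\delta(v))=\capacity_v$, and non-optimality of $\overline{x}$ in $\graph$ forces $y_v>0$. Alternate rounding covering $v$ raises $\widehat{x}(\delta(v))$ to $\capacity_v$ and touches no other saturation or support relation, so $\widehat{x}$ and $(y,z)$ satisfy CS in $\graph$, certifying that $\widehat{x}$ is maximum-weight. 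This gives $\gamma(\graph)\leq\Abs{\oddcyclesx{\widehat{x}}}=\gamma(\graph[\capacity_v-1])-1$, even stronger than you need. If you add this complementary-slackness step, the rest of your induction goes through; without it, the third case is genuinely open.
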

\begin{proof}
    To prove the lemma, by \Cref{prop: stable iff gammaG=0}, it is enough to show that reducing the capacity of any vertex by one decreases the number of fractional odd cycles by at most one. Therefore, from now on, we concentrate on proving the following statement: 
    \begin{equation}
        \label{eq: lb gamma(G) reduced capacity}
        \tag{\(\star\)}
        \text{for all } v\in\vertexSet, \ \gamma(\graph[\capacity_v - 1])\geq\gamma(\graph) - 1.
    \end{equation}

    Let \(x\) be a basic maximum-weight fractional \(\capacity\)-matching in \(\graph[\capacity_v-1]\) with \(\gamma(\graph[\capacity_v-1])\) fractional odd cycles. Let \((y,z)\) be an optimal fractional vertex cover in \(\graph[\capacity_v-1]\), satisfying complementary slackness (see \Cref{eq: compl slack}) with \(x\). Note that increasing the capacity does not influence the feasibility of \(x\) and \((y,z)\), hence they are a fractional \(\capacity\)-matching and vertex cover in \(\graph\), respectively. 

    If \(x\) has maximum-weight in \(\graph\), then \(\gamma(\graph) \leq \Abs{\oddcyclesx{x}} = \gamma(\graph[\capacity_v-1])\), and hence  \eqref{eq: lb gamma(G) reduced capacity} holds.

    Assume now that \(x\) does not have maximum-weight in \(\graph\). Then, \(x\) and \((y,z)\) cannot satisfy complementary slackness in \(\graph\). The change from \(\graph[\capacity_v-1]\) to \(\graph\) only influences the complementary slackness condition \(y_v = 0 \lor x(\delta(v))=\capacity_v-1\), so we must have \(y_v>0\) and \(x(\delta(v))=\capacity_v-1<\capacity_v\). We distinguish two cases.
    
    \textit{Case 1: \(v\in\vertexSet(C)\) for some \(C\in\oddcyclesx{x}\).} 
    Create a new fractional \(\capacity\)-matching \(\hat{x}\) by alternate rounding \(C\) covering \(v\). One can check that \(\hat{x}\) and \((y,z)\) satisfy complementary slackness in \(\graph\). Consequently, \(\hat{x}\) is a maximum-weight fractional \(\capacity\)-matching in \(\graph\), and given that \(x\) is basic, so is \(\hat{x}\). 
    Then, \(\gamma(\graph) \leq \Abs{\oddcyclesx{\hat{x}}} = \Abs{\oddcyclesx{x}}-1 = \gamma(\graph[\capacity_v-1]) - 1\), and hence  \eqref{eq: lb gamma(G) reduced capacity} holds.
    
    \textit{Case 2: \(v\notin\vertexSet(\oddcyclesx{x})\).}
    Since \(x\) is basic in \(\graph[\capacity_v-1]\) and \(v\) is not part of any fractional odd cycle, by \Cref{thm: basic fractional c-matching}, \(x\) is also basic in \(\graph\). 
    Then, by \Cref{thm: one step over polyhedron vertex variant}, we can move to a basic maximum-weight fractional \(\capacity\)-matching \(x^*\) in \(\graph\) in one step over the edges of \(\Pfcm(\graph)\). By \Cref{thm: adjacent vertices polytope fcm}, \(x^*=x+\alpha g\) for \(\alpha\in\CBra{\frac{1}{2},1}\) and \(g\in\cC_1\cup\cC_2\cup\cC_3\cup\cC_4\cup\cC_5\). Because \(\weight^\top x^* > \weight^\top x\), \(x^*\) cannot be feasible in \(\graph[\capacity_v-1]\), so we must have \(x^*(\delta(v)) = \capacity_v = x(\delta(v)) + 1\). Consequently, \(g \in \cC_2 \cup \cC_3 \cup \cC_4\). Then, by \Cref{thm: adjacent vertices polytope fcm}, we have \(\Abs{\oddcyclesx{x^*}} = \Abs{\oddcyclesx{x}} \pm \CBra{0,1}\). Therefore \(\Abs{\oddcyclesx{x^*}} \leq \Abs{\oddcyclesx{x}} + 1\), and consequently, \(\gamma(\graph) \leq \Abs{\oddcyclesx{x^*}} \leq \Abs{\oddcyclesx{x}}+1 = \gamma(\graph[\capacity_v-1]) + 1\), yielding \eqref{eq: lb gamma(G) reduced capacity}.
\end{proof}

We can now state the polynomial-time algorithm to solve the stabilization problem via capacity reduction.
\begin{algorithm}[ht]
    \caption{stabilization by capacity reduction}
    \label{alg: capacitated vertex-stab by reducing capacity weighted}
        initialize \(S\leftarrow\emptyset\) \\
        compute a basic maximum-weight fractional \(\capacity\)-matching \(x\) in \(\graph\) with \(\gamma(\graph)\) fractional odd cycles \(C_1,\ldots, C_{\gamma(\graph)}\), and a minimum fractional vertex cover \((y,z)\) in \(\graph\) \\
        \For{\(i=1\) to \(\gamma(\graph)\)}
        {
            \(S\leftarrow S + \arg\min_{v\in\vertexSet(C_i)} y_v\) \\
        }
        \Return \(\graph[\capacity_S-1]\)
\end{algorithm}

\begin{theorem}
    \label{thm: correctnes capacity reduction alg}
     \Cref{alg: capacitated vertex-stab by reducing capacity weighted} is a polynomial-time algorithm that computes a minimum capacity-stabilizer \(S\) for \(\graph\). Moreover: 
    \begin{itemize}
        \item[(a)] The solution \(S\) reduces the capacity of each vertex by at most one unit.  
        \item[(b)] The solution \(S\) preserves the weight of a maximum-weight matching by a factor of \(\frac{2}{3}\), i.e., \(\nu^\capacity(\graph[\capacity_S-1]) \geq \tfrac{2}{3} \nucG\).
    \end{itemize}
\end{theorem}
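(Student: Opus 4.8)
The plan is to study the integral \(\capacity\)-matching that \Cref{alg: capacitated vertex-stab by reducing capacity weighted} implicitly builds, and to prove in turn: (i) \(S\) is a capacity-stabilizer with \(\Abs{S}=\gamma(\graph)\), hence minimum by \Cref{lem: lb size of vertex stab}; (ii) the \(\gamma(\graph)\) vertices added to \(S\) are pairwise distinct, which gives (a); and (iii) the weight destroyed by the rounding is at most \(\tfrac13\nufcG\), which gives (b). Polynomiality is immediate: line~2 runs in polynomial time by \Cref{thm: min nr odd cycles} and linear programming, and the loop performs at most \(\gamma(\graph)\le\Abs{\vertexSet}\) trivial iterations.

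First I would fix the basic maximum-weight fractional \(\capacity\)-matching \(x\) with fractional odd cycles \(C_1,\dots,C_{\gamma(\graph)}\) and the minimum fractional vertex cover \((y,z)\) produced in line~2; being optimal solutions to a dual pair of LPs, \(x\) and \((y,z)\) satisfy complementary slackness \eqref{eq: compl slack}. Let \(\hat x\) be obtained by alternate rounding each \(C_i\) exposing \(v_i:=\arg\min_{v\in\vertexSet(C_i)}y_v\) (the vertices chosen in line~4). By \Cref{thm: basic fractional c-matching} the cycles \(C_i\) are vertex-disjoint and all their vertices are saturated, so in particular \(\capacity_v\ge1\) on them; hence the \(v_i\) are pairwise distinct, giving \(\Abs{S}=\gamma(\graph)\) and (a). A direct degree count then shows \(\hat x(\delta(v_i))=\capacity_{v_i}-1\) for each \(i\), while \(\hat x(\delta(v))=\capacity_v\) for the other cycle vertices and \(\hat x(\delta(v))=x(\delta(v))\) off the cycles; in particular \(\hat x\) is a feasible, and integral, \(\capacity\)-matching in \(\graph[\capacity_S-1]\).

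Next I would certify that \(\hat x\) is maximum-weight in \(\graph[\capacity_S-1]\), using the same \((y,z)\), which is still feasible there since the covering constraints do not involve capacities. Checking \eqref{eq: compl slack} between \(\hat x\) and \((y,z)\) in \(\graph[\capacity_S-1]\): every edge with \(\hat x_e>0\) already had \(x_e>0\), so the first condition transfers from \(\graph\); the saturation condition follows from the degree identities above together with complementary slackness of \(x\) and \((y,z)\) in \(\graph\); and \(z_{uv}>0\) forces \(x_{uv}=1\), whence \(uv\) lies on no \(C_i\) and \(\hat x_{uv}=x_{uv}=1\). Therefore \(\hat x\) is a maximum-weight fractional \(\capacity\)-matching in \(\graph[\capacity_S-1]\) that happens to be integral, so \(\nu^\capacity(\graph[\capacity_S-1])=\nu_f^\capacity(\graph[\capacity_S-1])=\weight^\top\hat x\); in particular \(\graph[\capacity_S-1]\) is stable, \(S\) is a capacity-stabilizer, and minimality follows from \Cref{lem: lb size of vertex stab}.

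Finally, for (b) I would evaluate the weight loss exactly, which is where the greedy choice in line~4 pays off. For an edge \(uv\in\edgeSet(C_i)\) we have \(x_{uv}=\tfrac12\), so \eqref{eq: compl slack} gives \(z_{uv}=0\) and \(\weight_{uv}=y_u+y_v\); summing over \(\edgeSet(C_i)\) shows the part of \(\weight^\top x\) carried by \(C_i\) equals \(\sum_{v\in\vertexSet(C_i)}y_v\), whereas the edges left by the rounding form a perfect matching of \(\vertexSet(C_i)\setminus\CBra{v_i}\), of weight \(\sum_{v\in\vertexSet(C_i)\setminus\CBra{v_i}}y_v\). Hence \(\weight^\top\hat x=\weight^\top x-\sum_{i=1}^{\gamma(\graph)}y_{v_i}\). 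Since every odd cycle \(C_i\) has at least three vertices and \(v_i\) minimizes \(y\) over \(\vertexSet(C_i)\), we get \(y_{v_i}\le\tfrac13\sum_{v\in\vertexSet(C_i)}y_v\); summing over \(i\) and using vertex-disjointness of the cycles, \(\capacity_v\ge1\) on them, \(z\ge0\), and strong duality \(\nufcG=\taufcG=\capacity^\top y+1^\top z\), this yields \(\sum_i y_{v_i}\le\tfrac13\capacity^\top y\le\tfrac13\nufcG\). Consequently \(\nu^\capacity(\graph[\capacity_S-1])=\weight^\top\hat x\ge\tfrac23\nufcG\ge\tfrac23\nucG\). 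The one place that needs care is the bookkeeping that moves complementary slackness and feasibility back and forth between \(\graph\) and \(\graph[\capacity_S-1]\) across the rounding; everything else is arithmetic on the fractional odd cycles.
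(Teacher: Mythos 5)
Your proposal is correct and follows essentially the same path as the paper: round each fractional odd cycle so as to expose the vertex of minimum \(y\)-value, certify optimality of the resulting integral \(\capacity\)-matching via the complementary slackness conditions against the unchanged cover \((y,z)\), invoke \Cref{lem: lb size of vertex stab} for minimality, and use vertex-disjointness plus \(y_{v_i}\le\tfrac13 y(\vertexSet(C_i))\) for parts (a) and (b). The only cosmetic difference is that you evaluate the weight loss on the primal side (showing \(\weight^\top\hat{x}=\weight^\top x-\sum_i y_{v_i}\) directly from the identity \(\weight_{uv}=y_u+y_v\) on cycle edges), whereas the paper reads off the same quantity as the dual objective \((\capacity^{S-1})^\top y+1^\top z\); these are equivalent via strong duality and lead to the identical \(\tfrac23\) bound.
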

\begin{proof}
    Let \(S=\CBra{v_1,\ldots,v_{\gamma(\graph)}}\) be the set of vertices whose capacity is reduced in \Cref{alg: capacitated vertex-stab by reducing capacity weighted}. Let \(\hat{x}\) be obtained from \(x\) by alternate rounding \(C_i\) exposing \(v_i\), for all \(i\in\CBra{1,\ldots,\gamma(\graph)}\). 
    Clearly, \(\hat{x}\) is a fractional \(\capacity\)-matching in \(\graph[\capacity_S-1]\). In addition, \((y,z)\) is still a fractional vertex cover in \(\graph[\capacity_S-1]\). One can check that they satisfy complementary slackness with respect to \(\graph[c_S-1]\). Hence, they are optimal in \(\graph[\capacity_S-1]\).
    Note that \(\hat{x}\) is an integral matching. Hence, \(\graph[\capacity_S-1]\) is stable. Moreover, \(\Abs{S}=\gamma(\graph)\), which is minimum by \Cref{lem: lb size of vertex stab}. 

    Since all cycles in \(\oddcyclesx{x}\) are vertex-disjoint, the set \(S\) is \emph{not} a multi-set. Hence, (a) holds.
    To see (b), note that since \(v_i=\arg\min_{v\in\vertexSet(C_i)} y_v\), we have
    \begin{equation*}
        y_{v_i} \leq \frac{y(\vertexSet(C_i))}{\Abs{C_i}} \leq \tfrac{1}{3} y(\vertexSet(C_i)). 
    \end{equation*}
    Then, using stability of \(\graph[\capacity_S-1]\) and optimality of \((y,z)\) in \(\graph[\capacity_S-1]\), we find
    \begin{align*}
        \nu^\capacity(\graph[\capacity_S-1]) 
            &= \tau_f^\capacity(\graph[\capacity_S-1])
            = \Par{\capacity^{S - 1}}^\top y + 1^\top z
            = \capacity^\top y - \sum_{i=1}^{\gamma(\graph)} y_{v_i}  + 1^\top z \\
            &\geq \capacity^\top y - \sum_{i=1}^{\gamma(\graph)} \tfrac{1}{3} y(\vertexSet(C_i)) + 1^\top z
            \geq \tfrac{2}{3} \Par{\capacity^\top y + 1^\top z}
            = \tfrac{2}{3} \taufcG
            \geq \tfrac{2}{3} \nucG.
    \end{align*}
    In the above chain of inequalities, we use the fact that \(\capacity_v-\frac{1}{3}\geq\frac{2}{3}\capacity_v\) for all \(v\in\vertexSet(\oddcyclesx{x})\). This is trivially true  since for all these vertices \(\capacity_v \geq1\).
\end{proof}

The previous theorem shows that there always exists a capacity-stabilizer of minimum size that preserves the total value that the players can get by a factor of \(\frac{2}{3}\). We note that, for arbitrary weighted graphs, this factor is asymptotically best possible, as shown by \cite{Koh2020Stabilizing} already in the unit-capacity case. However, for unit-capacities \emph{and} unit-weights, the authors in~\cite{Ahmadian2018Stabilizing} proved a stronger statement: namely, that inclusion-wise minimal stabilizers completely preserve the total value that the players can get (i.e., up to a factor 1). 
Using our polyhedral tools, we can show that this statement still holds in the capacitated setting (and note that it is satisfied by the solution
provided by our algorithm).

\begin{theorem}
    \label{thm: min vertex-stab size preserving}
    In \(\graphunitc\), for any inclusion-wise minimal capacity-stabilizer \(S\), we have \(\nu^\capacity(\graph[\capacity_S - 1]) = \nucG\).
\end{theorem}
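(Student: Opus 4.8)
\section*{Proof proposal}

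The plan is to induct on $\Abs{S}$. The base case $\Abs{S}=0$ is immediate: then $\graph$ is already stable and $\graph[\capacity_S-1]=\graph$, so $\nucG[\capacity_S-1]=\nucG$. For the inductive step, fix any $s\in S$ and observe that $S\setminus s$ is an inclusion-wise minimal capacity-stabilizer of $\graph[\capacity_s-1]$: applying the reductions of $S\setminus s$ to $\graph[\capacity_s-1]$ produces $\graph[\capacity_S-1]$, which is stable; while for every $t\in S\setminus s$, applying $(S\setminus s)\setminus t$ to $\graph[\capacity_s-1]$ produces $\graph[\capacity_{S\setminus t}-1]$, which is not stable since $S$ is minimal. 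Hence the inductive hypothesis, applied to $\graph[\capacity_s-1]$ together with $S\setminus s$, yields $\nucG[\capacity_S-1]=\nu^\capacity(\graph[\capacity_s-1])$ for \emph{every} $s\in S$. So it suffices to exhibit one $s\in S$ with $\nu^\capacity(\graph[\capacity_s-1])=\nucG$.

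Here I would use that, with unit weights, reducing a single capacity by one decreases $\nucG$ by at most one, and decreases it by exactly one precisely when $s$ is saturated by every maximum-weight $\capacity$-matching of $\graph$ (otherwise some maximum-weight $\capacity$-matching survives the reduction). Therefore, if some $s\in S$ is left unsaturated by a maximum-weight $\capacity$-matching of $\graph$, then $\nu^\capacity(\graph[\capacity_s-1])=\nucG$, and the displayed identity gives $\nucG[\capacity_S-1]=\nucG$, as desired. The only remaining case is the \emph{hard case}: every vertex appearing in $S$ is saturated by every maximum-weight $\capacity$-matching of $\graph$. Note this forces $\graph$ to be non-stable (since $S\neq\emptyset$ and $S$ is minimal), hence $\gamma(\graph)\geq1$, and, by the reasoning above, it forces $\nucG[\capacity_S-1]=\nucG-1$.

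The plan for the hard case is to show it cannot occur for an inclusion-wise minimal $S$, and this is where the polyhedral machinery enters. One ingredient is the unit-weight identity $\nucG=\nufcG-\tfrac12\gamma(\graph)$ (the inequality $\nucG\geq\nufcG-\tfrac12\gamma(\graph)$ follows from \Cref{thm: basic fractional c-matching} by taking a basic maximum-weight fractional $\capacity$-matching with $\gamma(\graph)$ fractional odd cycles, whose existence is guaranteed by \Cref{thm: min nr odd cycles}, and alternately rounding each odd cycle exposing a vertex, each rounding costing exactly $\tfrac12$; the converse is $\nucG\leq\nufcG$). Combining this with stability of $\graph[\capacity_S-1]$ (so $\nucG[\capacity_S-1]=\nufcG[\capacity_S-1]$) and the elementary fact that one capacity reduction decreases $\nufcG$ by at most one — and by at most $\tfrac12$ when the reduced vertex lies on a fractional odd cycle of a basic optimal solution — one can track, via \Cref{thm: one step over polyhedron vertex variant} and \Cref{thm: adjacent vertices polytope fcm}, how $\gamma$ and $\nufcG$ evolve along the reductions in $S$, aiming to conclude that in the hard case too many of these reductions are wasted as far as stabilization is concerned, so that a proper sub-multiset of $S$ already stabilizes $\graph$, contradicting minimality. (An alternative, more combinatorial attempt for this case would start from an integral maximum-weight $\capacity$-matching $M$ of the stable graph $\graph[\capacity_S-1]$, so $\Abs{M}=\nucG-1$, take a proper $M$-augmenting trail in $\graph$ via \Cref{thm: M max iff no augmenting trail}, observe it must be blocked at an endpoint in $S$ by the reduced capacity there, infer from saturation that this endpoint occurs only once in $S$, augment, and iterate together with the inductive hypothesis.) In either approach, the main obstacle is precisely this hard case: ruling out that every vertex of a minimal capacity-stabilizer is saturated by every maximum-weight $\capacity$-matching. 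Once the inductive framework above is set up the "good" case is immediate, and all the genuine work — the only place where the half-integrality of basic fractional $\capacity$-matchings and the circuit/edge structure of $\Pfcm(\graph)$ are really needed — lies in eliminating this configuration.
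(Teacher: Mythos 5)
Your inductive framework is sound: the observation that $S \setminus s$ is an inclusion-wise minimal capacity-stabilizer of $\graph[\capacity_s-1]$ is correct, and unrolling the induction does reduce the theorem to exhibiting a single $s \in S$ that some maximum-cardinality $\capacity$-matching of $\graph$ leaves unsaturated. The "good case" analysis (a single capacity reduction drops $\nucG$ by exactly one iff the vertex is saturated by \emph{every} maximum $\capacity$-matching) is also correct for unit weights.

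However, you have reduced the theorem to ruling out the "hard case," and that is precisely where the entire content of the proof lives — and you do not actually do it. Both routes you sketch are plans, not arguments. In the polyhedral sketch, the key identity $\nucG=\nufcG-\tfrac12\gamma(\graph)$ is only half-justified: you prove $\nucG\geq\nufcG-\tfrac12\gamma(\graph)$ by alternate rounding, but the remark "the converse is $\nucG\leq\nufcG$" does not establish $\nucG\leq\nufcG-\tfrac12\gamma(\graph)$ at all, and no mechanism is given for how this identity, together with tracking $\gamma$ across the reductions in $S$, would actually produce a smaller stabilizer. In the combinatorial sketch, "augment, and iterate together with the inductive hypothesis" glosses over the real work. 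The paper's proof takes a maximum $\capacity$-matching $M$ in $\graph[\capacity_S-1]$, proves via complementary slackness that $M$ is still maximum in $\graph[\capacity_{S\setminus v}-1]$, then proves via \Cref{thm: one step over polyhedron vertex variant} and \Cref{thm: adjacent vertices polytope fcm} that $M$ is still maximum in $\graph[\capacity_{S\setminus\CBra{u,v}}-1]$ — a delicate two-step circuit argument with a case analysis on $\cC_2,\cC_3,\cC_4$ — and only then derives a contradiction by analyzing how a proper $M$-augmenting trail in $\graph$ must be blocked at its endpoints. None of that substance appears in your proposal; what you have is an alternative, correct reduction of the problem, but the hard part remains entirely open.
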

\begin{proof}
    Let \(\matching\) be a maximum-cardinality \(\capacity\)-matching in \(\graph[\capacity_S - 1]\). 

    \begin{claim}
        \(\matching\) is maximum in \(\graph[\capacity_{S \setminus v} - 1]\), for any \(v \in S\).
    \end{claim}
    \begin{proof}
         For the sake of contradiction, suppose that \(\Abs{\matching} < \nu^\capacity(\graph[\capacity_{S \setminus v} - 1])\).

        Since \(S\) is a stabilizer, \(\graph[\capacity_S - 1]\) is stable, and hence there exists a fractional vertex cover \((y,z)\) that satisfies complementary slackness with \(\matching\) in \(\graph[\capacity_S - 1]\). Increasing the capacity of a vertex does not change feasibility of \((y,z)\), hence, \((y,z)\) is a fractional vertex cover in \(\graph[\capacity_{S \setminus v} - 1]\). Observe that, since the edges have unit-weight, we can assume w.l.o.g.\ that \(y \leq 1\). Then
        \begin{equation*}
            \tau_f^\capacity(\graph[\capacity_{S \setminus v} - 1])
                \leq (\capacity^{S - 1})^\top y + y_v + 1^\top z
                = \Abs{\matching} + y_v
                \leq \Abs{\matching} + 1
                \leq \nu^\capacity(\graph[\capacity_{S \setminus v} - 1]),
        \end{equation*}
        i.e., \(\graph[\capacity_{S \setminus v} - 1]\) is stable, contradicting the minimality of \(S\).
    \end{proof}

    \begin{claim}
        \(\matching\) is maximum in \(\graph[\capacity_{S \setminus \CBra{u,v}} - 1]\), for any \(u, v \in S\).
    \end{claim}
    \begin{proof}
        For the sake of contradiction, suppose that \(\Abs{\matching} < \nu^\capacity(\graph[\capacity_{S \setminus \CBra{u,v}} - 1])\).

        Let \(x\) be the indicator vector of \(\matching\), then by \Cref{thm: basic fractional c-matching}, \(x\) is basic.
        Since \(S\) is inclusion-wise minimal, \(\graph[\capacity_{S \setminus v} - 1]\) is not stable, and thus \(x\) is not a maximum fractional \(\capacity\)-matching in \(\graph[\capacity_{S \setminus v} - 1]\). We can apply \Cref{thm: one step over polyhedron vertex variant} to \(x\), \(\graph[\capacity_S - 1]\) and \(\graph[\capacity_{S \setminus v} - 1]\), and conclude that there exists a basic maximum-weight fractional \(\capacity\)-matching \(\hat{x}\) in \(\graph[\capacity_{S \setminus v} - 1]\), which is adjacent to \(x\) on \(\Pfcm\). By \Cref{thm: adjacent vertices polytope fcm}, \(\hat{x} = x + \alpha g\), where \(\alpha\in\CBra{\frac{1}{2},1}\) and \(g\in\cC_1\cup\cC_2\cup\cC_3\cup\cC_4\cup\cC_5\). Since \(\matching\) is maximum in \(\graph[\capacity_{S \setminus v} - 1]\) by the previous claim, \(\hat{x}\) cannot be integral, so \(\alpha = \frac12\), and consequently by \Cref{thm: adjacent vertices polytope fcm}, \(g\notin\cC_3\). Furthermore, the circuits in \(\cC_1 \cup \cC_5\) are not augmenting in cardinality, so \(g\in\cC_2\cup\cC_4\), and necessarily \(v\) must be the only vertex with \(g(\delta(v)) \neq 0\). Observing now that \(\Abs{\matching}\) and \(\nu^\capacity(\cdot)\) are integral, we find that \(\hat{x}\) is not a maximum fractional \(\capacity\)-matching in \(\graph[\capacity_{S \setminus \CBra{u,v}} - 1]\), since
        \begin{equation*}
            1^\top \hat{x}
                 = 1^\top x + \tfrac12
                = \Abs{\matching} + \tfrac12
                < \nu^\capacity(\graph[\capacity_{S \setminus \CBra{u,v}} - 1])
                \leq \nu_f^\capacity(\graph[\capacity_{S \setminus \CBra{u,v}} - 1]),
        \end{equation*}
       Applying \Cref{thm: one step over polyhedron vertex variant} again, we get that there exists a basic maximum-weight fractional \(\capacity\)-matching \(x^*\) in \(\graph[\capacity_{S \setminus \CBra{u,v}} - 1]\), which is adjacent to \(\hat{x}\) on \(\Pfcm\). By \Cref{thm: adjacent vertices polytope fcm}, \(x^* = \hat{x} + \beta h\), where \(\beta\in\CBra{\frac{1}{2},1}\) and \(h\in\cC_1\cup\cC_2\cup\cC_3\cup\cC_4\cup\cC_5\). As before, the circuits in \(\cC_1 \cup \cC_5\) are not augmenting in cardinality, so \(h \in \cC_2\cup\cC_3\cup\cC_4\). Since \(S\) is an inclusion-wise minimal stabilizer, \(\graph[\capacity_{S \setminus \CBra{u,v}} - 1]\) is not stable, consequently \(1^\top x^* > \nu^\capacity(\graph[\capacity_{S \setminus \CBra{u,v}} - 1])\). Using integrality of \(\nu^\capacity(\cdot)\) and half-integrality of \(x^*\) and \(\hat{x}\), this implies \(1^\top x^* \geq 1^\top \hat{x} + 1\). So we must have \(\beta = 1\), hence \(h \in \cC_2 \cup \cC_3\) by \Cref{thm: adjacent vertices polytope fcm}, and \(1^\top x^* = 1^\top \hat{x} + 1\).
        Furthermore, \(u\) must be one of the (possibly two) vertices with \(h(\delta(u))>0\). 
        
        Assume first that \(h \in \cC_2\). Then, \(u\) is the only vertex with \(h(\delta(u))>0\). However, as components of \(\beta h\) have a magnitude of one, it means that
        \(u\) is not saturated in \(\graph[\capacity_{S \setminus v} - 1]\). Therefore, \(\hat{x} + \frac{1}{2} h\) is a fractional \(\capacity\)-matching in 
        \(\graph[\capacity_{S \setminus v} - 1]\) with higher objective value than \(\hat{x}\), contradicting its optimality. 
        
        We are left with \(h \in \cC_3\). It follows that the support of \(h\) is a path \(P_h\) with endpoints \(u\) and some vertex \(t \neq u\). Note that, necessarily, \(t \neq v\), as \(v\) is saturated in \(\hat{x}\) while \(t\) is not. In particular, neither \(u\) nor \(t\) are saturated in \(\graph[\capacity_{S \setminus u} - 1]\). However, we know that \(x + \beta h\) cannot be a fractional \(\capacity\)-matching in \(\graph[\capacity_{S \setminus u} - 1]\), because  \(\matching\) is of maximum cardinality in \(\graph[\capacity_{S \setminus u} - 1]\) by the previous claim. Since \(x + \beta h\) does not violate any capacity bound in \(\graph[\capacity_{S \setminus u} - 1]\), the reason why \(x + \beta h\) is not a fractional \(\capacity\)-matching must be the fact that either \(0 \leq x + \beta h\) or \(x + \beta h \leq 1\) does \emph{not} hold. Since instead  \(0 \leq x + \alpha g + \beta h \leq 1\) holds, it follows that the supports of \(h\) and \(g\) must share some edge. Note that since all components of \(\beta h\) have a magnitude of one, the support of \(h\) cannot overlap with the cycle in the support of \(g\). Let \(\ell\) be the last vertex on the \(ut\)-path \(P_h\) that is an endpoint of a shared edge between the supports of \(h\) and \(g\). By construction, the subpath \(P_1\) from \(\ell\) to \(t\) in \(P_h\) is then an \(\matching\)-alternating path. 
        Let \(P_g\) denote the edges in the support of \(g\), and let \(P_2\) be the path from \(v\) to \(\ell\) in \(P_g\). Note that
        \(P_2\) is also an \(\matching\)-alternating path. Then, one observes that either \(P_2 \cup P_1\) is a proper \(\matching\)-augmenting \(tv\)-path in \(\graph[\capacity_{S \setminus v} - 1]\) (contradicting the previous claim), or \(P_1 \cup (P_g \setminus P_2)\) is a circuit that we can apply to (fractionally) increase the cardinality of \(x\) in \(\graph[\capacity_S - 1]\), contradicting the stability of \(\graph[\capacity_S - 1]\).
    \end{proof}

    Suppose for the sake of contradiction that \(\Abs{\matching} < \nucG\). Then there exists a proper \(\matching\)-augmenting \(st\)-trail \(T\) in \(\graph\), by \Cref{thm: M max iff no augmenting trail} (note that, possibly, \(s=t\)). 
    Since \(\matching\) is maximum in \(\graph[\capacity_S - 1]\), \(T\) cannot be proper in \(\graph[\capacity_S - 1]\), by \Cref{thm: M max iff no augmenting trail}.
    Therefore, \(\Abs{S \cap \CBra{s,t}} \geq 1\). We distinguish two cases.

    \emph{Case 1: \(\Abs{S \cap \CBra{s,t}} = 1\).} Without loss of generality, let \(s\) be the vertex whose capacity gets reduced by \(S\). 
    If \(s \neq t\), then \(\capacity_s^{S \setminus s - 1} = \capacity^{S-1}_s + 1\) and \(\capacity_t^{S \setminus s - 1} = \capacity_t\). If \(s = t\) then \(\capacity_s^{S \setminus s - 1} = \capacity_s\).
    In both cases, \(T\) is a proper \(\matching\)-augmenting trail in \(\graph[\capacity_{S \setminus s} - 1]\), contradicting the first claim.

    \emph{Case 2: \(\Abs{S \cap \CBra{s,t}} = 2\).}
    If \(s \neq t\), then \(\capacity_s^{S \setminus \CBra{s,t} - 1} = \capacity^{S-1}_s + 1\) and \(\capacity_t^{S \setminus \CBra{s,t} - 1} = \capacity^{S-1}_t + 1\). If \(s = t\) then \(\capacity_s^{S \setminus \CBra{s,s} - 1} = \capacity_s^{S-1} + 2\). 
    In both cases, \(T\) is a proper \(\matching\)-augmenting trail in \(\graph[\capacity_{S \setminus \CBra{s,t}} - 1]\), contradicting the second claim.
\end{proof}

\section{Edge-Stabilizer}\label{sec: edge-stabilizer}

In this section we state our results for the edge-stabilizer problem. 
First, we generalize a lower bound on the size of an edge-stabilizer, provided in the unit-capacity setting.

\begin{lemma}
    \label{lem: size of edge-stab}
    For every edge-stabilizer \(F\), \(\Abs{F} \geq \frac{1}{2}\gamma(\graph)\).
\end{lemma}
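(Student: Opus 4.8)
The plan is to follow the blueprint of \Cref{lem: lb size of vertex stab}, with edge removal playing the role of capacity reduction. By \Cref{prop: stable iff gammaG=0} it suffices to control how much \(\gamma(\cdot)\) can drop when a single edge is deleted; concretely, I would prove
\begin{equation}
    \tag{\(\star\star\)}
    \gamma(\graph \setminus e) \geq \gamma(\graph) - 2 \qquad \text{for every } e \in \edgeSet,
\end{equation}
with the same argument applying verbatim to any graph with capacities. Granting \((\star\star)\), the lemma follows at once: if \(F = \CBra{e_1, \dots, e_k}\) is an edge-stabilizer, then \(\graph \setminus F\) is stable, so \(\gamma(\graph \setminus F) = 0\), and deleting \(e_1, \dots, e_k\) one by one and applying \((\star\star)\) each time yields \(0 = \gamma(\graph \setminus F) \geq \gamma(\graph) - 2k\), i.e.\ \(\Abs{F} = k \geq \tfrac12 \gamma(\graph)\).

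To prove \((\star\star)\), fix \(e\) and let \(x\) be a basic maximum-weight fractional \(\capacity\)-matching in \(\graph \setminus e\) with \(\Abs{\oddcyclesx{x}} = \gamma(\graph \setminus e)\). If \(x\) is also of maximum weight in \(\graph\), then \(\gamma(\graph) \leq \Abs{\oddcyclesx{x}} = \gamma(\graph \setminus e)\) and we are done. Otherwise \(x_e = 0\); since \Cref{thm: basic fractional c-matching} characterizes basicness purely through the component values of \(x\) and the odd-cycle structure of its \(\tfrac12\)-edges --- none of which changes upon reinstating \(e\) at value \(0\) --- the vector \(x\) is still basic in \(\graph\), though not of maximum weight there. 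Hence \Cref{thm: one or two steps over polyhedron edge variant} applies: I can reach a basic maximum-weight fractional \(\capacity\)-matching \(x^*\) in \(\graph\) from \(x\) in at most two steps over the edges of \(\Pfcm(\graph)\), so that \(\gamma(\graph) \leq \Abs{\oddcyclesx{x^*}}\). If one step suffices, then \(x^* = x + \alpha g\) for some \(\alpha \in \CBra{\tfrac12, 1}\) and some circuit \(g\), and \Cref{thm: adjacent vertices polytope fcm} shows that \(\Abs{\oddcyclesx{x^*}} - \Abs{\oddcyclesx{x}}\) has absolute value at most \(2\); in particular \(\Abs{\oddcyclesx{x^*}} \leq \Abs{\oddcyclesx{x}} + 2\), giving \((\star\star)\).

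The two-step case is the crux and the point where I expect the real difficulty: composing two applications of \Cref{thm: adjacent vertices polytope fcm} naively would only bound the change by \(4\). The extra leverage comes from the last part of \Cref{thm: one or two steps over polyhedron edge variant}: the intermediate vertex \(\hat x\) has \(\hat x_e = \tfrac12\) and the final \(x^*\) has \(x^*_e = 1\), while \(x_e = 0\), so the coordinate \(e\) is driven monotonically \(0 \to \tfrac12 \to 1\). I would track the fractional odd cycle carrying \(e\). Writing \(\hat x = x + \alpha g\), the identity \(\hat x_e - x_e = \tfrac12\) forces \(\alpha = \tfrac12\) and \(g(e) = 1\), so \(e\) lies on a cycle of \(g\); since \(x_e = 0\), \(e\) lies on no fractional odd cycle of \(x\), so by the classification in \Cref{thm: adjacent vertices polytope fcm} any odd cycle of \(g\) through \(e\) must lie in \(\oddcyclesx{\hat x}\) rather than \(\oddcyclesx{x}\). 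Running through the circuit classes \(\cC_1, \cC_2, \cC_4, \cC_5\) available for \(\alpha = \tfrac12\), this pins down \(\Abs{\oddcyclesx{\hat x}} - \Abs{\oddcyclesx{x}} \in \CBra{0, 1, 2}\). For the second step, writing \(x^* = \hat x + \beta h\), the same coordinate computation gives \(\beta = \tfrac12\) and \(h(e) = 1\); but now \(\hat x_e = \tfrac12\) while \(x^*_e = 1\), so the odd cycle of \(h\) through \(e\) consists of \(\tfrac12\)-edges of \(\hat x\) that become integral in \(x^*\) --- it lies in \(\oddcyclesx{\hat x}\) and is destroyed --- and running through \(\cC_1, \cC_2, \cC_4, \cC_5\) again yields \(\Abs{\oddcyclesx{x^*}} - \Abs{\oddcyclesx{\hat x}} \in \CBra{0, -1, -2}\), i.e.\ the second step never increases the cycle count. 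Adding the two estimates gives \(\Abs{\oddcyclesx{x^*}} \leq \Abs{\oddcyclesx{x}} + 2\), hence \(\gamma(\graph) \leq \Abs{\oddcyclesx{x^*}} \leq \gamma(\graph \setminus e) + 2\), which is \((\star\star)\).

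In summary, the whole proof is the edge-removal analogue of \Cref{lem: lb size of vertex stab}, with \Cref{thm: one or two steps over polyhedron edge variant} replacing \Cref{thm: one step over polyhedron vertex variant}. The one genuinely delicate point --- and the obstacle to keeping the bound at \(2\) rather than \(4\) --- is that undoing an edge deletion may take two polyhedral steps; one must exploit that the coordinate \(e\) moves \(0 \to \tfrac12 \to 1\), so that the fractional odd cycle carrying \(e\) is created in the first step and rounded off in the second, to conclude that the second step cannot raise the number of fractional odd cycles.
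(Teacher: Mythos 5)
Your proposal is correct and follows essentially the same approach as the paper: reduce to showing \(\gamma(\graph \setminus e) \geq \gamma(\graph) - 2\), extend a basic optimum of \(\graph\setminus e\) to \(\graph\), invoke \Cref{thm: one or two steps over polyhedron edge variant}, and in the two-step case exploit the monotone progression \(x_e = 0 \to \hat x_e = \tfrac12 \to x^*_e = 1\) to argue via \Cref{thm: adjacent vertices polytope fcm} that the first step raises the cycle count by at most \(2\) and the second cannot raise it. The paper's write-up of the two-step case is a bit terser (it just reads off the options for \(g\) and \(h\) from the three scenarios ``creates/breaks/both'' a cycle), whereas you spell out the tracking of the odd cycle through \(e\); both yield the same conclusion.
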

\begin{proof}
    To prove the lemma, by \Cref{prop: stable iff gammaG=0}, it is enough to show that removing one edge decreases the number of fractional odd cycles by at most two. Therefore, from now on, we concentrate on proving the following statement:
    \begin{equation}
        \label{eq: lb gamma(G) removed edge}
        \tag{\(\diamond\)}
        \text{for all } e \in \edgeSet, \ \gamma(\graph\setminus e) \geq \gamma(\graph) - 2.
    \end{equation}
    
    Let \(x\) be a basic maximum-weight fractional \(\capacity\)-matching in \(\graph \setminus e\) with \(\gamma(\graph \setminus e)\) fractional odd cycles. Extend \(x\) to \(\graph\) by setting \(x_e = 0\). Then \(x\) is a basic fractional \(\capacity\)-matching in \(\graph\). 
    
    If \(x\) has maximum-weight in \(\graph\), then \(\gamma(\graph) \leq \Abs{\oddcyclesx{x}} = \gamma(\graph \setminus e)\), and hence \eqref{eq: lb gamma(G) removed edge} holds. 
    
    If \(x\) does not have maximum-weight in \(\graph\), we can apply \Cref{thm: one or two steps over polyhedron edge variant}: we can move to a basic maximum-weight fractional \(\capacity\)-matching \(x^*\) in \(\graph\) in at most two steps over the edges of \(\Pfcm(\graph)\), and if two steps are needed, the first one moves to a vertex with \(x_e=\frac12\), and the second one to a vertex with \(x_e=1\).  

    Suppose only one step was needed. By \Cref{thm: adjacent vertices polytope fcm}, \(x^*=x+\alpha g\) for \(\alpha\in\CBra{\frac{1}{2},1}\) and \(g\in\cC_1\cup\cC_2\cup\cC_3\cup\cC_4\cup\cC_5\). Then, by \Cref{thm: adjacent vertices polytope fcm}, we have \(\Abs{\oddcyclesx{x^*}} = \Abs{\oddcyclesx{x}} \pm \CBra{0,1,2}\). So definitely, \(\Abs{\oddcyclesx{x^*}} \leq \Abs{\oddcyclesx{x}} + 2\), and consequently, \(\gamma(\graph) \leq \Abs{\oddcyclesx{x^*}} \leq \Abs{\oddcyclesx{x}} + 2 = \gamma(\graph \setminus e) + 2\), yielding \eqref{eq: lb gamma(G) removed edge}.

    Suppose two steps were needed. Let \(\hat{x}\) be the vertex reached after the first step. By \Cref{thm: adjacent vertices polytope fcm}, \(\hat{x} = x+ \alpha g\) and \(x^* = \hat{x} + \beta h\) for \(\alpha,\beta\in\CBra{\frac{1}{2},1}\) and \(g,h\in\cC_1\cup\cC_2\cup\cC_3\cup\cC_4\cup\cC_5\). Now note that we must have \(x_e = 0\), \(\hat{x}_e = \frac12\) and \(x^*_e = 1\). So \(g\) creates at least one cycle, and \(h\) breaks at least one cycle. This gives the following options for \(g\):
    \begin{itemize}
        \item \(g \in \cC_1 \cup \cC_5\) both breaks and creates a cycle,
        \item \(g \in \cC_2 \cup \cC_4\) creates one cycle,
        \item \(g \in \cC_5\) creates two cycles.
    \end{itemize}
    Hence, \(\Abs{\oddcyclesx{\hat{x}}} = \Abs{\oddcyclesx{x}} + \CBra{0,1,2}\). Similarly for \(h\):
    \begin{itemize}
        \item \(h \in \cC_1 \cup \cC_5\) both breaks and creates a cycle,
        \item \(h \in \cC_2 \cup \cC_4\) breaks one cycle,
        \item \(h \in \cC_5\) breaks two cycles.
    \end{itemize}
    Hence, \(\Abs{\oddcyclesx{x^*}} = \Abs{\oddcyclesx{\hat{x}}} - \CBra{0,1,2}\). So definitely, \(\Abs{\oddcyclesx{x^*}} \leq \Abs{\oddcyclesx{x}} + 2\), and consequently, as before, \(\gamma(\graph) \leq \gamma(\graph \setminus e) + 2\), yielding \eqref{eq: lb gamma(G) removed edge}.    
\end{proof}

The authors of \cite{Koh2020Stabilizing} provided an example that shows this bound is tight, already in the unit-capacity setting. However, in the unit-weight, capacitated setting we can get a stronger bound. Repeating the proof above, but replacing the ``\(-2\)'' in \eqref{eq: lb gamma(G) removed edge} by ``\(-1\)'' and using that circuits in \(\cC_1 \cup \cC_5\) are not augmenting in cardinality, we can prove: 
\begin{lemma}
    In \(\graphunitc\), for every edge-stabilizer \(F\), \(\Abs{F} \geq \gamma(\graph)\).
\end{lemma}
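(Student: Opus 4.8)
The plan is to mirror the proof of \Cref{lem: size of edge-stab}, but to sharpen the ``$-2$'' in \eqref{eq: lb gamma(G) removed edge} to ``$-1$''. That is, I would prove that for every $e\in\edgeSet$ one has $\gamma(\graph\setminus e)\geq\gamma(\graph)-1$; iterating this over the edges of an edge-stabilizer $F$ and using \Cref{prop: stable iff gammaG=0} (stability $\iff\gamma=0$) then telescopes to $\gamma(\graph)\leq\gamma(\graph\setminus F)+\Abs{F}=\Abs{F}$. So fix $e$, let $x$ be a basic maximum-weight fractional $\capacity$-matching in $\graph\setminus e$ with exactly $\gamma(\graph\setminus e)$ fractional odd cycles, and extend it to $\graph$ by $x_e=0$; by \Cref{thm: basic fractional c-matching} it is still basic. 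If $x$ is already of maximum weight in $\graph$ we are done, so assume not and apply \Cref{thm: one or two steps over polyhedron edge variant} to reach a basic maximum-weight fractional $\capacity$-matching $x^*$ in $\graph$ in one or two steps along edges of $\Pfcm(\graph)$, where if two steps are used the intermediate vertex $\hat{x}$ satisfies $\hat{x}_e=\tfrac12$ and $x^*_e=1$.

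The new ingredient relative to \Cref{lem: size of edge-stab} is the cardinality bookkeeping enabled by unit weights. Every step $y=z+\lambda h$ taken by the theorem strictly increases the objective $1^\top x$, hence $1^\top h\neq0$; since the circuits in $\cC_1\cup\cC_5$ are not augmenting in cardinality (every such $h$ has $1^\top h=0$, a short check using $h(\delta(v))=0$ at each vertex of the even cycle, resp.\ of the two odd cycles and the connecting path), we get $h\in\cC_2\cup\cC_3\cup\cC_4$ for each step. In the one-step case, \Cref{thm: adjacent vertices polytope fcm} then gives $\Abs{\oddcyclesx{x^*}}=\Abs{\oddcyclesx{x}}$ (if $h\in\cC_2\cup\cC_3$, $\lambda=1$) or $\Abs{\oddcyclesx{x^*}}=\Abs{\oddcyclesx{x}}\pm1$ (if $h\in\cC_2\cup\cC_4$, $\lambda=\tfrac12$), so $\gamma(\graph)\leq\Abs{\oddcyclesx{x^*}}\leq\gamma(\graph\setminus e)+1$. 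In the two-step case: since $\hat{x}_e=\tfrac12$ while circuits are integral, the first step has $\lambda=\tfrac12$ and its circuit lies in $\cC_2\cup\cC_4$; moreover $e$ must sit on the odd cycle of that circuit rather than on an attached/connecting path (whose coordinates are $\pm2$, incompatible with $\lambda h_e=\tfrac12$), so that odd cycle is a brand-new fractional odd cycle of $\hat{x}$ and $\Abs{\oddcyclesx{\hat{x}}}=\Abs{\oddcyclesx{x}}+1$. Running the identical argument on the second step, which turns $\hat{x}_e=\tfrac12$ into $x^*_e=1$ and therefore destroys exactly the fractional odd cycle through $e$, yields $\Abs{\oddcyclesx{x^*}}=\Abs{\oddcyclesx{\hat{x}}}-1=\Abs{\oddcyclesx{x}}$, again giving $\gamma(\graph\setminus e)\geq\gamma(\graph)-1$.

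The part needing the most care is this two-step case: one must rule out that either step is of type $\cC_5$ — on its own such a step could shift the cycle count by $2$, which would only reproduce the weaker $\tfrac12\gamma(\graph)$ bound of \Cref{lem: size of edge-stab} — and then precisely locate $e$ within the surviving circuit types, so that the ``creation of a cycle at $e$'' in the first step and the ``destruction of the cycle at $e$'' in the second step each move $\Abs{\oddcyclesx{\cdot}}$ by exactly one and cancel. All of this rests on the correspondences in \Cref{thm: adjacent vertices polytope fcm} between circuit type, the value of $\lambda$, and which fractional odd cycles are gained or lost, together with the fact that $\cC_1\cup\cC_5$ circuits preserve cardinality; no genuinely new machinery beyond the tools already developed in the excerpt should be required.
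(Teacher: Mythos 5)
Your proposal is correct and follows essentially the same approach the paper sketches for this lemma: sharpen the per-edge bound in $(\diamond)$ from $-2$ to $-1$ by noting that each step in \Cref{thm: one or two steps over polyhedron edge variant} strictly increases $\1^\top x$, so (by the unit-weight objective) the circuit taken at each step cannot lie in $\cC_1\cup\cC_5$, which together with the $x_e=0\to\tfrac12\to1$ bookkeeping in the two-step case forces the net change in the number of fractional odd cycles to be at most $+1$. Your additional observation that in the two-step case the odd cycle created by $g$ and the one destroyed by $h$ both pass through $e$, yielding an exact cancellation, is a correct fleshing-out of what the paper leaves implicit.
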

The authors of \cite{Koh2020Stabilizing} also give a \(O(\Delta)\)-approximation algorithm, based on their algorithm for the vertex-stabilizer problem (instead of removing the vertices, all edges incident to those vertices are removed). Similarly, for capacitated instances we use \Cref{alg: capacitated vertex-stab by reducing capacity weighted}, and instead of reducing the capacity of the vertices, we remove all edges incident to those vertices, except the edges \(e\) such that \(e\in\matchedx{x}\). 

\begin{theorem}
    \label{thm: O(Delta) approx edge-stab}
    The edge-stabilizer problem admits an efficient \(O(\Delta)\)-ap\-prox\-i\-ma\-tion algorithm. 
\end{theorem}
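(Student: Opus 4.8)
The plan is to mimic the $O(\Delta)$-approximation of \cite{Koh2020Stabilizing} for the unit-capacity case, using \Cref{alg: capacitated vertex-stab by reducing capacity weighted} as the subroutine that produces the set $S$ of vertices whose capacity is reduced. First I would run the algorithm to obtain $S=\CBra{v_1,\ldots,v_{\gamma(\graph)}}$, one vertex $v_i$ per fractional odd cycle $C_i$ of the basic maximum-weight fractional $\capacity$-matching $x$ with $\gamma(\graph)$ fractional odd cycles. The candidate edge-stabilizer is $F:=\bigcup_{i=1}^{\gamma(\graph)}\Par{\delta(v_i)\setminus\matchedx{x}}$, i.e.\ for each $v_i$ we delete all its incident edges except those carrying $x_e=1$. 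Since each vertex has degree at most $\Delta$, we get $\Abs{F}\leq \Delta\cdot\gamma(\graph)$, and by \Cref{lem: size of edge-stab} any edge-stabilizer has size at least $\tfrac12\gamma(\graph)$, so once we show $F$ is an edge-stabilizer we immediately get a $2\Delta = O(\Delta)$ factor; the running time is polynomial by \Cref{thm: min nr odd cycles}.

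The substance of the proof is showing that $\graph\setminus F$ is stable. Here I would exhibit an explicit optimal primal-dual pair. Let $(y,z)$ be the minimum fractional vertex cover in $\graph$ computed by the algorithm, satisfying complementary slackness with $x$. Let $\hat{x}$ be obtained from $x$ by alternate rounding each $C_i$ exposing $v_i$; as observed in the proof of \Cref{thm: correctnes capacity reduction alg}, $\hat{x}$ is an integral matching and $(y,z)$ is an optimal fractional vertex cover in $\graph[\capacity_S-1]$ satisfying complementary slackness with $\hat{x}$. The key point is that every edge of $\hat{x}$ survives in $\graph\setminus F$: an edge $e\in\hat{x}$ either has $x_e=1$ (so $e\in\matchedx{x}$ and was explicitly kept), or $e\in\CBra{e_2,e_4,\ldots,e_{2k}}$ on some rounded cycle $C_i$ — such an edge is incident to a cycle vertex but not to the exposed $v_i$, and the only vertices whose incident edges were deleted are the $v_i$'s, whose relevant deleted edges are non-cycle edges or the two cycle edges at $v_i$, neither of which is in $\hat{x}$. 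Hence $\hat{x}$ is a $\capacity$-matching in $\graph\setminus F$, so $\nucG[\graph\setminus F]\geq 1^\top\hat x$. For the dual side, I would restrict $(y,z)$ to $\graph\setminus F$ and argue it is still a feasible fractional vertex cover there (fewer constraints), and that $\capacity^\top y + 1^\top z$ still equals $1^\top\hat x$: this uses that the only edges removed are incident to some $v_i$ with $y_{v_i}$ small, and that for those edges the $z$-slack can only decrease. More carefully, I would check complementary slackness between $\hat{x}$ and $(y',z')$ directly in $\graph\setminus F$, where $(y',z')$ is $(y,z)$ with $z$-components restricted to $\edgeSet\setminus F$ and possibly $y$ adjusted down on the $v_i$'s by the capacity they no longer need; the point is that after deleting $\delta(v_i)\setminus\matchedx{x}$, vertex $v_i$ has at most $\Abs{\matchedx{x}\cap\delta(v_i)}$ incident edges left, all in $\hat x$, so the corresponding capacity constraint is automatically tight for $\hat x$ and the value bookkeeping goes through. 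Stability of $\graph\setminus F$ then follows from $\nucG[\graph\setminus F] = \nufcG[\graph\setminus F]$.

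The main obstacle I anticipate is the dual/value-preservation bookkeeping: one must verify that deleting edges (as opposed to reducing capacities) does not spoil the optimality of the restricted fractional vertex cover, in particular that no removed edge $uv$ had $z_{uv}>0$ with $x_{uv}<1$ in a way that would force $\capacity^\top y+1^\top z$ to strictly exceed $1^\top\hat x$ in $\graph\setminus F$. A clean way around this is to not track the old dual at all beyond feasibility, and instead invoke $\nucG[\graph\setminus F]\leq\nufcG[\graph\setminus F]\leq\taufcG[\graph]\leq$ (value of a feasible cover in $\graph\setminus F$), then show that the integral matching $\hat x$ already attains the value of \emph{some} feasible fractional vertex cover in $\graph\setminus F$ via complementary slackness, closing the chain. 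Assembling $\Abs{F}\leq\Delta\gamma(\graph)\leq 2\Delta\cdot\Abs{F^*}$ for any optimal edge-stabilizer $F^*$ then completes the argument.
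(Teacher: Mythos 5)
Your scaffolding matches the paper's proof: same stabilizer $F=\bigcup_{v\in S}\bigl(\delta(v)\setminus\matchedx{x}\bigr)$, same cardinality bound $|F|\le\Delta\gamma(\graph)$ combined with \Cref{lem: size of edge-stab}, and the same plan of certifying stability of $\graph\setminus F$ by exhibiting the integral $\hat x$ (alternate-rounded along each $C_i$ exposing $v_i$) together with a dual cover with which it satisfies complementary slackness.

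However, there is a genuine gap in the dual construction. You propose to zero out $y_{v_i}$ for each $v_i\in S$ and otherwise keep $(y,z)$ restricted to $\edgeSet\setminus F$. That dual is \emph{not feasible} in $\graph\setminus F$. Indeed, consider a surviving edge $e=uv_i$ with $v_i\in S$; such an edge has $x_e=1$, so complementary slackness in $\graph$ gives $y_u+y_{v_i}+z_e=\weight_e$. Setting $y_{v_i}$ to $0$ without touching $z_e$ yields $y_u+0+z_e=\weight_e - y_{v_i}<\weight_e$ whenever $y_{v_i}>0$, violating the dual constraint for $e$. The fix — which the paper makes explicit — is to shift the removed $y$-values onto the $z$-variables of the surviving incident edges, i.e.\ set $\hat z_{uv}=z_{uv}+y_u\cdot\indicator[u\in S]+y_v\cdot\indicator[v\in S]$ for each $\shortEdge{u}{v}\in\edgeSet\setminus F$. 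Only with that increment does $(\hat y,\hat z)$ become a feasible cover that satisfies complementary slackness with $\hat x$ in $\graph\setminus F$. Your alternative ``clean way around'' via a chain of inequalities is also unsound: the step $\taufcG \le$ (value of a feasible cover in $\graph\setminus F$) does not hold in general, since deleting edges deletes dual constraints and can only lower $\tau_f^\capacity$, so a cover feasible for $\graph\setminus F$ need not upper-bound $\taufcG$. In short, the approach is right and identical in spirit to the paper's, but you must actually construct the modified $\hat z$; the argument does not go through by merely restricting $(y,z)$ and lowering $y$ on $S$.
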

\begin{proof}
    Let \(S=\CBra{v_1,\ldots,v_{\gamma(\graph)}}\) be the set of vertices found by \Cref{alg: capacitated vertex-stab by reducing capacity weighted}. Set \(F=\CBra{\delta(v)\setminus\matchedx{x} : v\in S}\). The size of \(F\) is at most \(\Delta\gamma(\graph)\). We claim that \(\graph\setminus F\) is stable, hence this gives us an \(O(\Delta)\)-approximation, by \Cref{lem: size of edge-stab}.

    Let \(\hat{x}\) be obtained from \(x\) by alternate rounding \(C_i\) exposing \(v_i\), for all \(i\in\CBra{1,\ldots,\gamma(\graph)}\). Note that \(\hat{x}\) is a basic fractional \(\capacity\)-matching in \(\graph\setminus F\) with \(\Abs{\oddcyclesx{\hat{x}}}=0\). Let \((\hat{y},\hat{z})\) be obtained from \((y,z)\) as follows:
    \begin{equation*}
        \hat{y}_v = \begin{cases}
            y_v & \text{if } v\notin S, \\
            0 & \text{if } v\in S,
        \end{cases}
        \quad
        \hat{z}_{uv} = \begin{cases}
            z_{uv} + y_u + y_v & \text{if } u,v\in S, \shortEdge{u}{v}\in\edgeSet\setminus F, \\
            z_{uv} + y_u & \text{if } u\in S, v\notin S, \shortEdge{u}{v}\in\edgeSet\setminus F, \\
            z_{uv} + y_v & \text{if } u\notin S, v\in S, \shortEdge{u}{v}\in\edgeSet\setminus F, \\
            z_{uv} & \text{if } u,v\notin S, \shortEdge{u}{v}\in\edgeSet\setminus F. \\
        \end{cases}
    \end{equation*}
    One can check that \(\hat{x}\) and \((\hat{y},\hat{z})\) satisfy complementary slackness in \(\graph\setminus F\).
    Consequently, \(\hat{x}\) is a basic maximum-weight fractional \(\capacity\)-matching in \(\graph\setminus F\) with \(\Abs{\oddcyclesx{\hat{x}}}=0\), and so \(\graph\setminus F\) is stable.
\end{proof}

If we restrict ourselves to unit-weight instances, like for capacity-stabilizers, we can show that any \emph{inclusion-wise minimal} edge-stabilizer preserves the size of a maximum-cardinality \(\capacity\)-matching.

\begin{theorem}
    \label{thm: min edge-stabilizer size preserving}
    In \(\graphunitc\), for any inclusion-wise minimal edge-stabilizer \(F\): \(\nu^\capacity(\graph \setminus F) = \nucG\).
\end{theorem}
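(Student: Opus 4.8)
The plan is to argue by induction on $\Abs{F}$. The case $F=\emptyset$ is trivial ($\graph$ is already stable), so the first genuine base case is $\Abs{F}=1$, say $F=\{e\}$: here $\graph$ is not stable while $\graph\setminus e$ is, so by \Cref{prop: stable iff gammaG=0} together with the unit-weight bound $\Abs{F}\geq\gamma(\graph)$ proved just above we get $\gamma(\graph)=1$. Taking a basic maximum-weight fractional $\capacity$-matching of $\graph$ with a single fractional odd cycle and rounding that cycle down gives $\nucG\geq\nufcG-\tfrac12$; combined with $\nucG<\nufcG$ (instability) and the half-integrality of $\nufcG$ (\Cref{thm: basic fractional c-matching}), this forces $\nufcG=\nucG+\tfrac12$. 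Since $\graph\setminus e$ is stable, $\nu^\capacity(\graph\setminus e)=\nu_f^\capacity(\graph\setminus e)$ is an \emph{integer} lying in $\{\nucG-\tfrac12,\nucG,\nucG+\tfrac12\}$, hence equals $\nucG$.

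For the inductive step, let $\Abs{F}\geq2$ and fix $e\in F$. Minimality of $F$ makes $F\setminus e$ an inclusion-wise minimal edge-stabilizer of $\graph\setminus e$ (it stabilizes since $(\graph\setminus e)\setminus(F\setminus e)=\graph\setminus F$ is stable, and removing any $f\in F\setminus e$ leaves $\graph\setminus(F\setminus f)$, which is not stable), so the induction hypothesis yields $\nu^\capacity(\graph\setminus F)=\nu^\capacity((\graph\setminus e)\setminus(F\setminus e))=\nu^\capacity(\graph\setminus e)$. Writing $V:=\nu^\capacity(\graph\setminus F)$, we conclude $\nu^\capacity(\graph\setminus e)=V$ for \emph{every} $e\in F$, and by the same reasoning applied to $\graph\setminus(F\setminus e)$ with the singleton stabilizer $\{e\}$ also $\nu^\capacity(\graph\setminus(F\setminus e))=V$. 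As $V\leq\nucG$ trivially, it remains to exclude $V<\nucG$. (Note that naively imitating the proof of \Cref{thm: min vertex-stab size preserving} breaks down because an augmenting trail of $\graph$ may use many edges of $F$, not just two; this is what forces the inductive/cutting argument below.)

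So assume $V<\nucG$ and let $\matching$ be a maximum $\capacity$-matching in $\graph\setminus F$, $\Abs{\matching}=V$. By \Cref{thm: M max iff no augmenting trail} there is a proper $\matching$-augmenting trail $T$ in $\graph$, and $T$ must use some $e_1\in F$ (else it would contradict maximality of $\matching$ in $\graph\setminus F$). Orient $T$ so that $e_1$ is its first $F$-edge; in a unit-weight graph a proper $\matching$-augmenting non-closed trail alternates starting and ending with non-matching edges, so $e_1$ occupies an odd position. Split $T$ at $e_1$: let $P_{\mathrm{pre}}$ be the initial sub-trail ending one edge beyond $e_1$ (it ends on a matching edge, has equally many matching and non-matching edges, and $e_1$ is its only $F$-edge), and $P_{\mathrm{suf}}$ the rest. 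One verifies that $\matching_1:=\matching\symdif E(P_{\mathrm{pre}})$ is a $\capacity$-matching with $\Abs{\matching_1}=V$ and $\matching_1\cap F=\{e_1\}$ — so $\matching_1$ is a $\capacity$-matching in $\graph\setminus(F\setminus e_1)$ — and that $P_{\mathrm{suf}}$ is a proper $\matching_1$-augmenting trail whose only $F$-edges are those of $T$ other than $e_1$. Hence $P_{\mathrm{suf}}$ lies in $\graph\setminus e_1$, so $\matching_1$ is not maximum there, i.e.\ $\nu^\capacity(\graph\setminus e_1)\geq V+1$, contradicting $\nu^\capacity(\graph\setminus e_1)=V$. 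The degenerate case where $T=(\,s;e_1;t\,)$ is a single edge is even simpler: then $\matching\cup\{e_1\}$ is a $\capacity$-matching in $\graph\setminus(F\setminus e_1)$ of size $V+1$, contradicting $\nu^\capacity(\graph\setminus(F\setminus e_1))=V$.

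The substantial work — and the only real obstacle — is the splitting step: one must check properness of $P_{\mathrm{pre}}$ and $P_{\mathrm{suf}}$ by tracking, vertex by vertex, how $\matching\symdif E(P_{\mathrm{pre}})$ changes degrees. The delicate subcase is a \emph{closed} augmenting trail $T$: properness of $T$ then only guarantees two spare units of capacity at its repeated vertex $v$, and one has to see that $P_{\mathrm{pre}}$ consumes one spare unit (through the non-matching edge at $v$) and $P_{\mathrm{suf}}$ the other, so that $P_{\mathrm{suf}}$ — now a non-closed trail from the split point back to $v$ — remains proper. Trails that revisit vertices need a little extra care, though at a revisit $P_{\mathrm{pre}}$ contributes a balanced matching/non-matching pair and the degree count is unaffected. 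Unit weights are used throughout, since it is precisely the identification of ``augmenting'' with ``strictly more non-matching than matching edges'' that makes the neutral-prefix / augmenting-suffix decomposition go through.
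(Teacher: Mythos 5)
Your base case ($\Abs{F}=1$) and the inductive setup — including the observation that, by monotonicity and the induction hypothesis, $\nu^\capacity(\graph\setminus A)=V$ for every nonempty $A\subseteq F$ — are correct, and they form a nice combinatorial alternative to the paper's polyhedral argument. But the final step of the inductive case is wrong, and this is where the whole proof breaks.

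You write: ``Hence $P_{\mathrm{suf}}$ lies in $\graph\setminus e_1$, so $\matching_1$ is not maximum there, i.e.\ $\nu^\capacity(\graph\setminus e_1)\geq V+1$.'' This is not a valid deduction. By construction $e_1\in\matching_1$ (you flipped it in when you took $\matching\symdif E(P_{\mathrm{pre}})$), so $\matching_1$ is not even a $\capacity$-matching of $\graph\setminus e_1$, and the phrase ``$\matching_1$ is not maximum in $\graph\setminus e_1$'' has no meaning. The subgraph in which $\matching_1$ does live is $\graph\setminus(F\setminus e_1)$ — but there the trail $P_{\mathrm{suf}}$ is missing, because by your own observation its only $F$-edges are exactly the remaining $F$-edges of $T$, which are all in $F\setminus e_1$. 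So you cannot augment $\matching_1$ inside $\graph\setminus(F\setminus e_1)$, and no contradiction is obtained. Applying $P_{\mathrm{suf}}$ to $\matching_1$ inside $\graph$ itself just recovers $\matching\symdif T$ of size $V+1$, which is compatible with the standing assumption $V<\nucG$ — again no contradiction.

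The obstruction is not cosmetic. From $\nu^\capacity(\graph\setminus e)=V$ for every $e\in F$ it follows that $\matching$ is already maximum in each $\graph\setminus e$, hence (by \Cref{thm: M max iff no augmenting trail}) every proper $\matching$-augmenting trail in $\graph$ must use \emph{all} of $F$. Consequently the ``nice'' case you exploit — where $T$ has only one $F$-edge, so that $\matching\symdif T$ lives in $\graph\setminus(F\setminus e_1)$ and overshoots $V$ — never occurs, and a single split of $T$ cannot isolate an $F$-free augmenting suffix. Some genuinely new idea is needed to handle the case where $T\cap F=F$; the paper sidesteps this entirely by taking a maximum $\capacity$-matching of $\graph$ (not of $\graph\setminus F$) minimizing its overlap with $F$, and then analyzing a single circuit step on $\Pfcm$ via \Cref{thm: one step over polyhedron vertex variant,thm: adjacent vertices polytope fcm}, followed by a careful case analysis on augmenting trails.
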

\begin{proof}
    Let \(\matching\) be a maximum-cardinality \(\capacity\)-matching in \(\graph\) such that the overlap with \(F\) is minimum, i.e., \(\Abs{\matching \cap F}\) is minimum. Suppose for the sake of contradiction that \(\Abs{\matching \cap F} > 0\).

    Consider the graph \(\graph \setminus \Par{F \setminus \matching}\). Since \(\matching\) is avoided, \(\matching\) is a maximum-cardinality \(\capacity\)-matching in \(\graph \setminus \Par{F \setminus \matching}\).
    Let \(x\) be the indicator vector of \(\matching\), then by \Cref{thm: basic fractional c-matching}, \(x\) is basic.
    Since \(F\) is inclusion-wise minimal, \(\graph \setminus \Par{F \setminus \matching}\) is not stable, and thus \(x\) is not a maximum fractional \(\capacity\)-matching in \(\graph \setminus \Par{F \setminus \matching}\). So there must be a vertex \(x^*\) of \(\Pfcm\), adjacent to \(x\), with \(1^\top x^* > 1^\top x\). By \Cref{thm: adjacent vertices polytope fcm}, \(x^* = x + \alpha g\), where \(\alpha\in\CBra{\frac{1}{2},1}\) and \(g\in\cC_1\cup\cC_2\cup\cC_3\cup\cC_4\cup\cC_5\). Since \(\matching\) is maximum, \(x^*\) cannot be integral, so \(\alpha = \frac12\), and consequently by \Cref{thm: adjacent vertices polytope fcm}, \(g\notin\cC_3\). Furthermore, the circuits in \(\cC_1 \cup \cC_5\) are not augmenting in cardinality, so \(g \in \cC_2\cup\cC_4\). Let \(v \in \vertexSet\) be the only vertex with \(g(\delta(v)) \neq 0\). The circuits in \(\cC_2\cup\cC_4\) are only augmenting in cardinality if \(g(\delta(v)) > 0\). Consequently, \(x^*(\delta(v)) = x(\delta(v)) + 1 = d_v^{\matching} + 1\). Clearly, we have \(x^*(\delta(v)) \leq \capacity_v\), and so, \(d_v^\matching < \capacity_v\).

    The support of \(g\) consists of a (possibly empty) path \(P_g\) and an odd cycle \(C_g\), intersecting at only one vertex, such that the sign of \(g_e\) alternates. Since it is feasible to apply \(g\) to \(x\), it must be that if \(sgn(g_e) = -\), then \(x_e = 1\) (\(e \in \matching\)), and if \(sgn(g_e) = +\), then \(x_e = 0\) (\(e \notin \matching\)). Since \(g(\delta(v)) > 0\), the first and last edge of \(g\) satisfy \(sgn(g_e) = +\). Recall that \(d_v^\matching < \capacity_v\). Consider the closed \(vv\)-walk \(W = (P_g, C_g, P_g^{-1})\) (\(P_g^{-1}\) is \(P_g\) backwards). Then \(W\) is a feasible \(\matching\)-augmenting walk in \(\graph \setminus ( F \setminus \matching )\).

    \begin{claim}
        \label{claim: intersection W F}
        \(W \cap F = \emptyset\).
    \end{claim}
    \begin{proof}
        For the sake of contradiction, suppose that there is some \(e \in W \cap F\). Then \(e \in \matching\), since all other edges of \(F\) were removed. Then there exists an even-length \(\matching\)-alternating trail \(T\): the part of \(W\) starting from \(v\), to (and including) the edge \(e\). Since \(d_v^\matching < \capacity_v\) and \(e \in \matching\), \(T\) is proper. Then \(\matching' = \matching \symdif T\) is a maximum-cardinality \(\capacity\)-matching in \(\graph\) with \(\Abs{\matching' \cap F} < \Abs{\matching \cap F}\), contradicting our assumption.
    \end{proof}
    
    By this claim, \(W\) exist in \(\graph \setminus F\). In addition, we have \(d_v^{\matching \setminus F} \leq d_v^\matching < \capacity_v\). Thus, \(W\) is a feasible \(\matching \setminus F\)-augmenting walk in \(\graph \setminus F\). By \Cref{thm: feasible augmenting walk not stable}, since \(\graph \setminus F\) is stable, it follows that \(\matching \setminus F\) is not maximum in \(\graph \setminus F\). Then, by \Cref{thm: M max iff no augmenting trail}, there is a proper \(\matching \setminus F\)-augmenting \(st\)-trail \(T\) in \(\graph \setminus F\). Note that since \(T\) is augmenting in cardinality, the first and last edge of \(T\) are not in \(\matching \setminus F\). Since \(W\) cannot exists for a maximum-cardinality \(\capacity\)-matching in \(\graph \setminus F\), by \Cref{thm: feasible augmenting walk not stable}, there must be such a trail \(T\) that either makes \(W\) infeasible, or overlaps with the edges of \(W\). Note that \(T\) is also an \(\matching\)-augmenting trail in \(\graph\), but \(\matching\) is maximum in \(\graph\), so by \Cref{thm: M max iff no augmenting trail}, \(T\) is not proper for \(\matching\) in \(\graph\).

    Since \(T\) is proper for \(\matching \setminus F\) in \(\graph \setminus F\), we have, if \(s \neq t\), \(d_s^{\matching \setminus F} \leq \capacity_s - 1\) and \(d_t^{\matching \setminus F} \leq \capacity_t - 1\), if instead \(s = t\), then \(d_s^{\matching \setminus F} \leq \capacity_s - 2\). 
    Since \(T\) is not proper for \(\matching\) in \(\graph\), we have, if \(s \neq t\), \(d_s^\matching = \capacity_s\) or \(d_t^\matching = \capacity_t\), if instead \(s = t\), then \(d_s^\matching \geq \capacity_s - 1\).
    This gives us five cases:
    \begin{enumerate}
        \item \(s \neq t\), \(d_s^\matching = \capacity_s\) and \(d_t^\matching < \capacity_t\),
        \item \(s \neq t\), \(d_s^\matching < \capacity_s\) and \(d_t^\matching = \capacity_t\),
        \item \(s \neq t\), \(d_s^\matching = \capacity_s\) and \(d_t^\matching = \capacity_t\),
        \item \(s = t\) and \(d_s^\matching = \capacity_s - 1\),
        \item \(s = t\) and \(d_s^\matching = \capacity_s\).
    \end{enumerate}
    In the first case, since \(d_s^\matching = \capacity_s\) and \(d_s^{\matching \setminus F} \leq \capacity_s - 1\), there is at least one \(e \in \delta(s) \cap \matching \cap F\). Since \(e \in F\), we have \(e \notin T\), which means \(T \cup e\) is a trail. Note that \(T \cup e\) is \(\matching\)-alternating, and has even length. Since \(d_t^\matching < \capacity_t\) and \(e \in \matching\), \(T \cup e\) is proper. (If \(T \cup e\) is closed, it is still proper, because it has even-length.) Therefore, \(\matching' = \matching \symdif \Par{T \cup e}\) is a maximum-cardinality \(\capacity\)-matching in \(\graph\) with \(\Abs{\matching' \cap F} < \Abs{\matching \cap F}\), contradicting our assumption. Similar arguments can be made in the second and fourth case. 

    For the third and fifth case we take a look at the overlap of \(T\) and \(W\). We know that \(d_v^{\matching \setminus F} \leq d_v^\matching < \capacity_v\). So, to make \(W\) infeasible in \(\graph \setminus F\), \(T\) would need to increase the degree of \(v\) with respect to the matching. This is only possible if \(v \in \CBra{s,t}\). However, \(s\) and \(t\) are both saturated by \(\matching\) (in both the third and fifth case) and \(v\) is not. Hence, \(T\) and \(W\) must overlap in at least one edge.

    We create a new walk \(W'\) by combining \(W\) and \(T\): follow \(W\) starting from \(v\) to the first edge \(e\) that is also on \(T\), traverse \(e\), then switch to \(T\) and follow \(T\) from here to one of its endpoints. Since \(T\) and \(W\) are both \(\matching\)-alternating and -augmenting, so is \(W'\). 
    The part of \(W\) from \(v\) to and including \(e\) is a trail, this part of \(W\) does not overlap with \(T\), and \(T\) is a trail, hence \(W'\) is a trail. Since \(T\) exists in \(\graph \setminus F\), and \(W \cap F = \emptyset\) by \Cref{claim: intersection W F}, we have \(W' \cap F = \emptyset\).
    
    In both the third and fifth case, the degree of both \(s\) and \(t\) with respect to \(\matching\) is strictly larger than their degree with respect to \(\matching \setminus F\). Hence, \(\delta(s) \cap \matching \cap F\) and \(\delta(t) \cap \matching \cap F\) are nonempty. Without loss of generality assume \(W'\) ends at \(s\). Let \(e \in \delta(s) \cap \matching \cap F\). Since \(e \in F\), we have \(e \notin W'\), which means \(W' \cup e\) is a trail. Note that \(W' \cup e\) is \(\matching\)-alternating, and has even length. Since \(d_v^\matching < \capacity_v\) and \(e \in \matching\), \(W' \cup e\) is proper. Therefore, \(\matching' = \matching \symdif \Par{W' \cup e}\) is a maximum-cardinality \(\capacity\)-matching in \(\graph\) with \(\Abs{\matching' \cap F} < \Abs{\matching \cap F}\), contradicting our assumption.
\end{proof}

We conclude this paper with some additional remarks. Note that, as opposed to the capacity-stabilizer case, when dealing with edge-removal operations it is always possible to stabilize a graph without decreasing the weight of a maximum-weight matching: for example, one could take any maximum-weight \(\capacity\)-matching \(\matching\) in \(\graph\) and remove all edges in \(\edgeSet \setminus \matching\). The previous theorem shows that, for unit-weight instances, this property comes essentially for free, as any edge-stabilizer of minimum cardinality will be weight-preserving. However, for general weighted instances, this is not the case and we can show that the size of a minimum weight-preserving edge-stabilizer and the size of a minimum edge-stabilizer, can differ by a very large factor, namely \(\Omega(\Abs{\vertexSet})\), already for unit-capacities.
\begin{theorem}
    There exist graphs \(\graphwunit\) where the sizes of a minimum edge-stabilizer and a minimum weight-preserving edge-stabilizer differ by \(\Omega(\Abs{\vertexSet})\).
\end{theorem}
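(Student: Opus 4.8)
\emph{Proof idea.} The plan is to build a fixed ``gadget'' graph $H$ on a constant number of vertices for which a minimum edge-stabilizer has size $1$, while every weight-preserving edge-stabilizer of $H$ has size $2$, and then let $\graph$ be the disjoint union of $k$ copies of $H$. Both stability and the value $\nucG$ decompose over connected components: $\nu^\capacity$ and $\nu_f^\capacity$ are additive over components, so $\graph$ is stable iff each component is, and $\nu^\capacity(\graph \setminus F) = \nu^\capacity(\graph)$ iff equality holds in each component. Hence a minimum edge-stabilizer of $\graph$ has size exactly $k$ and a minimum weight-preserving edge-stabilizer has size exactly $2k$. Since $\Abs{\vertexSet} = k\cdot\Abs{\vertexSet(H)} = \Theta(k)$, this gives the claimed $\Omega(\Abs{\vertexSet})$ gap.

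\emph{The gadget.} I would take $H$ on vertices $v_1,\dots,v_7$, all with capacity $1$, consisting of two ``heavy triangles'' $\CBra{v_1,v_2,v_3}$ and $\CBra{v_5,v_6,v_7}$ linked through a new vertex $v_4$ by the edges $v_3v_4$ and $v_4v_5$, with weights $\weight_{v_1v_2}=\weight_{v_6v_7}=3$, $\weight_{v_1v_3}=\weight_{v_2v_3}=\weight_{v_5v_6}=\weight_{v_5v_7}=2$, and $\weight_{v_3v_4}=\weight_{v_4v_5}=1$. The key feature is that the two triangles \emph{protect each other}: placing $\tfrac12$ on the edges of both triangles simultaneously is feasible but wasteful, since it leaves the buffer vertex $v_4$ unmatched (total weight $7$); whereas placing $\tfrac12$ on one triangle and $1$ on $v_4v_5$ and on $v_6v_7$ has weight $7.5$. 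One checks that $\nufcG(H)=7.5$ by exhibiting this fractional $\capacity$-matching together with a fractional vertex cover of the same value satisfying complementary slackness (\Cref{eq: compl slack}). Partitioning $\edgeSet(H)$ into the two triangles and $\CBra{v_3v_4,v_4v_5}$ and using that a matching contains at most one edge from each part, one gets $\nu^\capacity(H)=7$ (e.g.\ $\CBra{v_1v_2,v_3v_4,v_6v_7}$), and moreover \emph{every} maximum-weight matching of $H$ contains both $v_1v_2$ and $v_6v_7$. In particular $H$ is not stable, so a minimum edge-stabilizer has size at least $1$.

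\emph{Single-edge removals.} I would then check all eight edges. Deleting a light triangle edge ($v_1v_3$, $v_2v_3$, $v_5v_6$ or $v_5v_7$) or a path edge ($v_3v_4$ or $v_4v_5$) only moves the half-integral structure onto the other, still intact, triangle: one verifies $\nu_f^\capacity$ remains $7.5$ while $\nu^\capacity$ remains $7$, so the graph stays unstable. By contrast, deleting $v_1v_2$ (symmetrically $v_6v_7$) destroys triangle $\CBra{v_1,v_2,v_3}$ and collapses that side; an explicit matching and a witnessing fractional vertex cover give $\nu^\capacity(H\setminus v_1v_2)=\nu_f^\capacity(H\setminus v_1v_2)=6$, so $H\setminus v_1v_2$ is stable and the minimum edge-stabilizer of $H$ has size exactly $1$. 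But since $v_1v_2$ and $v_6v_7$ lie in every maximum-weight matching of $H$, we have $\nu^\capacity(H\setminus v_1v_2)=6<7=\nu^\capacity(H)$, so neither size-$1$ stabilizer is weight-preserving. Therefore every weight-preserving edge-stabilizer of $H$ has size at least $2$; and $H\setminus\CBra{v_2v_3,v_5v_6}$ is simply a path on $7$ vertices, hence stable and tree-like, with a maximum-weight matching still of weight $7$, so the bound $2$ is attained.

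\emph{Where the work lies.} The creative step is the gadget design: two mutually-protecting heavy triangles joined by a single buffer vertex, with the weight scaling chosen so that the unique single-edge stabilizers are edges forced into every maximum-weight matching. The remaining work is the finite case analysis for $H$ -- for each single edge deletion, produce a maximum-weight matching and a complementary-slack fractional vertex cover to read off $\nu^\capacity$ and $\nu_f^\capacity$ -- which is purely mechanical. The component-wise decomposition and the resulting $\Omega(\Abs{\vertexSet})$ bound are then immediate, so I do not expect a genuine obstacle beyond organizing this bookkeeping.
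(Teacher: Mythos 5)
Your proof is correct, but it takes a genuinely different route from the paper's. You construct a fixed $7$-vertex, unit-capacity gadget $H$ (two weighted triangles with weights $\{3,2,2\}$ joined through a buffer vertex $v_4$ by two unit-weight path edges), show its minimum edge-stabilizer has size $1$ while its minimum weight-preserving edge-stabilizer has size $2$, and then take $k$ disjoint copies; since $\nu^\capacity$, $\nu_f^\capacity$, and stability all decompose over connected components, the respective sizes on $\graph$ become $k$ and $2k$ with $\Abs{\vertexSet}=7k$, giving the $\Omega(\Abs{\vertexSet})$ gap. I checked your gadget claims and they hold: $\nu^\capacity(H)=7$ with every maximum-weight matching forced to use $v_1v_2$ and $v_6v_7$; $\nu_f^\capacity(H)=7.5$ is witnessed by the dual $y=(1.5,1.5,0.5,0.5,0.5,1.5,1.5)$; the only single-edge removals that stabilize are $v_1v_2$ and $v_6v_7$, each of which drops $\nu^\capacity$ to $6$; and $H\setminus\{v_2v_3,v_5v_6\}$ is a path on $7$ vertices with $\nu^\capacity=7$, so $2$ is attained. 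The paper instead exhibits a single \emph{connected} graph — an $\varepsilon$-weight pendant edge $ua$ and a fan of $k$ weighted triangles sharing a hub $a$ — where the minimum edge-stabilizer has constant size $2$ while any weight-preserving stabilizer must hit $k$ edge-disjoint augmenting walks, hence has size $\geq k$. The paper's example is therefore sharper in two respects: it is connected, and it exhibits an \emph{unbounded multiplicative} gap between the two quantities (roughly $k/2$), whereas your construction gives a fixed multiplicative factor of $2$ on a disconnected instance. Both comfortably prove the stated additive $\Omega(\Abs{\vertexSet})$ bound; your disjoint-union strategy has the virtue of reducing all the verification to a constant-size gadget, at the cost of this weaker quantitative conclusion.
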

\begin{proof}
    Let \(\graph\) and \(\matching\) be the graph and matching given in \Cref{fig: size of edge-stabilizers difference}, respectively.
    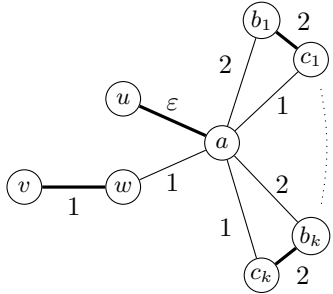
\begin{figure}[ht]
        \centering
        \begin{tikzpicture}[
            scale=1.3,
            circ/.style={
                circle,
                draw=black,
                inner sep=1pt,
                minimum size=13pt,
                font=\small
            },
            matched/.style={very thick}
        ]       
            \node[circ] (u) at (0,-.3) {\(u\)};
            \node[circ] (v) at (-1,-1.2) {\(v\)};
            \node[circ] (w) at (0,-1.2) {\(w\)};
            \node[circ] (a) at (1,-.75) {\(a\)};
            \node[circ] (b1) at (1.4,0.5) {\(b_1\)};
            \node[circ] (c1) at (1.9,0.1) {\(c_1\)};
            \node[circ] (bk) at (1.9,-1.7) {\(b_k\)};
            \node[circ] (ck) at (1.4,-2.1) {\(c_k\)};
        
            \draw[matched] (u) to node [above] {\(\varepsilon\)} (a);
            \draw[matched] (v) to node [below] {1} (w);
            \draw (w) to node [below] {1} (a);
            \draw (a) to node [anchor=south east] {2} (b1);
            \draw (a) to node [anchor=north west,yshift=5pt] {1} (c1);
            \draw[matched] (b1) to node [anchor=south west] {2} (c1);
            \draw (a) to node [anchor=south west,yshift=-5pt] {2} (bk);
            \draw (a) to node [anchor=north east] {1} (ck);
            \draw[matched] (bk) to node [anchor=north west] {2} (ck);
            \draw[dotted] (2,-.2) to [out=-80,in=80] (2,-1.4);
        \end{tikzpicture}
        \caption{Let \(k \geq 3\) be an integer, and \(0 < \varepsilon < 0.5\). The figure shows a graph \(\graphwunit\) with \(k\) \(3\)-cycles of the form \((a,b_i,c_i)\) for \(i=1,\ldots,k\). Edge weights are given next to the edges. The bold edges indicate a matching \(\matching\).}
        \label{fig: size of edge-stabilizers difference}
    \end{figure}
    Any matching on \(\graph\) can contain (i) at most one edge of each \(3\)-cycle, which all have a weight of at most \(2\), (ii) at most one of the edges incident to \(w\), which both have weight \(1\), and (iii) the edge \(\shortEdge{u}{a}\) of weight \(\varepsilon\). This gives us \(\nucG \leq 2k + 1 + \varepsilon\). In particular, \(\matching\) is the only matching in \(\graph\) that attains this weight. So, \(\matching\) is the unique maximum-weight matching in \(\graph\). There are \(k\) feasible \(\matching\)-augmenting walks: \((u, a, b_i, c_i,a,u)\) for \(i = 1, \ldots, k\), hence \(\graph\) is not stable, by \Cref{thm: feasible augmenting walk not stable}.

    We can stabilize \(\graph\) by removing only two edges: \(\shortEdge{u}{a}\) and \(\shortEdge{v}{w}\). We verify the stability by giving a matching and a vertex-cover of the same value. Let the matching be \(\shortEdge{w}{a}\cup\CBra{\shortEdge{b_i}{c_i} : i\in\CBra{1,\ldots,k}}\). The weight of this matching is \(2k+1\). Note that the weight decreased by \(\varepsilon\). We set the vertex-cover \(y\) equal to \(0\) for \(u,v,w\) and equal to \(1\) for \(a\) and all \(k\) pairs \(b_i,c_i\). The value of this vertex-cover is \(2k+1\). So indeed, the graph is stable. This shows that the size of a minimum edge-stabilizer is at most two.

    As mentioned, \(\matching\) is the unique maximum-weight matching in \(\graph\). Consequently, any weight-preserving edge-stabilizer has to avoid \(\matching\). Because the \(k\) feasible \(\matching\)-augmenting walks are edge-disjoint with respect to the edges from \(\edgeSet\setminus\matching\), any weight-preserving edge-stabilizer has to remove at least \(k\) edges. 
    
    We have \(\Abs{\vertexSet}=3k+3\), or equivalently, \(k = \Abs{\vertexSet}/3 - 1\). The difference in size of a minimum edge-stabilizer and a minimum weight-preserving edge-stabilizer in this graph is at least \(k-2\). So the difference in sizes is \(\Omega(k)=\Omega(\Abs{\vertexSet})\).  
\end{proof}

Theorem 9 of \cite{Koh2020Stabilizing} shows that there is no constant factor approximation for the minimum edge-stabilizer problem in \(\graphwunit\), unless \(\Po=\NP\). We note that their proof actually also shows inapproximability for the minimum weight-preserving edge-stabilizer problem.

Our last remark is the following. For the vertex-stabilizer problem, the setting of removing the vertices completely has been analyzed in~\cite{gerstbrein2022stabilization}, and that of reducing the capacity has been investigated in this paper.  
One may wonder whether a similar setting of ``partial reduction'' also makes sense for the edge-stabilizer problem: what if one reduces the weight of the edges, instead of completely removing them? 
The next theorem suggests that reducing edge weights might not be that interesting from a bargaining perspective: if the weight of an edge is decreased, it will not be part of any maximum-weight \(\capacity\)-matching, so one just as well could have removed the edge.

\begin{theorem}
    Let \(\graphwc\) be a graph, and let \(\widetilde{\weight} \in \R_{\geq0}^\edgeSet\) such that \((\graph,\weight-\widetilde{\weight},\capacity)\) is stable and \(1^\top \widetilde{\weight}\) is minimum. Let \(f \in \edgeSet\) such that \(\widetilde{\weight}_f > 0\), then \(f\) is not part of any maximum-weight \(\capacity\)-matching in \((\graph,\weight-\widetilde{\weight},\capacity)\).
\end{theorem}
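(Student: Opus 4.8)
The plan is to argue by contradiction and exploit the minimality of $1^\top\widetilde{\weight}$: assuming some edge $f=uv$ with $\widetilde{\weight}_f>0$ lies in a maximum-weight $\capacity$-matching of $(\graph,\weight-\widetilde{\weight},\capacity)$, I would ``give back'' a small amount of weight to $f$ and show the instance stays stable, contradicting minimality. Throughout write $\weight':=\weight-\widetilde{\weight}$.

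First I would record what stability of $(\graph,\weight',\capacity)$ tells us about the assumed matching $\matching\ni f$. Since the instance is stable, a maximum-weight $\capacity$-matching w.r.t.\ $\weight'$ has the same value as a maximum-weight fractional $\capacity$-matching w.r.t.\ $\weight'$; hence the indicator vector $x$ of $\matching$ is an optimal solution of \eqref{eq:fracmatching} for the weights $\weight'$, and by LP duality there is an optimal fractional vertex cover $(y,z)$ of \eqref{eq:fractional_vertex_cover} (for $\weight'$) satisfying the complementary slackness conditions \eqref{eq: compl slack} together with $x$. The only consequence I actually need is that, because $x_f=1\neq 0$, the first condition in \eqref{eq: compl slack} forces $y_u+y_v+z_{uv}=\weight'_{uv}$.

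Next I would perturb. Fix any $\varepsilon$ with $0<\varepsilon\leq\widetilde{\weight}_f$ and define $\widetilde{\weight}'$ by $\widetilde{\weight}'_f:=\widetilde{\weight}_f-\varepsilon$ and $\widetilde{\weight}'_e:=\widetilde{\weight}_e$ for $e\neq f$; then $\widetilde{\weight}'\in\R_{\geq0}^\edgeSet$ and $1^\top\widetilde{\weight}'=1^\top\widetilde{\weight}-\varepsilon<1^\top\widetilde{\weight}$. Setting $\weight'':=\weight-\widetilde{\weight}'$, we have $\weight''_f=\weight'_f+\varepsilon$ and $\weight''_e=\weight'_e$ for $e\neq f$. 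I claim $(\graph,\weight'',\capacity)$ is stable, which contradicts minimality of $1^\top\widetilde{\weight}$ and finishes the proof. To prove the claim I exhibit a $\capacity$-matching and a fractional vertex cover with equal $\weight''$-objective value. On the primal side I take $\matching$ itself, whose $\weight''$-value is $\weight'(\matching)+\varepsilon$ since $f\in\matching$. On the dual side I keep $y$ and set $z''_{uv}:=z_{uv}+\varepsilon$ and $z''_e:=z_e$ for $e\neq f$; only the constraint of $f$ changes, and it now reads $y_u+y_v+z''_{uv}=\weight'_{uv}+\varepsilon=\weight''_{uv}$, so $(y,z'')$ is feasible for \eqref{eq:fractional_vertex_cover} under $\weight''$ with objective $\capacity^\top y+1^\top z+\varepsilon$. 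By strong LP duality for $\weight'$ we have $\weight'(\matching)=\capacity^\top y+1^\top z$, so the two $\weight''$-objectives coincide; by weak duality this forces the maximum-weight $\capacity$-matching and the maximum-weight fractional $\capacity$-matching w.r.t.\ $\weight''$ to have the same value, i.e.\ $(\graph,\weight'',\capacity)$ is stable.

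I do not expect a real obstacle in this argument. The one step that deserves care is the passage from ``$\matching$ is a maximum-weight $\capacity$-matching'' to ``the indicator vector of $\matching$ admits a complementary-slack optimal dual'': this is precisely where stability, i.e.\ integrality of the LP optimum, enters, and after that everything is bookkeeping, provided $\varepsilon$ is chosen small enough ($\varepsilon\leq\widetilde{\weight}_f$) to keep $\widetilde{\weight}'$ non-negative.
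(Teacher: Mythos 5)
Your proof is correct and follows essentially the same approach as the paper: start from a complementary-slack primal--dual pair $(x,(y,z))$ witnessing stability of $(\graph,\weight',\capacity)$, restore some weight $\varepsilon$ on $f$, and absorb it into $z_f$ so that the same $\matching$ and the perturbed dual still have matching objective values. The only superficial differences are that the paper takes $\varepsilon=\widetilde{\weight}_f$ directly (setting the reduction on $f$ to zero) and verifies stability by re-checking complementary slackness, while you keep $\varepsilon$ as a parameter and verify stability by comparing primal and dual objectives; both are equivalent. The paper also appends a brief remark handling a variant formulation without $z$-variables for the unit-capacity case, which you do not address, but this is not a gap in your argument since you work with the paper's stated LP \eqref{eq:fractional_vertex_cover}, which always includes $z$.
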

\begin{proof}
    For the sake of contradiction let \(\matching\) be a maximum-weight \(\capacity\)-matching in \((\graph,\weight-\widetilde{\weight},\capacity)\) with \(f \in \matching\). Let \(x\) be the indicator vector of \(\matching\). Since \((\graph,\weight-\widetilde{\weight},\capacity)\) is stable, there is a fractional vertex cover \((y,z)\) that satisfies complementary slackness with \(x\):
    \begin{align*}
        x_e = 0 &\lor y_u+y_v+z_e=\weight_e-\widetilde{\weight}_e &\forall e=uv \in \edgeSet \\
        y_v = 0 &\lor x(\delta(v)) = \capacity_v &\forall v \in \vertexSet \\
        z_e = 0 &\lor x_e = 1 &\forall e \in \edgeSet
    \end{align*}
    Now let \(x'=x\), \(y'=y\), \(z'_f=z_f+\widetilde{\weight}_f\), \(z'_e=z_e\) for all \(e \neq f\), \(\weight'_f = 0\), and \(\weight'_e=\widetilde{\weight}_e\) for all \(e \neq f\). One can check that \(x'\) and \((y',z')\) are a \(\capacity\)-matching and fractional vertex cover in \((\graph,\weight-\weight',\capacity)\), respectively, and that they satisfy complementary slackness. Hence, \((\graph,\weight-\weight',\capacity)\) is stable, but \(1^\top \weight' < 1^\top\widetilde{\weight}\), contradicting that \(1^\top \widetilde{\weight}\) is minimum.
    
    In case of unit-capacities there are no \(z\) variables in the dual. To make the proof still valid, it is enough to increase the \(y\)-value of one of the vertices of \(f\) with \(\widetilde{\weight}_f\).
\end{proof}

\appendix
\section{Reduction to Unit-capacity Instances}\label{appx: reduction}

%%%%%%%%%%%%%%%%%
%% reduction 1 %%
%%%%%%%%%%%%%%%%%

One can transform a capacitated graph \(\graphwc\) to a unit-capacity graph by replacing each vertex \(v\) by \(\capacity_v\) copies, and each edge \(uv\) by the complete bipartite graph between \(u_1,\ldots, u_{\capacity_u}\) and \(v_1,\ldots,v_{\capacity_v}\). However, this can transform unstable graphs to stable graphs. 
For example, the graph in \Cref{subfig: simple reduction capacitated} is unstable, but the unit-capacity graph that we can obtain from it using the described transformation, given in \Cref{subfig: simple reduction unit-capacity}, is stable. This is due to the edges between the vertex copies of \(v\) and \(x\): in the capacitated instance the edge \(vx\) can be matched at most once, but in the unit-capacity instance two copies of the edge can be matched, e.g., \(v_1 x_1\) and \(v_2 x_2\).
\begin{figure}[ht]
    \centering
    \begin{subfigure}{.45\textwidth}
        \centering
        \begin{tikzpicture}[
            scale=.6,
            circ/.style={
                circle,
                draw=black,
                inner sep=2pt
            }
        ]       
            \node[circ,label={above:\(t\)}] (u) at (0,0) {};
            \node[circ,label={above:\(u\)}] (v) at (2,0) {};
            \node[circ,label={above:\(v\)},label={below left:\(2\)}] (w) at (4,0) {};
            \node[circ,label={above:\(x\)},label={below:\(2\)}] (x) at (6,0) {};
            \node[circ,label={above:\(y\)}] (y) at (7.5,1) {};
            \node[circ,label={below:\(z\)}] (z) at (7.5,-1) {};
            \node[circ,label={below:\(c\)}] (c) at (4,-2) {};
            \node[circ,label={below:\(b\)},label={above:\(2\)}] (b) at (2,-2) {};
            \node[circ,label={below:\(a\)}] (a) at (0,-2) {};
        
            \draw (u) to (v);
            \draw (v) to (w);
            \draw (w) to (x);
            \draw (x) to (y);
            \draw (x) to (z);
            \draw (y) to (z);
            \draw (a) to (b);
            \draw (b) to (c);
            \draw (c) to (w);
        \end{tikzpicture}
        \caption{Graph \(\graphunitc\) with \(\capacity_v=\capacity_x=\capacity_b = 2\) and \(\capacity=1\) for the other vertices.}
        \label{subfig: simple reduction capacitated}
    \end{subfigure}
    \begin{subfigure}{.45\textwidth}
        \centering
        \begin{tikzpicture}[
            scale=.6,
            circ/.style={
                circle,
                draw=black,
                inner sep=2pt
            }
        ]       
            \node[circ,label={above:\(t_1\)}] (u) at (0,0) {};
            \node[circ,label={above:\(u_1\)}] (v) at (2,0) {};
            \node[circ,label={above:\(v_1\)}] (w1) at (4,.5) {};
            \node[circ,label={left: \(v_2\)}] (w2) at (4,-.5) {};
            \node[circ,label={above:\(x_1\)}] (x1) at (6,.5) {};
            \node[circ,label={below:\(x_2\)}] (x2) at (6,-.5) {};
            \node[circ,label={above:\(y_1\)}] (y) at (7.5,1) {};
            \node[circ,label={below:\(z_1\)}] (z) at (7.5,-1) {};
            \node[circ,label={below:\(c_1\)}] (c) at (4,-2) {};
            \node[circ,label={above:\(b_1\)}] (b1) at (2,-1.5) {};
            \node[circ,label={below:\(b_2\)}] (b2) at (2,-2.5) {};
            \node[circ,label={below:\(a_1\)}] (a) at (0,-2) {};
        
            \draw (u)  to (v);
            \draw (v)  to (w1);
            \draw (v)  to (w2);
            \draw (w1) to (x1);
            \draw (w1) to (x2);
            \draw (w2) to (x1);
            \draw (w2) to (x2);
            \draw (x1) to (y);
            \draw (x2) to (y);
            \draw (x1) to (z);
            \draw (x2) to (z);
            \draw (y)  to (z);
            \draw[bend left=30] (w1) to (c);
            \draw (w2) to (c);
            \draw (b1) to (c);
            \draw (b2) to (c);
            \draw (b1) to (a);
            \draw (b2) to (a);
        \end{tikzpicture}
        \caption{The corresponding unit-capacity graph.}
        \label{subfig: simple reduction unit-capacity}
    \end{subfigure}
    \caption{Example of reducing a capacitated graph to a unit-capacity graph, by replacing each vertex \(v\) by \(\capacity_v\) copies.}
    \label{fig: simple reduction}
\end{figure}
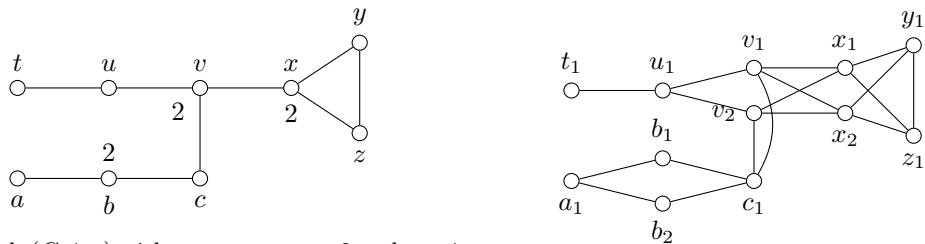

%%%%%%%%%%%%%%%%%
%% reduction 2 %%
%%%%%%%%%%%%%%%%%

A more sophisticated way to transform a capacitated graph \(\graphwc\) into a unit-capacity graph is the following:
First we reduce the problem to a perfect \(\capacity\)-matching problem, by taking two copies of the graph, and adding \(\capacity_v\) parallel edges between the two copies of each vertex \(v\), of weight zero. Then, we replace each edge \(e=uv\) by two new vertices: \(e_u\), \(e_v\), and three new edges: \(u e_v\), \(e_v e_u\), \(e_u v\) of weight \(\frac12 \weight_e\), \(0\), and \(\frac12 \weight_e\), respectively. Finally, we replace each (original) vertex \(v\) by \(\capacity_v\) copies: \(v_1,\ldots,v_{\capacity_v}\), and each edge \(v e_u\) by an edge from each of the \(\capacity_v\) copies of \(v\) to \(e_u\), all of the same weight as \(v e_u\) (so \(\frac12 \weight_e\)).

The problems with this reduction are twofold. First, if we blindly apply the unit-capacity algorithm for the vertex-stabilizer problem, we might reduce the capacity of a vertex that does not have the lowest \(y\)-value of its fractional odd cycle. As a result more weight is lost then necessary, and the current proof of the weight-preserving factor of \(\frac23\) does not work anymore (because it depends on vertices with the lowest \(y\)-value being used). 
Second, edge-stabilizers are not preserved.
We demonstrate the first  problem using the following example.

Consider the graph \(\graphwc\) in \Cref{fig: reduction half w_e}. The edge weights and vertex capacities are given in \Cref{subfig: reduction half w_e weights,subfig: reduction half w_e capacities}, respectively. \Cref{subfig: reduction half w_e frac matching} displays a maximum-weight fractional \(\capacity\)-matching (with \(\Abs{\oddcyclesx{x}} = \gamma(\graph)\)) and minimum fractional vertex cover. 
\Cref{subfig: reduction half w_e auxiliary} gives part of the unit-capacity auxiliary graph, namely one of the graph copies. The full auxiliary graph is this graph twice, connected by edges consisting of three parts, where all three parts have weight zero, the vertices that split the parts have \(y\)-value zero, and such that the middle part of those edges is matched. The given fractional matching and vertex cover are optimal. Optimality of the fractional (\(\capacity\)-)matchings and vertex covers given in these figures can be verified by checking complementary slackness.
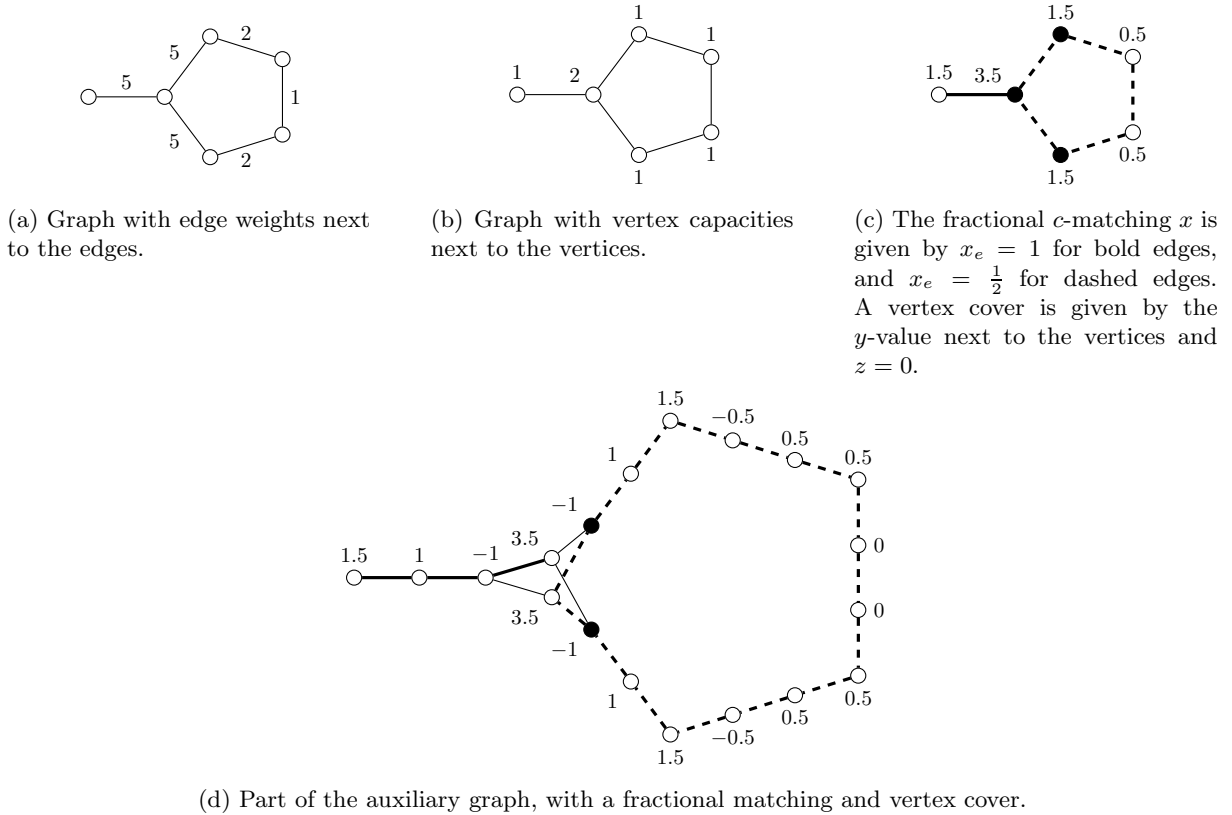
\begin{figure}[ht]
    \centering
    \begin{subfigure}[t]{.3\textwidth}
        \centering
        \begin{tikzpicture}[
            scale=.5,
            circ/.style={
                circle,
                draw=black,
                inner sep=2.5pt
            },
            every node/.style={scale=0.8}
        ]
            \node[circ] (a) at (0,0) {};
            \node[circ] (b) at (2,0) {};
            \node[circ] (c) at (3.2,1.6) {};
            \node[circ] (d) at (5.1,1) {};
            \node[circ] (e) at (5.1,-1) {};
            \node[circ] (f) at (3.2,-1.6) {};
    
            \draw (a) to node [above] {\(5\)} (b);
            \draw (b) to node [anchor=south east] {\(5\)} (c);
            \draw (c) to node [above] {\(2\)} (d);
            \draw (d) to node [right] {\(1\)} (e);
            \draw (e) to node [below] {\(2\)} (f);
            \draw (f) to node [anchor=north east] {\(5\)} (b);

            % to make all figures same size
            \path (-.5,2.5) -- (5.5,2.5) -- (5.5,-2.5) -- (-.5,-2.5) -- (-.5,2.5);
        \end{tikzpicture}
        \caption{Graph with edge weights next to the edges.}
        \label{subfig: reduction half w_e weights}
    \end{subfigure}
    \hfill
    \begin{subfigure}[t]{.3\textwidth}
        \centering
        \begin{tikzpicture}[
            scale=.5,
            circ/.style={
                circle,
                draw=black,
                inner sep=2.5pt
            },
            every node/.style={scale=0.8}
        ]
            \node[circ,label={above:\(1\)}] (a) at (0,0) {};
            \node[circ,label={north west:\(2\)}] (b) at (2,0) {};
            \node[circ,label={above:\(1\)}] (c) at (3.2,1.6) {};
            \node[circ,label={above:\(1\)}] (d) at (5.1,1) {};
            \node[circ,label={below:\(1\)}] (e) at (5.1,-1) {};
            \node[circ,label={below:\(1\)}] (f) at (3.2,-1.6) {};
    
            \draw (a) to (b);
            \draw (b) to (c);
            \draw (c) to (d);
            \draw (d) to (e);
            \draw (e) to (f);
            \draw (f) to (b);

            % to make all figures same size
            \path (-.5,2.5) -- (5.5,2.5) -- (5.5,-2.5) -- (-.5,-2.5) -- (-.5,2.5);
        \end{tikzpicture}
        \caption{Graph with vertex capacities next to the vertices.}
        \label{subfig: reduction half w_e capacities}
    \end{subfigure}
    \hfill
    \begin{subfigure}[t]{.3\textwidth}
        \centering
        \begin{tikzpicture}[
            scale=.5,
            circ/.style={
                circle,
                draw=black,
                inner sep=2.5pt
            },
            matched/.style={very thick},
            fracMatched/.style={dashed,very thick},
            every node/.style={scale=0.8}
        ]
            \node[circ,label={above:\(1.5\)}] (a) at (0,0) {};
            \node[circ,fill,label={north west:\(3.5\)}] (b) at (2,0) {};
            \node[circ,fill,label={above:\(1.5\)}] (c) at (3.2,1.6) {};
            \node[circ,label={above:\(0.5\)}] (d) at (5.1,1) {};
            \node[circ,label={below:\(0.5\)}] (e) at (5.1,-1) {};
            \node[circ,fill,label={below:\(1.5\)}] (f) at (3.2,-1.6) {};
    
            \draw[matched] (a) to (b);
            \draw[fracMatched] (b) to (c);
            \draw[fracMatched] (c) to (d);
            \draw[fracMatched] (d) to (e);
            \draw[fracMatched] (e) to (f);
            \draw[fracMatched] (f) to (b);

            % to make all figures same size
            \path (-.5,2.5) -- (5.5,2.5) -- (5.5,-2.5) -- (-.5,-2.5) -- (-.5,2.5);
        \end{tikzpicture}
        \caption{The fractional \(\capacity\)-matching \(x\) is given by \(x_e=1\) for bold edges, and \(x_e=\frac12\) for dashed edges. A vertex cover is given by the \(y\)-value next to the vertices and \(z=0\).}
        \label{subfig: reduction half w_e frac matching}
    \end{subfigure}
    \begin{subfigure}[t]{\textwidth}
        \centering
        \begin{tikzpicture}[
            scale=1.3,
            circ/.style={
                circle,
                draw=black,
                inner sep=2.5pt
            },
            matched/.style={very thick},
            fracMatched/.style={dashed,very thick},
            every node/.style={scale=0.8}
        ]
            \node[circ,label={above:\(1.5\)}] (a) at (0,0) {};
            \node[circ,label={above:\(1\)}] (a_b) at (.66,0) {};
            \node[circ,label={above:\(-1\)}] (b_a) at (1.33,0) {};
            \node[circ,label={north west:\(3.5\)}] (b1) at (2,.2) {};
            \node[circ,label={south west:\(3.5\)}] (b2) at (2,-.2) {};
            \node[circ,label={north west:\(-1\)},fill] (b_c) at (2.4,0.53) {};
            \node[circ,label={north west:\(1\)}] (c_b) at (2.8,1.06) {};
            \node[circ,label={above:\(1.5\)}] (c) at (3.2,1.6) {};
            \node[circ,label={above:\(-0.5\)}] (c_d) at (3.83,1.4) {};
            \node[circ,label={above:\(0.5\)}] (d_c) at (4.46,1.2) {};
            \node[circ,label={above:\(0.5\)}] (d) at (5.1,1) {};
            \node[circ,label={right:\(0\)}] (d_e) at (5.1,0.33) {};
            \node[circ,label={right:\(0\)}] (e_d) at (5.1,-0.33) {};
            \node[circ,label={below:\(0.5\)}] (e) at (5.1,-1) {};
            \node[circ,label={below:\(0.5\)}] (e_f) at (4.46,-1.2) {};
            \node[circ,label={below:\(-0.5\)}] (f_e) at (3.83,-1.4) {};
            \node[circ,label={below:\(1.5\)}] (f) at (3.2,-1.6) {};
            \node[circ,label={south west:\(1\)}] (f_b) at (2.8,-1.06) {};
            \node[circ,label={south west:\(-1\)},fill] (b_f) at (2.4,-0.53) {};
    
            \draw[matched] (a) to (a_b);
            \draw[matched] (a_b) to (b_a);
            \draw[matched] (b_a) to (b1);
            \draw (b_a) to (b2);
            \draw (b1) to (b_c);
            \draw[fracMatched] (b2) to (b_c);
            \draw[fracMatched] (b_c) to (c_b);
            \draw[fracMatched] (c_b) to (c);
            \draw[fracMatched] (c) to (c_d);
            \draw[fracMatched] (c_d) to (d_c);
            \draw[fracMatched] (d_c) to (d);
            \draw[fracMatched] (d) to (d_e);
            \draw[fracMatched] (d_e) to (e_d);
            \draw[fracMatched] (e_d) to (e);
            \draw[fracMatched] (e) to (e_f);
            \draw[fracMatched] (e_f) to (f_e);
            \draw[fracMatched] (f_e) to (f);
            \draw[fracMatched] (f) to (f_b);
            \draw[fracMatched] (f_b) to (b_f);
            \draw (b_f) to (b1);
            \draw[fracMatched] (b_f) to (b2);
        \end{tikzpicture}
        \caption{Part of the auxiliary graph, with a fractional matching and vertex cover.}
        \label{subfig: reduction half w_e auxiliary}
    \end{subfigure}
    \caption{Example of second reduction.}
    \label{fig: reduction half w_e}
\end{figure}

If we blindly apply the unit-capacity algorithm to the auxiliary graph, we would remove one vertex per fractional odd cycle, and in particular the one with the lowest \(y\)-value. The two black vertices in \Cref{subfig: reduction half w_e auxiliary} are the potential vertices to be removed (in both graph copies). Mapping this vertex-stabilizer to a capacity-stabilizer in \(\graph\), we would reduce the capacity of one of the black vertices in \Cref{subfig: reduction half w_e frac matching}. None of these vertices have the lowest \(y\)-value on the fractional odd cycle. 
Let \(\widetilde{\graph}\) be the stabilized (capacitated) graph. Then \(\nucG=12\) and \(\nu^\capacity(\widetilde{\graph}) = 11\) or \(9\). Both possibilities satisfy \(\nu^\capacity(\widetilde{\graph}) \geq \frac23 \nucG = 8\), but they do reduce the weight, while removing one of the white vertices (on the fractional odd cycle) would result in a stable graph \(\widehat{\graph}\) with \(\nu^\capacity(\widehat{\graph}) = 12 = \nucG\). This also holds true in general: the higher the \(y\)-value of the vertices in the capacity-stabilizer, the more weight is lost.

%%%%%%%%%%%%%%%%%
%% reduction 3 %%
%%%%%%%%%%%%%%%%%

Another way to transform a capacitated graph \(\graphwc\) into a unit-capacity graph is the following \cite{SchrijverCO}: We replace each edge \(e=uv\) by two new vertices: \(e_u\), \(e_v\), and three new edges: \(u e_v\), \(e_v e_u\), \(e_u v\) all of weight \(\weight_e\). We replace each (original) vertex \(v\) by \(\capacity_v\) copies: \(v_1, \ldots, v_{\capacity_v}\), and each edge \(v e_u\) by an edge from each of the \(\capacity_v\) copies of \(v\) to \(e_u\), all of the same weight as \(v e_u\) (so \(\weight_e\)).
(Note that for this reduction it is not necessary to go to a perfect matching instance: if there is a matching \(\matching\) such that \(u e_v \in \matching\) and \(e_v e_u, e_u v \notin \matching\), then \(\matching' = \matching \setminus u e_v \cup e_v e_u\) has the same weight as \(\matching\). Hence, without loss of generality, we can assume that for each original edge \(e=uv\) we have either \(u e_v, e_u v \in \matching\) and \(e_v e_u \notin \matching\), or \(e_v e_u \in \matching\) and \(u e_v, e_u v \notin \matching\).)

The problem with this reduction is that an edge-stabilizer in the unit-capacity auxiliary graph might not map to an edge-stabilizer in the original graph. We demonstrate this problem using an example.

Consider the graph \(\graphunitc\) in \Cref{fig: reduction 3 times w_e}. The vertex capacities are given in \Cref{subfig: reduction 3 times w_e capacities}. \Cref{subfig: reduction 3 times w_e frac matching,subfig: reduction 3 times w_e edge removed,subfig: reduction 3 times w_e auxiliary,subfig: reduction 3 times w_e auxiliary edge removed} all give a maximum-weight fractional (\(\capacity\)-)matching (with \(\Abs{\oddcyclesx{x}} = \gamma(\graph)\)), and a minimum fractional vertex cover. Optimality can be verified by checking the complementary slackness conditions.
\begin{figure}[ht]
    \centering
    \begin{subfigure}[t]{.3\textwidth}
        \centering
        \begin{tikzpicture}[
            scale=.5,
            circ/.style={
                circle,
                draw=black,
                inner sep=2.5pt
            },
            every node/.style={scale=0.8}
        ]
            \node[circ,label={left:\(1\)}] (a) at (-1.73,1) {};
            \node[circ,label={left:\(1\)}] (b) at (-1.73,-1) {};
            \node[circ,label={above:\(2\)}] (c) at (0,0) {};
            \node[circ,label={above:\(1\)}] (d) at (2,0) {};
            \node[circ,label={right:\(1\)}] (e) at (3.73,1) {};
            \node[circ,label={right:\(1\)}] (f) at (3.73,-1) {};
    
            \draw (a) to (b);
            \draw (a) to (c);
            \draw (b) to (c);
            \draw (c) to (d);
            \draw (d) to (e);
            \draw (d) to (f);
            \draw (e) to (f);

            % to make all figures same size
            \path (-2.5,1.5) -- (4.5,1.5) -- (4.5,-1.5) -- (-2.5,-1.5) -- (-2.5,1.5);
        \end{tikzpicture}
        \caption{Unit-weight graph with vertex capacities next to the vertices.}
        \label{subfig: reduction 3 times w_e capacities}
    \end{subfigure}
    \hfill
    \begin{subfigure}[t]{.3\textwidth}
        \centering
        \begin{tikzpicture}[
            scale=.5,
            circ/.style={
                circle,
                draw=black,
                inner sep=2.5pt
            },
            matched/.style={very thick},
            fracMatched/.style={dashed,very thick},
            every node/.style={scale=0.8}
        ]
            \node[circ] (a) at (-1.73,1) {};
            \node[circ] (b) at (-1.73,-1) {};
            \node[circ] (c) at (0,0) {};
            \node[circ] (d) at (2,0) {};
            \node[circ] (e) at (3.73,1) {};
            \node[circ] (f) at (3.73,-1) {};
    
            \draw[fracMatched] (a) to (b);
            \draw[fracMatched] (a) to (c);
            \draw[fracMatched] (b) to (c);
            \draw[matched] (c) to (d);
            \draw (d) to (e);
            \draw (d) to (f);
            \draw[matched] (e) to (f);

            % to make all figures same size
            \path (-2.5,1.5) -- (4.5,1.5) -- (4.5,-1.5) -- (-2.5,-1.5) -- (-2.5,1.5);
        \end{tikzpicture}
        \caption{The fractional \(\capacity\)-matching \(x\) is given by \(x_e=1\) for bold edges, and \(x_e=\frac12\) for dashed edges. A vertex cover is given by \(y=\frac12\) and \(z=0\).}
        \label{subfig: reduction 3 times w_e frac matching}
    \end{subfigure}
    \hfill
    \begin{subfigure}[t]{.3\textwidth}
        \centering
        \begin{tikzpicture}[
            scale=.5,
            circ/.style={
                circle,
                draw=black,
                inner sep=2.5pt
            },
            matched/.style={very thick},
            fracMatched/.style={dashed,very thick},
            every node/.style={scale=0.8}
        ]
            \node[circ] (a) at (-1.73,1) {};
            \node[circ] (b) at (-1.73,-1) {};
            \node[circ] (c) at (0,0) {};
            \node[circ] (d) at (2,0) {};
            \node[circ] (e) at (3.73,1) {};
            \node[circ] (f) at (3.73,-1) {};
    
            % \draw (a) to (b);
            \draw[matched] (a) to (c);
            \draw[matched] (b) to (c);
            \draw(c) to (d);
            \draw[fracMatched] (d) to (e);
            \draw[fracMatched] (d) to (f);
            \draw[fracMatched] (e) to (f);

            % to make all figures same size
            \path (-2.5,1.5) -- (4.5,1.5) -- (4.5,-1.5) -- (-2.5,-1.5) -- (-2.5,1.5);
        \end{tikzpicture}
        \caption{Same graph with one edge removed. Again a fractional \(\capacity\)-matching \(x\) is given by the bold and dashed edges, and a vertex cover by \(y=\frac12\) and \(z=0\).}
        \label{subfig: reduction 3 times w_e edge removed}
    \end{subfigure}
    \begin{subfigure}[t]{.49\textwidth}
        \centering
        \begin{tikzpicture}[
            scale=1.1,
            circ/.style={
                circle,
                draw=black,
                inner sep=2.5pt
            },
            matched/.style={very thick},
            fracMatched/.style={dashed,very thick},
            every node/.style={scale=0.8}
        ]
            \node[circ] (a) at (-1.73,1) {};
            \node[circ] (a_b) at (-1.73,.33) {};
            \node[circ] (b_a) at (-1.73,-.33) {};
            \node[circ] (a_c) at (-1.15,.66) {};
            \node[circ] (c_a) at (-0.58,.33) {};
            \node[circ] (b) at (-1.73,-1) {};
            \node[circ] (b_c) at (-1.15,-.66) {};
            \node[circ] (c_b) at (-0.58,-.33) {};
            \node[circ] (c1) at (0,.15) {};
            \node[circ] (c2) at (0,-.15) {};
            \node[circ] (c_d) at (.66,0) {};
            \node[circ] (d_c) at (1.33,0) {};
            \node[circ] (d) at (2,0) {};
            \node[circ] (d_e) at (2.58,.33) {};
            \node[circ] (e_d) at (3.15,.66) {};
            \node[circ] (d_f) at (2.58,-.33) {};
            \node[circ] (f_d) at (3.15,-.66) {};
            \node[circ] (e) at (3.73,1) {};
            \node[circ] (e_f) at (3.73,.33) {};
            \node[circ] (f_e) at (3.73,-.33) {};
            \node[circ] (f) at (3.73,-1) {};
    
            \draw[fracMatched] (a) to (a_b);
            \draw[fracMatched] (a_b) to (b_a);
            \draw[fracMatched] (b_a) to (b);
            \draw[fracMatched] (a) to (a_c);
            \draw[fracMatched] (a_c) to (c_a);
            \draw[fracMatched] (c_a) to (c1);
            \draw (c_a) to (c2);
            \draw[fracMatched] (b) to (b_c);
            \draw[fracMatched] (b_c) to (c_b);
            \draw[fracMatched] (c_b) to (c1);
            \draw (c_b) to (c2);
            \draw (c1) to (c_d);
            \draw[matched] (c2) to (c_d);
            \draw (c_d) to (d_c);
            \draw[matched] (d_c) to (d);
            \draw (d) to (d_e);
            \draw[matched] (d_e) to (e_d);
            \draw (e_d) to (e);
            \draw (d) to (d_f);
            \draw[matched] (d_f) to (f_d);
            \draw (f_d) to (f);
            \draw[matched] (e) to (e_f);
            \draw (e_f) to (f_e);
            \draw[matched] (f_e) to (f);
        \end{tikzpicture}
        \caption{The unit-capacity auxiliary graph, with a fractional matching given in the figure, and a vertex cover given by \(y=\frac12\).}
        \label{subfig: reduction 3 times w_e auxiliary}
    \end{subfigure}
    \hfill
    \begin{subfigure}[t]{.49\textwidth}
        \centering
        \begin{tikzpicture}[
            scale=1.1,
            circ/.style={
                circle,
                draw=black,
                inner sep=2.5pt
            },
            matched/.style={very thick},
            fracMatched/.style={dashed,very thick},
            every node/.style={scale=0.8}
        ]
            \node[circ] (a) at (-1.73,1) {};
            \node[circ,fill] (a_b) at (-1.73,.33) {};
            \node[circ,fill] (b_a) at (-1.73,-.33) {};
            \node[circ,fill] (a_c) at (-1.15,.66) {};
            \node[circ] (c_a) at (-0.58,.33) {};
            \node[circ] (b) at (-1.73,-1) {};
            \node[circ,fill] (b_c) at (-1.15,-.66) {};
            \node[circ] (c_b) at (-0.58,-.33) {};
            \node[circ,fill] (c1) at (0,.15) {};
            \node[circ,fill] (c2) at (0,-.15) {};
            \node[circ] (c_d) at (.66,0) {};
            \node[circ,fill] (d_c) at (1.33,0) {};
            \node[circ] (d) at (2,0) {};
            \node[circ,fill] (d_e) at (2.58,.33) {};
            \node[circ] (e_d) at (3.15,.66) {};
            \node[circ,fill] (d_f) at (2.58,-.33) {};
            \node[circ] (f_d) at (3.15,-.66) {};
            \node[circ,fill] (e) at (3.73,1) {};
            \node[circ] (e_f) at (3.73,.33) {};
            \node[circ] (f_e) at (3.73,-.33) {};
            \node[circ,fill] (f) at (3.73,-1) {};
    
            \draw[matched] (a) to (a_b);
            \draw[matched] (b_a) to (b);
            \draw (a) to (a_c);
            \draw[matched] (a_c) to (c_a);
            \draw (c_a) to (c1);
            \draw (c_a) to (c2);
            \draw (b) to (b_c);
            \draw[matched] (b_c) to (c_b);
            \draw (c_b) to (c1);
            \draw (c_b) to (c2);
            \draw (c1) to (c_d);
            \draw[matched] (c2) to (c_d);
            \draw (c_d) to (d_c);
            \draw[matched] (d_c) to (d);
            \draw (d) to (d_e);
            \draw[matched] (d_e) to (e_d);
            \draw (e_d) to (e);
            \draw (d) to (d_f);
            \draw[matched] (d_f) to (f_d);
            \draw (f_d) to (f);
            \draw[matched] (e) to (e_f);
            \draw (e_f) to (f_e);
            \draw[matched] (f_e) to (f);
        \end{tikzpicture}
        \caption{The unit-capacity auxiliary graph with one edge removed. Again a fractional matching is given in the figure, and a vertex cover is given by \(y_v=0\) for black vertices and \(y_v=1\) for white vertices.}
        \label{subfig: reduction 3 times w_e auxiliary edge removed}
    \end{subfigure}
    \caption{Example of third reduction.}
    \label{fig: reduction 3 times w_e}
\end{figure}
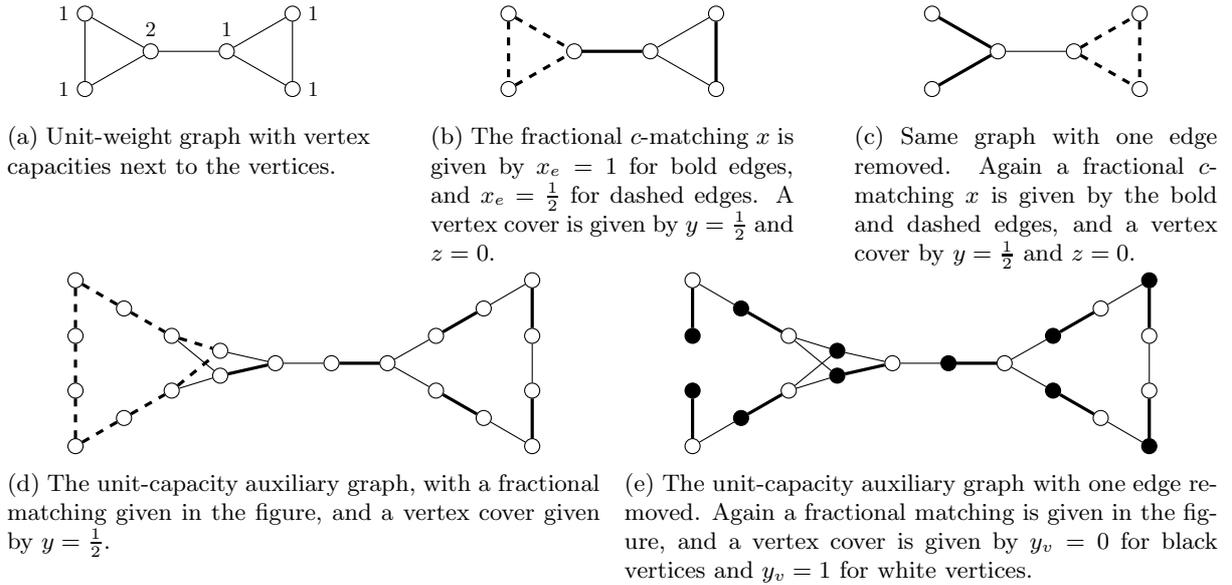

The unit-capacity auxiliary graph is given in \Cref{subfig: reduction 3 times w_e auxiliary}, and from the fractional matching we can see that the graph is not stable. The graph can be stabilized by removing just one edge, as can be seen in \Cref{subfig: reduction 3 times w_e auxiliary edge removed}. Say the removed edge is \(e_v e_u\), then the most logical way to map this edge-stabilizer to the original graph is by mapping it to the edge \(uv\). However, as can be seen in \Cref{subfig: reduction 3 times w_e edge removed}, removing this edge does not result in a stable graph.

The above example also proves that edge-stabilizers are not preserved with the second reduction, as if one applies the construction discussed in the second reduction (instead of this latter one), the same conclusion can be made: edge-stabilizer in the unit-capacity auxiliary graph might not map to an edge-stabilizer in the original graph.

All in all, it is possible that with some extra work some reductions turn out to work, but not in a straightforward manner, as (1) the dual values of the vertices can mess up the algorithm by decreasing the value of the matching too much, (2) edge-stabilizers are not preserved, and (3) we cannot use the reduction on unit-weight instances. 
So such directions (if they work) seem to only make the problem (unnecessarily) more complicated. Differently, our way is direct, cleaner and simpler.
\section{Basic Fractional c-Matchings}
\label{appx: basic f-c-matching}

\basicfraccmatching*
\begin{proof}
    (\(\Rightarrow\))
    Let \(x\) be a basic fractional \(\capacity\)-matching, and let \(H\) be a connected component of the graph induced by the edges with fractional value in \(x\).
    First, note that \(H\) contains no even cycle, and no inclusion-wise maximal path with distinct endpoints: Otherwise, let \(D\) be an even cycle or an inclusion-wise maximal path with distinct endpoints in \(H\). Let \(g \in \cC_1 \cup \cC_3\) be the circuit associated to it. Then, \(x + \varepsilon g\) and \(x-\varepsilon g\) are both fractional \(\capacity\)-matchings, for a small value of \(\varepsilon\). However, as \(x\) is a convex combination of \(x+ \varepsilon g\) and \(x-\varepsilon g\), this contradicts that \(x\) is an extreme point.
    Let \(T\) be any spanning tree of \(H\). First, assume there exist two distinct edges \(f_1,f_2 \in E(H) \setminus E(T)\). Then, adding \(f_1\) (resp.\ \(f_2\)) to \(T\) creates an odd cycle \(D_1\) (resp.\ \(D_2\)). These cycles cannot intersect in an edge, otherwise their support would contain an even cycle. Hence, they must be edge disjoint. The cycles can also not intersect in more than one vertex, otherwise their support again contains an even cycle. So, they either intersect at one vertex, or are connected via a path in \(T\). In either case, one can associated to these edges a circuit \(g \in \cC_5\). Then, \(x + \varepsilon g\) and \(x - \varepsilon g\) are both fractional \(\capacity\)-matchings, for a small value of \(\varepsilon\), reaching a contradiction again. These arguments show that there is a unique edge \(f \in E(H) \setminus E(T)\). Since \(H\) cannot contain inclusion-wise maximal paths, it contains at most one vertex of degree 1. If \(H\) contains exactly one such vertex \(u\), then \(u\) and the odd cycle (created by adding \(f\) to \(T\)) are connected via a path. One can associate a circuit \(g \in \cC_4\) to the edges in this cycle and path. Again, considering \(x + \varepsilon g\) and \(x - \varepsilon g\) results in a contradiction. So, \(H\) does not have any vertex of degree 1, which means that the endpoints of \(f\) must be the leaves of \(T\). Hence, \(T\) is a path and \(H\) is a cycle. Necessarily, \(H\) must be odd. Finally, no vertex in \(H\) can be unsaturated: as otherwise we can associate a circuit \(g \in \cC_2\) to \(H\) where the unsaturated vertex \(u\) is the only vertex with \(g(\delta(u)) \neq 0\). Once again, we reach a contradiction by considering \(x + \varepsilon g\) and \(x-\varepsilon g\). Since \(H\) is an odd cycle with saturated vertices, the edges in \(H\) must have value \(\frac12\) in \(x\). In conclusion, if \(x_e \notin \CBra{0,1}\), it must equal \(\frac{1}{2}\) and \(e\) must be part of an odd cycle. Furthermore, these odd cycles are vertex-disjoint and all vertices part of an odd cycle are saturated.

    (\(\Leftarrow\)) Consider a vector \(w\) with \(w_e = 1\) for all edges \(e\) in the support of \(x\), and \(w_e = -1\) for all other edges. Then, \(x\) is the unique optimal solution when maximizing  the function \(w\) over \(\Pfcm\). Hence \(x\) is an extreme point.     
\end{proof}
\section{Maximum-Weight Basic Fractional c-Matching with Minimum Number of Odd Cycles}
\label{appx: min nr odd cycles}

\cite{Koh2020Stabilizing} proposes an algorithm to obtain a maximum-weight basic fractional matching with minimum number of odd cycles (\(\Abs{\oddcyclesx{x}} = \gamma(\graph)\)). We generalize their result to \(\capacity\)-matchings.

Given a graph \((\graphVE,\weight,\capacity)\), reduce it to a unit-capacity graph \((\hat{\graph},\hat{\weight},1)\) using the third reduction given in \Cref{appx: reduction}. Compute a basic maximum-weight fractional matching \(\hat{x}\) in \(\hat{\graph}\) with minimum number of odd cycles (\(\Abs{\oddcyclesx{\hat{x}}} = \gamma(\hat{\graph})\)), using the algorithm from \cite{Koh2020Stabilizing}. Consider the subgraph of \(\hat{\graph}\) corresponding with an edge \(e \in \edgeSet\). Since \(\hat{x}\) is basic and maximum-weight, there is only a few ways this subgraph can look, see \Cref{subfig:possibilitiesSubgraph1,subfig:possibilitiesSubgraph2,subfig:possibilitiesSubgraph3,subfig:possibilitiesSubgraph4,subfig:possibilitiesSubgraph5}.
\begin{figure}[ht]
    \centering
    \begin{subfigure}[t]{.3\textwidth}
        \centering
        \begin{tikzpicture}[
            scale=.5,
            circ/.style={
                circle,
                draw=black,
                inner sep=2.5pt
            },
            every node/.style={scale=0.8}
        ]
            \node[circ,label={above:\(u_i\)}] (a) at (0,0) {};
            \node[circ,label=above:\(e_v\)] (b) at (2,0) {};
            \node[circ,label=above:\(e_u\)] (c) at (4,0) {};
            \node[circ,label=above:\(v_j\)] (d) at (6,0) {};
    
            \draw (a) to (b);
            \draw[ultra thick] (b) to (c);
            \draw (c) to (d);
        \end{tikzpicture}
        \caption{}
        \label{subfig:possibilitiesSubgraph1}
    \end{subfigure}
    \hfill
    \begin{subfigure}[t]{.3\textwidth}
        \centering
        \begin{tikzpicture}[
            scale=.5,
            circ/.style={
                circle,
                draw=black,
                inner sep=2.5pt
            },
            every node/.style={scale=0.8}
        ]
            \node[circ,label={above:\(u_i\)}] (a) at (0,0) {};
            \node[circ,label=above:\(e_v\)] (b) at (2,0) {};
            \node[circ,label=above:\(e_u\)] (c) at (4,0) {};
            \node[circ,label=above:\(v_j\)] (d) at (6,0) {};
    
            \draw[ultra thick] (a) to (b);
            \draw (b) to (c);
            \draw[ultra thick] (c) to (d);
        \end{tikzpicture}
        \caption{}
        \label{subfig:possibilitiesSubgraph2}
    \end{subfigure}
    \hfill
    \begin{subfigure}[t]{.3\textwidth}
        \centering
        \begin{tikzpicture}[
            scale=.5,
            circ/.style={
                circle,
                draw=black,
                inner sep=2.5pt
            },
            every node/.style={scale=0.8}
        ]
            \node[circ,label={above:\(u_i\)}] (a) at (0,0) {};
            \node[circ,label=above:\(e_v\)] (b) at (2,0) {};
            \node[circ,label=above:\(e_u\)] (c) at (4,0) {};
            \node[circ,label=above:\(v_j\)] (d) at (6,0) {};
    
            \draw[dashed] (a) to (b);
            \draw[dashed] (b) to (c);
            \draw[dashed] (c) to (d);
        \end{tikzpicture}
        \caption{}
        \label{subfig:possibilitiesSubgraph3}
    \end{subfigure}
    \begin{subfigure}[t]{.3\textwidth}
        \centering
        \begin{tikzpicture}[
            scale=.5,
            circ/.style={
                circle,
                draw=black,
                inner sep=2.5pt
            },
            every node/.style={scale=0.8}
        ]
            \node[circ,label={above:\(u_i\)}] (a) at (0,0) {};
            \node[circ,label=above:\(e_v\)] (b) at (2,0) {};
            \node[circ,label=above:\(e_u\)] (c) at (4,0) {};
            \node[circ,label=above:\(v_j\)] (d) at (6,0) {};
            % the next two nodes are to center the figure vertically
            \node[label={[text=white]above:\(f_s\)}] (s) at (0,.75) {};
            \node[label={[text=white]below:\(g_t\)}] (t) at (0,-.75) {};
    
            \draw[ultra thick] (a) to (b);
            \draw (b) to (c);
            \draw (c) to (d);
        \end{tikzpicture}
        \caption{}
        \label{subfig:possibilitiesSubgraph4}
    \end{subfigure}
    \hfill
    \begin{subfigure}[t]{.3\textwidth}
        \centering
        \begin{tikzpicture}[
            scale=.5,
            circ/.style={
                circle,
                draw=black,
                inner sep=2.5pt
            },
            every node/.style={scale=0.8}
        ]
            \node[circ,label={above:\(f_s\)}] (s) at (-2,.75) {};
            \node[circ,label={below:\(g_t\)}] (t) at (-2,-.75) {};
            \node[circ,label={above:\(u_i\)}] (a1) at (0,.75) {};
            \node[circ,label={below:\(u_k\)}] (a2) at (0,-.75) {};
            \node[circ,label=above:\(e_v\)] (b) at (2,0) {};
            \node[circ,label=above:\(e_u\)] (c) at (4,0) {};
            \node[circ,label=above:\(v_j\)] (d) at (6,0) {};

            \draw[dashed] (s) to (a1);
            \draw[dashed] (t) to (a2);
            \draw[dashed] (a1) to (b);
            \draw[dashed] (a2) to (b);
            \draw (b) to (c);
        \end{tikzpicture}
        \caption{}
        \label{subfig:possibilitiesSubgraph5}
    \end{subfigure}
    \hfill
    \begin{subfigure}[t]{.3\textwidth}
        \centering
        \begin{tikzpicture}[
            scale=.5,
            circ/.style={
                circle,
                draw=black,
                inner sep=2.5pt
            },
            every node/.style={scale=0.8}
        ]
            \node[circ,label={above:\(f_s\)}] (s) at (-2,.75) {};
            \node[circ,label={below:\(g_t\)}] (t) at (-2,-.75) {};
            \node[circ,label={above:\(u_i\)}] (a1) at (0,.75) {};
            \node[circ,label={below:\(u_k\)}] (a2) at (0,-.75) {};
            \node[circ,label=above:\(e_v\)] (b) at (2,0) {};
            \node[circ,label=above:\(e_u\)] (c) at (4,0) {};
            \node[circ,label=above:\(v_j\)] (d) at (6,0) {};

            \draw[dashed] (s) to (a2);
            \draw[dashed] (t) to (a2);
            \draw[ultra thick] (a1) to (b);
            \draw (a2) to (b);
            \draw (b) to (c);
        \end{tikzpicture}
        \caption{}
        \label{subfig:possibilitiesSubgraph6}
    \end{subfigure}
    \caption{ \Cref{subfig:possibilitiesSubgraph1,subfig:possibilitiesSubgraph2,subfig:possibilitiesSubgraph3,subfig:possibilitiesSubgraph4,subfig:possibilitiesSubgraph5} indicate the possible scenario's for the subgraph of \(\hat{G}\) corresponding with an edge \(e \in E\).
    \Cref{subfig:possibilitiesSubgraph6} shows how the matching in \Cref{subfig:possibilitiesSubgraph5} can be changed without affecting the weight of the matching.
    Normal, dashed and bold edges indicate an \(\hat{x}\) value of \(0\), \(\frac12\) and \(1\), respectively. No edge between \(e_u\) and \(v_j\) indicates the value of the edges between \(e_u\) and the copies of \(v\) are irrelevant. For clarity, we have only drawn the relevant vertices. (\(e=uv\), \(f=us\), \(g=ut\))}
    \label{fig:possibilitiesSubgraph}
\end{figure}
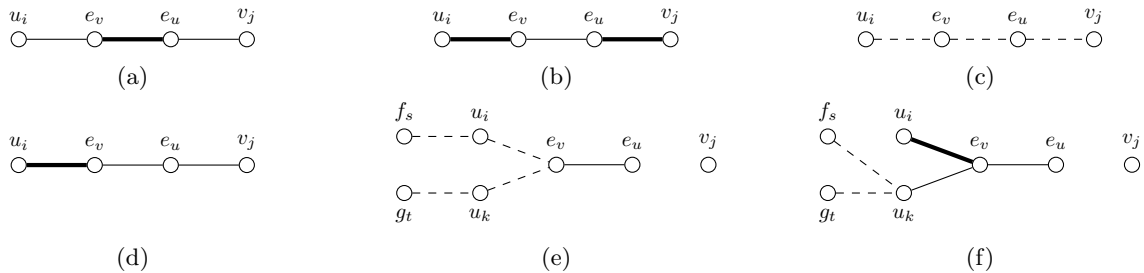

In fact, we can assume without loss of generality we only have the scenario's shown in \Cref{subfig:possibilitiesSubgraph1,subfig:possibilitiesSubgraph2,subfig:possibilitiesSubgraph3}. Indeed, \Cref{subfig:possibilitiesSubgraph4,subfig:possibilitiesSubgraph5} can be transformed into \Cref{subfig:possibilitiesSubgraph1,subfig:possibilitiesSubgraph6}, respectively, without affecting \(\hat{\weight}^\top \hat{x}\) and \(\Abs{\oddcyclesx{\hat{x}}}\). The scenario in \Cref{subfig:possibilitiesSubgraph6}, depending on what is happening between \(e_u\) and the copies of \(v\), corresponds to the scenario in \Cref{subfig:possibilitiesSubgraph2}, \ref{subfig:possibilitiesSubgraph4}, or \ref{subfig:possibilitiesSubgraph5}.

Then \(\hat{x}\) can be translated to a fractional \(\capacity\)-matching \(x\) in \(G\), as follows: set \(x_e=0\) in case of \Cref{subfig:possibilitiesSubgraph1}, \(x_e=1\) in case of \Cref{subfig:possibilitiesSubgraph2}, and \(x_e=\frac12\) in case of \Cref{subfig:possibilitiesSubgraph3}. We have \(\weight^\top x = \hat{\weight}^\top \hat{x} - \weight(\edgeSet)\).
A fractional odd cycle \(\hat{C}\) in \(\hat{x}\) maps to a fractional cycle \(C\) in \(x\) with length \(\Abs{\hat{C}}/3\), which is odd. Suppose there is an unsaturated vertex \(u\) on \(C\), then there is a copy \(u_i\) of \(u\) in \(\hat{G}\) that is exposed. Let \(e=uv\) be an edge on \(C\). We can then change \(\hat{x}\) as follows: set \(\hat{x}_{u_i e_v} = 1\), and alternate round \(\hat{C}\) exposing \(e_v\). One can check using complementary slackness that this gives a new maximum-weight matching in \(\hat{G}\), that is basic and has less fractional odd cycles than \(\hat{x}\), a contradiction. So all vertices on \(C\) are saturated, which means that \(x\) is basic.
It follows that \(\Abs{\oddcyclesx{x}} = \Abs{\oddcyclesx{\hat{x}}}\).

Note that given an \(x\), we can similarly translate it into \(\hat{x}\), such that \(\hat{\weight}^\top \hat{x} = \weight^\top x + \weight(E)\) and \(\Abs{\oddcyclesx{\hat{x}}} = \Abs{\oddcyclesx{x}}\). Hence, \(\nufcG = \nu_f(\hat{\graph}) - \weight(\edgeSet)\) and \(\gamma(\graph) = \gamma(\hat{\graph})\). So, \(x\) is our basic maximum-weight fractional \(\capacity\)-matching in \(\graph\) with \(\Abs{\oddcyclesx{x}} = \gamma(\graph)\).

\bibliographystyle{plain}
\bibliography{bib}

\end{document}